\newcommand{\bF}{\mathbb{F}}
\newcommand{\bG}{\mathbb{G}}
\newcommand{\bH}{\mathbb{H}}
\newcommand{\bP}{\mathbb{P}}
\newcommand{\bQ}{\mathbb{Q}}
\newcommand{\R}{\mathbb{R}}
\def\A{{\mathcal{A}}}
\def\B{{\mathcal{B}}}
\def\F{{\mathcal{F}}}
\def\G{{\mathcal{G}}}
\def\P{{\mathcal{P}}}
\newcommand{\beq}{\begin{equation}}
\newcommand{\eeq}{\end{equation}}
\newcommand{\bi}{\begin{itemize}}
\newcommand{\bd}{\begin{description}}
\newcommand{\ei}{\end{itemize}}
\newcommand{\ed}{\end{description}}
\newcommand{\bc}{\begin{center}}
\newcommand{\ec}{\end{center}}
\newcommand{\sg}{\sigma}
\newcommand{\Ind}[1]{{\bf{1}}_{\{{#1}\}}}
\newtheorem{thm}{Theorem}[section]
\newtheorem{cor}[thm]{Corollary}
\newtheorem{lem}[thm]{Lemma}
\newtheorem{prop}[thm]{Proposition}
\newtheorem{defn}[thm]{Definition}
\newtheorem{rem}[thm]{Remark}
\newtheorem{ex}[thm]{Example}
\newtheorem{ass}[thm]{Assumption}
\begin{document}

\title{Arbitrage
  and utility maximization in market models with an insider\footnote{
The research of Chau Ngoc Huy was supported by Natixis Foundation for
Quantitative Research {and the "Lend\"ulet" grant LP2015-6 of the Hungarian Academy of Sciences}. The research of Peter Tankov was supported by
the chair ``Financial Risks'' sponsored by Soci\'et\'e
G\'en\'erale.}}
\author[1]{Huy N. Chau}
\author[2]{Wolfgang J. Runggaldier}
\author[3]{Peter Tankov}
\affil[1]{Alfr\'ed R\'enyi Institute of Mathematics, Hungarian Academy of Sciences, Budapest}\affil[2]{University of Padova} \affil[3]{LPMA, Universit\'e
Paris-Diderot, corresponding author. E-mail: tankov@math.univ-paris-diderot.fr}

\date{}
\maketitle

\begin{abstract}
We study arbitrage opportunities, market viability and utility
maximization in market models with an insider. Assuming that an
economic agent possesses an additional information in the form of an
$\mathcal F_T$-measurable random variable $G$, we give criteria for
the No Unbounded Profits with Bounded Risk property to hold,
characterize optimal arbitrage strategies, and prove duality results
for the utility maximization problem faced by the insider. Examples
of markets satisfying NUPBR yet admitting arbitrage opportunities
are provided for both atomic and continuous random variables $G$.
\end{abstract}

Key words: Initial enlargement of filtration, optimal arbitrage, No
Unbounded Profits with Bounded Risk, incomplete markets, hedging,
utility maximization.

JEL Classification: G14

\section{Introduction}
The aim of this paper is to study arbitrage opportunities and
utility maximization in market modes with an insider. 
Insider information is typically modeled by using the mathematical
theory of enlargement of filtration, where one distinguishes
initial, successive and progressive enlargement. In this paper we
restrict ourselves to the setting of initial enlargement by a random
variable $G$: at time zero the insider knows the realization of $G$,
which the ordinary agents only observe at the end of the trading
period that we shall assume to be finite. {Note that some concepts of
arbitrage under initial enlargement and progressive enlargement on an infinite horizon have recently
been studied in \cite{acciaio_arbitrage_2014} and
\cite{aksamit2015optional}}.  

Insider trading under initial enlargement of filtration has been
the object of interest of many papers, including but by no means
limited to 
\cite{danilova2010optimal,amendinger_monetary_2003,grorud_insider_1998,ankirchner_shannon_2006,ankirchner_initial_2011,amendinger_additional_1998,pikovsky_anticipative_1996,imkeller_free_2001}. The
majority of these papers work in a complete market setting and are concerned with the question of additional
utility of the insider; they find that when the variable $G$ is
$\mathcal F_T$-measurable and not purely atomic, this additional
utility is often infinite.

In contrast to these papers, our main interest lies in exploring
various concepts of arbitrage in the context of initial filtration
enlargement. In particular, we are interested in the following
questions.
\begin{itemize}
\item When does the market for the insider satisfy the property NUPBR
  (no unbounded profit with bounded risk)? The NUPBR condition, see \cite{karatzas_numeraire_2007}, or,
equivalently,  No Arbitrage
of the first kind (NA1) (see also NAA1 in \cite{kabanov1994large}, and BK in \cite{kabanov1997ftap}), boils down to assuming that no positive claim, which is not identically
zero, may be superhedged at zero price. It is the minimal
condition enabling one to solve portfolio optimization problems in a
meaningful way: in \cite{karatzas_numeraire_2007} it is shown that, without NUPBR, one has either no
solution or infinitely many.  NUPBR is robust with respect to
changes of numeraire, absolutely continuous measure change and, in
some cases,  change of reference filtration (see e.g.~\cite{fontana_weak_2013}).
Finally, it is also known \cite{karatzas_numeraire_2007} that NFLVR
(the classical assumption of no free lunch with vanishing risk) is
equivalent to NUPBR plus the classical no arbitrage assumption (NA), which
means that markets with NUPBR can still admit (unscalable) arbitrage
opportunities.
\item When does the market for the insider admit optimal arbitrage? We
  say that a financial market admits optimal arbitrage if there exists
  a strategy which allows to superhedge a unit amount with an initial
  cost which is strictly less than one in some states of nature (note
  that the initial cost of a strategy for the insider may be a random
  variable since the insider possesses a nontrivial information
  already at time $t=0$). We say that the optimal arbitrage is strong
  whenever the replication cost is strictly less than one with
  probability one. In other words, an optimal arbitrage
  strategy allows to replicate a risk-free zero coupon bond at a price
  which is strictly less than the initial price of this bond (see
  \cite{baldeaux2013liability} for possible uses of such strategies in
  the context of asset liability management for e.g., pension funds). 
\end{itemize} 

To address the above questions, we distinguish the
cases when the additional information is represented by a discrete
(atomic) random variable $G$ and when it is given by a random
variable $G$ which is not purely atomic. The discrete case is the
easier one, and allows us to provide full answers to the above questions. Namely, the
following results are shown to be true under natural assumptions in
the case when $G$ is discrete.
\begin{itemize}
\item The market for the insider satisfies the NUPBR property.
\item If the original market (for non-informed agents) is complete,
  then the market for the insider admits strong optimal arbitrage. If
  the original market is incomplete, the optimal arbitrage may or may
  not exist, and we give examples of both situations. 
\end{itemize}

The case when $G$ is not purely atomic is more difficult, and only
partial answers to the above questions are provided in this
paper for this case.  Our first contribution here is to establish a new necessary condition for the insider
market to satisfy the NUPBR property. This condition is, in
particular, violated by all complete markets, which means that
complete markets always admit an arbitrage of the first kind. In the
incomplete markets the situation is less clear, and we provide
examples of both an incomplete market violating NUPBR and of an
incomplete market for which NUPBR holds and logarithmic utility of the
insider is finite, although the market admits arbitrage opportunities.

In addition to the above results, we also address the problem of utility optimization for
the insider. In this context, our contribution is two-fold. First, we show that the utility maximization problem for the insider
  may be expressed in terms of the quantities (strategies, martingale
  measures) defined in the original market for the uninformed
  agents. This in turn allows us to develop an extension of the
  classical duality results for  utility maximization to market models with an insider. 
These results are first obtained in the case of a discrete initial
information $G$, and then extended to a non purely atomic $G$ with a
limiting procedure.

The rest of the paper is structured as follows.
In section \ref{S1} we introduce the market model and recall the basic
notions of no-arbitrage and filtration enlargement.

In section \ref{S2} we deal with
initial enlargement by a discrete random variable $G$. We show that, under a suitable assumption, a market
initially enlarged with a discrete r.v. $G$ always satisfies NUPBR.
In subsection \ref{S2.1} we then study optimal arbitrage that
can be implemented via superhedging. We show that the superhedging
price of a given claim for the insider may be represented in terms of the superhedging prices in the filtration of
ordinary agents of the claim restricted to the events
corresponding to the various possible values of $G$. In subsection
\ref{S2.2} we consider portfolio optimization, in particular
the maximization of expected utility and obtain a duality
relationship.  An example computation of optimal
arbitrage and maximal expected log-utility for the insider in an
incomplete market is presented in subsection \ref{ex:discrete_poisson}.

In section
\ref{section:G_continuous} we study the initial
enlargement with a random variable $G$, which is not purely atomic. We
first show that if the set of possible
martingale densities is uniformly integrable, then NUPBR cannot
hold. We then present an approximation procedure allowing to
obtain results for a general random variable $G$ by a limiting
procedure from the results obtained for a discrete variable $G$ in
section \ref{S2}. This procedure allows us to extend the results on
utility optimization to the case of general
$G$ in subsection \ref{section:NUPBR_continuous_G}.
Finally, the Appendix contains
some technical proofs.

\section{Market model and preliminaries related to filtration
enlargement}\label{S1} In this section we introduce our basic market
model and recall known concepts as we shall use them in the sequel
(subsection \ref{S1.1}). We then introduce some preliminaries in
relation to filtration enlargement (subsection \ref{S1.2}).

\subsection{Market model and basic notions}\label{S1.1}

On a stochastic basis $(\Omega,\F, \bF,\bP)$, where the filtration
$\bF=(\F_t)_{t\ge 0},\>t\le T$ satisfies the usual conditions,
consider a financial market with an $\R^d-$valued nonnegative
semimartingale process $S=(S^1,\dots,S^d),\>t\le T$, where the
components represent the prices of $d$ risky assets. The horizon is
supposed to be finite and given by $T>0$. We assume that the price
processes are already discounted, namely for the riskless asset
price $S^0$ we assume $S^0\equiv 1$, and that this market is
frictionless. Let $L(S)$ be the set of all $\R^d-$valued
$S-$integrable predictable processes and, for $H\in L(S)$ denote by
$H\cdot S$ the vector stochastic integral of $H$ with respect to
$S$. 
\begin{defn}\label{D1} An {\sl investment strategy $H$} is an element
$H\in L(S)$, where the components indicate the number of units
invested in the individual assets. Letting $x\in \R_{+}$, an $H\in
L(S)$ is said to be an $x-$admissible strategy, if $H_0=0$ and
$(H\cdot S)_t\ge -x$ for all $t\in [0,T], \bP-$a.s. $H\in L(S)$ is
said to be admissible if it is $x-$admissible for some $x\in
\R_{+}$. We denote by $\A_x$ the set of all $x-$admissible
strategies and by $\A$ that of all admissible strategies. For
$(x,H)\in\R_{+}\times \A$ we define the portfolio value process
$V_t^{x,H}:=x+(H\cdot S)_t$ implying that $x$ is the amount of the
initial wealth and that portfolios are generated only by
self-financing admissible strategies. Finally, we denote by $\mathcal
K_x$ the set of all claims that
one can realize by $x$-admissible strategies starting with zero
initial cost:
$$\mathcal{K}_x = \left\{ V^{0,H}_T \mid H \in \mathcal A_x\right\}$$
and $\mathcal K$ denotes the set of claims that can be replicated with
zero initial cost and any admissible strategy: $\mathcal K
= \cup_{x\geq 0} \mathcal K_x$.
\end{defn}
Let
$$
\mathcal C = (\mathcal K - L^0_+) \cap L^\infty.
$$
In the sequel, we shall use the following no-arbitrage conditions.
\begin{defn}[NFLVR]
We say that there is No Free Lunch with Vanishing Risk if
$$
\overline{\mathcal C} \cap L^\infty_+ = \{0\},
$$
where the closure is taken with respect to the topology of uniform
convergence.
\end{defn}
\begin{defn}[NUPBR]\label{D2}
There is No Unbounded Profit With Bounded Risk if the set $\mathcal{K}_1$
is bounded in $L^0$, that is, if
$$\lim_{c \uparrow \infty} \sup_{W \in \mathcal{K}_1} P(W>c) = 0$$
\end{defn}

The NUPBR condition can be shown to be equivalent to the following,
more economically meaningful condition, which boils down to assuming
that no positive claim, which is not identically
zero, may be superhedged at zero price with a positive portfolio (see \cite{kabanov2015no} for a
recent discussion of the different equivalent formulations of NUPBR).
\begin{defn}[NA1]\label{D3} An $\F_T-$measurable random variable $\xi$ is called
an Arbitrage of the First Kind if $\bP(\xi\ge 0)=1$, $\bP(\xi>
0)>0$, and for all $x>0$ there exists an admissible strategy $H\in
\A_x$ such that $V^{x,H}_T\ge \xi$. We shall say that the market
admits No Arbitrage of the First Kind (NA1), if no such random
variable exists.\end{defn}

\begin{defn}[Classical arbitrage]\label{D4}
We shall say that $H\in \mathcal A$ is an arbitrage strategy if
$P(V^{0,H}_T\geq 0) =1$ and $\bP(V^{0,H}_T> 0) >0$. It is a {\sl
strong arbitrage} if $\bP(V^{0,H}_T>0)=1.$ An arbitrage strategy is
said to be \emph{scalable}  if $H\in \A_0$ and \emph{unscalable} if
$H\in \A_x$ with $x>0$ for some $x$, but $H\notin \mathcal A_0$. We shall say that there is {\sl absence of classical
arbitrage}, denoted by NA, if there are no scalable or unscalable
arbitrage strategies, that is,
$$
\mathcal C \cap L^\infty_+ = \{0\}.
$$\end{defn}

In our context (nonnegative processes), NFLVR is equivalent to the existence of at least one equivalent
local martingale measure \cite{delbaen_general_1994}. The set of all
such measures will be denoted by $ELMM(\mathbb F,\mathbb P)$ and the
set of corresponding densities will be denoted by $ELMMD(\mathbb F,\mathbb P)$. NUPBR is, in
turn, equivalent to the existence of a local martingale deflator
\cite{kardaras_market_2012,takaoka_condition_2014,song_alternative_2013}. In
addition, NFLVR is equivalent to NUPBR plus NA
\cite{karatzas_numeraire_2007}.

\subsection{Preliminaries in relation to filtration
enlargement}\label{S1.2}

We start again from a filtered probability space $(\Omega,\F,
\bF,\bP)$, on which we consider a financial market with an insider.
\begin{ass}\label{basic-assum}
The $(\bF,\bP)$-market of regular agents satisfies NFLVR implying
that the set $ELMMD(\bF,\bP)$ is not empty. The insider possesses
from the beginning an additional information about the outcome of
some $\F_T-$measurable r.v. $G$ with values in $(\R,\B)$.\end{ass}
Starting from $\bF=(\F_t)$ one can then consider the (initially)
{\sl enlarged filtration} $\bG=(\G_t)$ with
$$\G_t=\displaystyle\cap_{\varepsilon>0}\,\left(\F_{t+\varepsilon}\vee \sg(G)\right)$$

In the context of filtration enlargements it is important to have a
criterion which ensures that an $\bF-$local martingale remains a
$\bG-$semimartingale. In view of introducing the corresponding
condition, let $\nu_t:=\bP\{G\in dx\mid\F_t\}$ be the regular
conditional distribution of $G$, given $\F_t$, and $\nu:=\bP\{G\in
dx\}$ be the law of $G$. We shall require Jacod's condition (see \cite{jacod_grossissement_1985})
in the following form

\begin{ass}\label{assum:jacod} {\sl (Absolutely
continuous version of Jacod's condition)}. We assume that
$$\nu_t\>\ll\>\nu,\quad \bP-a.s.\>\>\>\mbox{for}\>\>\>t<T.$$\end{ass}
Notice that the
absolute continuity is imposed only before the terminal time $T$. In
our setting, where $G\in \mathcal F_T$, the absolute continuity
cannot hold at the terminal date.
Some papers on insider trading require that $\nu_t\>\sim\>\nu$ (see,
e.g., \cite{amendinger_additional_1998}) but this would
imply that the density of $\nu_t$ with respect to $\nu$ is strictly
positive and so allow one to construct an equivalent martingale
measure from the density process \emph{before the terminal time $T$} \cite{amendinger_additional_1998}. This would imply NFLVR
for the $(\bG,\bP)-$ market (before time $T$) and thus exclude arbitrage possibilities
there, which is not our purpose.

We need one more assumption, which refers to the density process of
$\nu_t$ with respect to $\nu$. To this effect we first recall from
Lemme 1.8 and Corollaire 1.11 of \cite{jacod_grossissement_1985} that we can choose a nice
version of the density, namely we have the following lemma where
$\mathcal{O}(\mathbb{F})$ denotes the $\mathbb{F}$-optional sigma
field on $\Omega \times \mathbb{R}_+$.
\begin{lem}\label{lemma:density_hypo}
Under Assumption \ref{assum:jacod}, there exists a nonnegative
$\mathcal{B} \otimes \mathcal{O}(\mathbb{F}) $-measurable function
$\mathbb{R} \times \Omega \times \mathbb{R}_+ \ni (x, \omega, t)
\mapsto p^x_t(\omega) \in [0, \infty)$, c\`{a}dl\`{a}g in $t$ such
that
\begin{enumerate}
\item for every $t \in [0,T)$, we have $\nu_t(dx) = p^x_t(\omega)\nu(dx)$.
\item for each $x \in \mathbb{R}$, the process $(p^x_t(\omega))_{t \in [0, T)}$ is a $(\mathbb{F}, \mathbb{P})$-martingale.
\item The processes $p^x, p^x_-$ are strictly positive on $[0,\tau^x)$ and $p^x = 0$ on $[\tau^x, T)$, where
$$\tau^x := \inf \{t \ge 0: p^x_{t-} = 0 \text{ or } p^x_t = 0 \} \wedge T.$$
Furthermore, if we define $\tau^G(\omega) :=
\tau^{G(\omega)}(\omega) $ then $\mathbb{P}[\tau^G = T ] = 1.$
\end{enumerate}
\end{lem}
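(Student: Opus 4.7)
The result is essentially Lemme 1.8 and Corollaire 1.11 of \cite{jacod_grossissement_1985}; I outline its structure in three stages: (i) build a jointly measurable version of the Radon--Nikodym density of $\nu_t$ with respect to $\nu$, (ii) show that the associated family of processes admits a c\`adl\`ag martingale version, and (iii) analyse the zero set of the densities and deduce $\mathbb{P}[\tau^G=T]=1$.

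For (i), Assumption \ref{assum:jacod} gives, for each $t<T$ and $\mathbb{P}$-almost every $\omega$, a Radon--Nikodym derivative $x\mapsto d\nu_t(\omega,\cdot)/d\nu(\cdot)$. To obtain a version that is jointly measurable in $(x,\omega,t)$, I would approximate $p^x_t$ by ratios of the form $\nu_t(\omega,B^x_n)/\nu(B^x_n)$ along a countable basis of neighbourhoods $B^x_n\downarrow\{x\}$: each such ratio is manifestly $\mathcal{B}\otimes\mathcal{O}(\mathbb{F})$-measurable, and a standard differentiation-of-measures argument combined with a measurable-selection step yields a well-defined limit on a full-measure set. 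This is the most technically delicate part, as it requires handling null sets consistently across an uncountable family of parameters.

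For (ii), the martingale property follows from the tower property and Fubini. For bounded Borel $f$ and $s<t<T$, taking $\mathbb{E}[\,\cdot\mid\mathcal{F}_s]$ in $\mathbb{E}[f(G)\mid\mathcal{F}_t]=\int f(x)p^x_t\,\nu(dx)$ and comparing with $\mathbb{E}[f(G)\mid\mathcal{F}_s]=\int f(x)p^x_s\,\nu(dx)$ gives
$$\int f(x)\,\mathbb{E}[p^x_t\mid\mathcal{F}_s]\,\nu(dx)\;=\;\int f(x)\,p^x_s\,\nu(dx);$$
arbitrariness of $f$ then yields $\mathbb{E}[p^x_t\mid\mathcal{F}_s]=p^x_s$ for $\nu$-almost every $x$, i.e.\ a martingale. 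A Doob regularization delivers a c\`adl\`ag modification, which can be arranged to preserve joint measurability by working first on a dense countable set of times.

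For (iii), each $(p^x_t)_{t<T}$ is a non-negative c\`adl\`ag martingale, and the standard absorption property for non-negative supermartingales shows that once $p^x$ or $p^x_-$ hits zero it remains zero on $[\tau^x,T)$, which is exactly the stated structure. To prove $\mathbb{P}[\tau^G=T]=1$, fix $t<T$ and use the disintegration identity $\mathbb{E}[F(G,\omega)]=\int\mathbb{E}[F(x,\omega)\,p^x_t]\,\nu(dx)$ (valid for bounded measurable $F$ by the density property and Fubini) with $F(x,\omega)=\mathbf{1}_{\{t\ge\tau^x(\omega)\}}$: since $p^x_t=0$ on $\{t\ge\tau^x\}$ by construction, the right-hand side vanishes, hence $\mathbb{P}[\tau^G\le t]=0$. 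Letting $t\uparrow T$ gives the claim.
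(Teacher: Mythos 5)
The paper does not give a proof of this lemma; it is explicitly recalled as a consequence of Lemme 1.8 and Corollaire 1.11 of Jacod's 1985 paper on initial enlargement, so there is no in-paper argument to compare against. Your outline correctly identifies the source and reproduces the standard three-stage structure of Jacod's construction: joint measurability via a differentiation/approximation scheme, c\`adl\`ag martingale regularization, and analysis of the zero set together with the identity $\mathbb{P}[\tau^G=T]=1$. Two places are looser than a complete proof and deserve to be flagged. First, in stage (ii) the identity $\mathbb{E}[p^x_t\mid\mathcal{F}_s]=p^x_s$ is obtained a priori only for $\nu$-a.e.\ $x$, with the exceptional null set depending on the pair $(s,t)$; to obtain the stated martingale property \emph{for each} $x$ one must work first along a countable dense set of times, take the union of the resulting null sets, regularize, and then redefine $p^x\equiv 1$ (say) on the remaining $\nu$-null set of bad $x$'s, checking that this modification does not disturb property~1 --- this is precisely the bookkeeping Jacod carries out. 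Second, the disintegration $\mathbb{E}[F(G,\cdot)]=\int\mathbb{E}[F(x,\cdot)p^x_t]\,\nu(dx)$ you invoke in stage (iii) is not valid for arbitrary bounded $\mathcal B\otimes\mathcal F$-measurable $F$: it requires $F(x,\cdot)$ to be $\mathcal F_t$-measurable for each $x$, otherwise one cannot pull $F(x,\cdot)$ out of the conditional expectation $\mathbb E[\,\cdot\mid\mathcal F_t]$. Your particular choice $F(x,\omega)=\mathbf{1}_{\{t\ge\tau^x(\omega)\}}$ does satisfy this, since $\tau^x$ is a jointly measurable $\mathbb F$-stopping time once the c\`adl\`ag version of $p^x$ is in hand, so the application itself is correct; only the blanket wording needs tightening.
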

The conditional density process $p^G$ is also the key to find the
semimartingale decomposition of an $\bF-$local martingale in the
enlarged filtration $\bG$.

We come now to the announced additional assumption
\begin{ass}\label{assum:density_jump}
For every $x$, the process $p^x$ does not jump to zero, i.e.
$$\mathbb{P}[\tau^x < T, p^x_{\tau^x-}> 0] = 0.$$
\end{ass}
This assumption is used in \cite{kardaras_strict_2011} for a general construction
of strict local martingales, in \cite{ruf_systematic_2013} for a construction of markets
with arbitrages and in \cite{chau_market_2015} for the study of optimal arbitrage when
agents have non equivalent beliefs. {This assumption is also used to prove the preservation of NUPBR in the enlarged market over an infinite horizon, see  \cite{acciaio_arbitrage_2014} or Theorem 6(a) of \cite{aksamit2015optional}.}

\section{Enlargement with a discrete random variable}\label{S2}

In this section we consider the case when the random variable $G$ of
the (initial) enlargement is a discrete random variable
$G\in\{g_1,\ldots,g_n\}$, {with $n\geq 2$ and $\mathbb P[G = g_i]>0$
for all $i$.} After a general theorem concerning NUPBR
for this case, we study optimal arbitrage in subsection \ref{S2.1}
and provide a dual representation for expected utility maximization in
subsection \ref{S2.2}. An example for computing optimal arbitrage and
maximal expected utility for the insider is presented in
subsection \ref{ex:discrete_poisson} for the case of an
incomplete market.

Notice first that the initial enlargement with a discrete random
variable is a classical case studied already by P.~A.~Meyer
\cite{meyer1978theoreme} and by many other authors. For this case
it is known that every $\bF-$local martingale is a
$\bG-$semimartingale on $[0,T]$ and it is not necessary to impose
Jacod's condition. We shall however make the Assumption
\ref{assum:density_jump}.

In the discrete case the insider can update her belief with a {\sl
measure change} $\bP\>\to\>\bQ^i$ thereby dismissing all scenarios not
contained in $\{G=g_i\}$. The measure $\bQ^i$ satisfies
\beq\label{Qi}\frac{d\bQ^i}{d\bP}{\big|_{\F_t}}=\frac{\bP\{G=g_i\mid\F_t\}}{\bP\{G=g_i\}}:=p_t^{g_i}\eeq
It gives total mass to $\{G=g_i\}$ and is absolutely continuous but
not equivalent to $P$.

The following theorem shows that NUPBR always holds true in this
setting.
\begin{thm}\label{T-NUPBR}
{Let $G$ be discrete and suppose that Assumptions \ref{basic-assum} and
\ref{assum:density_jump} hold true. Then} the $(\bG,\bP)-$market satisfies NUPBR.\end{thm}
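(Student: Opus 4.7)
The plan is to construct a strictly positive $\bG$-local martingale deflator for $(\bG, \bP)$, which, by the characterization of NUPBR recalled at the end of Section \ref{S1.1}, will deliver the claim. By Assumption \ref{basic-assum} pick $Z \in ELMMD(\bF, \bP)$, so $Z$ is a strictly positive $\bF$-martingale with $Z_0 = 1$ and $Z S$ is an $\bF$-local martingale under $\bP$. The candidate deflator is
$$
\tilde Z_t := \frac{Z_t}{p^G_t}, \qquad t \in [0,T],
$$
which on $\{G = g_i\}$ reads $Z_t/p^{g_i}_t$. It is $\bG$-adapted (since $G \in \G_0$, $p^x$ is $\bF$-optional for each $x$, and $Z$ is $\bF$-adapted), satisfies $\tilde Z_0 = 1$ under trivial $\F_0$, and is strictly positive: Lemma \ref{lemma:density_hypo}(3) together with $\bP[\tau^G = T] = 1$ gives $p^{g_i}_t > 0$ on $\{G = g_i\}$ for $t \in [0,T)$, while the explicit discrete terminal value $p^{g_i}_T = \mathbf{1}_{\{G=g_i\}}/\bP(G=g_i)$ is strictly positive on $\{G=g_i\}$ as well.

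The core step is to verify that $\tilde Z$ and $\tilde Z S$ are $\bG$-local martingales under $\bP$. Since $\{G = g_1\}, \ldots, \{G = g_n\} \in \G_0$ partition $\Omega$ into sets of positive $\bP$-mass, a $\bG$-adapted process is a $\bG$-local martingale under $\bP$ iff it is a $\bG$-local martingale under each conditional law $\bQ^i := \bP(\,\cdot\mid G = g_i)$, whose $\bF$-density is precisely $p^{g_i}$ by \eqref{Qi}. Under $\bQ^i$ the variable $G$ is almost surely constant, so $\bG$ coincides with $\bF$ modulo $\bQ^i$-nullsets. Because $p^{g_i}$ and $p^{g_i}_-$ are strictly positive $\bQ^i$-a.s.\ (using Assumption \ref{assum:density_jump}), a Bayes-type argument applies and asserts that an $\bF$-adapted process $Y$ is an $\bF$-local martingale under $\bQ^i$ iff $Y p^{g_i}$ is one under $\bP$. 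Applied to $Y = \tilde Z$ and $Y = \tilde Z S$, the products $Y p^{g_i}$ equal $Z$ and $Z S$ respectively, both of which are $\bF$-local martingales under $\bP$ by the choice of $Z$. Hence $\tilde Z$ is a strictly positive $\bG$-local martingale deflator and NUPBR holds in $(\bG, \bP)$.

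The only delicate point is the strict positivity of $p^G$ on the entire interval $[0,T]$, which is needed both to make $\tilde Z$ well defined and to enable the Bayes step. In the discrete setting this comes essentially for free: Assumption \ref{assum:density_jump} rules out $p^{g_i}$ creeping or jumping to zero in $[0,T)$ on $\{G=g_i\}$, and the atom gives an explicit positive value at $T$. This is precisely the feature that fails for non-atomic $G$ in Section \ref{section:G_continuous}, where $p^G_t$ typically decays to zero as $t \to T$, so the same candidate deflator cannot be used there.
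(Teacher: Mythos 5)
Your proof takes a genuinely different route from the paper. The paper argues by contradiction: assuming an arbitrage of the first kind $\xi$ in the $(\bG,\bP)$-market, it uses the Jeulin representation of $\bG$-predictable processes to restrict the superhedging strategy to $\{G=g_i\}$, thereby producing an arbitrage of the first kind in the $(\bF,\bQ^i)$-market; this contradicts Theorem~4.1 of \cite{chau_market_2015}, which asserts NUPBR for an absolutely continuous change of measure whose density (here $p^{g_i}$) does not jump to zero. Your proof instead constructs the deflator $\tilde Z = Z/p^G$ explicitly — which is exactly the route of \cite{acciaio_arbitrage_2014} and \cite{aksamit2015optional} that the authors explicitly flag as \emph{not} adaptable to the finite-horizon case for general $G$. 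You correctly observe that the obstruction disappears when $G$ is discrete, since $p^{g_i}_T = \Ind{G=g_i}/\bP[G=g_i]$ is explicit and strictly positive on $\{G=g_i\}$. So the two proofs use Assumption~\ref{assum:density_jump} for the same underlying reason but package it differently: the paper hides the measure-change machinery inside a cited black box, while your version exhibits the deflator in closed form, which is arguably more informative.

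Two steps in your sketch need tightening. First, the Bayes step ``$Yp^{g_i}$ is a $\bP$-$\bF$-local martingale $\Rightarrow$ $Y$ is a $\bQ^i$-$\bF$-local martingale'' is not automatic when $\bQ^i\ll\bP$ only: one must localize in a way compatible with the density, e.g.\ via $\sigma_n=\inf\{t:p^{g_i}_t<1/n\}\wedge T$, and for $\sigma_n\to T$ $\bQ^i$-a.s.\ one needs not only $p^{g_i}_t>0$ on $\{G=g_i\}$ for $t\in[0,T]$ (your argument covers $t<T$ by Lemma~\ref{lemma:density_hypo}(3) and $t=T$ by the explicit value) but also $p^{g_i}_{T-}>0$ $\bQ^i$-a.s., which is true here because $p^{g_i}$ is a bounded, hence uniformly integrable, martingale, but is not addressed in your sketch. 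Second, passing from ``$\tilde Z$ is a $\bQ^i$-$\bG$-local martingale for every $i$'' to ``$\tilde Z$ is a $\bP$-$\bG$-local martingale'' requires gluing the localizing sequences across the $\bG_0$-measurable partition $\{G=g_i\}$; this works because $\rho_n:=\sum_i\tau^i_n\Ind{G=g_i}$ is a $\bG$-stopping time, but it deserves a sentence. With these points filled in, the argument is correct and is a valid alternative to the paper's contradiction-plus-citation proof.
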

\begin{proof}
The statement is proved by way of contradiction, noticing that NUPBR is
equivalent to NA1. Assume that there is an arbitrage of the first
kind in the $(\mathbb{G},\mathbb{P})$-market, i.e., we can find an
$\mathcal{F}_T$-measurable random variable $\xi$ (because $\mathcal{F}_T =
\mathcal{G}_T$) such that $\mathbb{P}[\xi \ge 0] = 1, \mathbb{P}[\xi
>0]>0$ and  for all $\varepsilon >0,$ there exists a
$\mathbb{G}$-predictable strategy $H^{\mathbb{G}, \varepsilon}$
which satisfies
\begin{equation}\label{eq:A1_G}
\varepsilon + (H^{\mathbb{G},\varepsilon} \cdot S)_T \ge \xi,\
\mathbb{P}-a.s.
\end{equation}
Choose an index $i$ such that $\mathbb{P[}\{\xi > 0\} \cap \{ G =
g_i\} ] > 0.$ The inequality (\ref{eq:A1_G}) still holds true under
$\mathbb{Q}^i$ in the form of
\begin{equation}\label{eq:A1_Gi}
\varepsilon + (H^{\mathbb{G}, \varepsilon}1_{G = g_i} \cdot S)_T \ge
\xi,\ \mathbb{Q}^i-a.s.
\end{equation}
Let us look at the hedging strategy $H^{\mathbb{G}, \varepsilon}1_{G
= g_i}$ under $\mathbb{Q}^i$. Recall that (see \cite{jeulin1980semi}) the predictable process $H^{\mathbb{G},
\varepsilon}$ is of the form $H_t^{\mathbb{G},
\varepsilon}(\omega)=h_t(\omega, G(\omega))$ where $h_t(\omega,x)$
is a $\P(\bF)\times\B(\R)-$measurable function with $\P(\bF)$
denoting the $\bF-$predictable $\sg-$algebra on
$\Omega\times\R_{+}$. Then
$H^{\mathbb{G},\varepsilon,i}:=h(\omega,g_i)$ is $\bF-$predictable
and we have the representation $H^{\mathbb{G},
\varepsilon}\Ind{G=g_i}=\tilde
H^{\mathbb{F},\varepsilon,i}\Ind{G=g_i}$ where
$\tilde{H}^{\mathbb{F},i, \varepsilon}$ is a
$\mathbb{F}$-predictable strategy. Thus $(\ref{eq:A1_Gi})$ implies
that $\xi$ is an arbitrage of the first kind in the $(\mathbb{F},
\mathbb{Q}^i)$-market, which is equivalent to the failure of NUPBR
in the $\mathbb{Q}^i$-market. Notice next that, for each $i \in
\{1,...,n\}$, the $(\mathbb{F}, \mathbb{Q}^i)$-market is obtained
from the $(\mathbb{F}, \mathbb{P})$-market by an absolutely
continuous measure change, see (\ref{Qi}). Furthermore, the density
process $p^{g_i}$ does not jump to zero, by Assumption
\ref{assum:density_jump}. By Theorem 4.1 of \cite{chau_market_2015} this implies that
the condition NUPBR holds for the $(\mathbb{F},
\mathbb{Q}^i)$-market thus proving the contradiction and with it the
statement.\end{proof}

Recently Acciaio et al.~\cite{acciaio_arbitrage_2014} gave sufficient conditions for NUPBR to
hold in the $(\bG,\bP)-$market by constructing a martingale deflator
under $G$ (see also the introductory part to section
\ref{section:G_continuous} below). {The construction of local martingale deflators is also given in Proposition 10 (for quasi left-continuous $\mathbb{F}$-local martingales), Proposition 11, and Theorem 6 of \cite{aksamit2015optional}. However they work on an infinite
horizon requiring the absolute continuity in Jacod’s hypothesis to hold at all times and so their approach cannot be adapted to our finite horizon case.} 

Theorem \ref{T-NUPBR} shows that, under the Assumption
\ref{assum:density_jump}, the $(\bG,\bP)-$market satisfies NUPBR; it
does not exclude that it satisfies also NFLVR. This depends on the
possibility of classical arbitrage in the various specific cases. The
study of such arbitrage opportunities is the subject of the next
section.

\subsection{Optimal arbitrage via superhedging}\label{S2.1}

The notion of optimal arbitrage goes back to
\cite{fernholz_optimal_2010}. Here, following
\cite{chau_market_2015}, we relate optimal arbitrage to
superhedging. We start from a definition of the superhedging price
which is adapted to the context of filtration enlargement. Whenever
in the sequel the filtration may be either $\bF$ of $\bG$, we shall
use the symbol $\bH\in\{\bF,\bG\}$.
\begin{defn}\label{DSH} Let $\mathbb{H} \in \{ \mathbb{F},
\mathbb{G} \}$ and let $f \ge 0$ be a given claim. An
$\mathcal{H}_0$-measurable random variable $x^\bH_*(f)$ is called the
superhedging price of $f$ with respect to $\mathbb{H}$ if there
exists an $\mathbb{H}$-predictable strategy $H$ such that
\beq\label{defi:superhedge_positive}\left\{\begin{array}{ll}
x^\bH_*(f) + (H \cdot S)_t &\ge 0, \qquad \mathbb{P}-a.s,  \forall t \in[0, T],  \\
x^\bH_*(f) + (H \cdot S)_T &\ge f, \qquad
\mathbb{P}-a.s.\end{array}\right.\eeq and, if any
$x\in\mathcal{H}_0$ satisfies these conditions, then $x^\bH_*(f)\le x,
\>\bP-$a.s.\end{defn}

In other words, the superhedging price of $f$
is the essential lower bound of the initial values of all nonnegative
admissible portfolio processes, whose terminal value dominates $f$.
Notice that
$\mathcal G_0$ is non trivial implying that the superhedging price
$x^\bG_*(f)$ is a random variable. However, this price is constant on
each event $\{G=g_i\}$.

We come next to the superhedging theorem that shows how the
superhedging price and a superhedging strategy for $f$ in $\bG$ can
be obtained in terms of the superhedging price and the associated
strategy in $\bF$ by restricting $f$ to the individual events
$\{G=g_i\}$.
\begin{thm}\label{T-SH} {Let $G$ be discrete and suppose that Assumptions \ref{basic-assum} and
\ref{assum:density_jump} hold true. Then},
\bi\item[i)] The {\sl
superhedging price} for a claim $f\ge 0$ in the $(\bG,\bP)-$market is
given by
$$x_{*}^{\bG}(f)=\sum_{i=1}^nx_{*}^{\bF}(f\Ind{G=g_i})\,\Ind{G=g_i}$$
\item[ii)]  {The associated {\sl hedging strategy} is $\sum_{i=1}^n H^{\bF,i}\Ind{G=g_i}$ where $H^{\bF,i}$ is the superhedging strategy
for $f\Ind{G=g_i}$ in the $(\bF,\bP)-$market, i.e.
$$\sum_{i=1}^n x_{*}^{\bF}(f\Ind{G=g_i})\,\Ind{G=g_i}+\sum_{i=1}^n\left(H^{\bF,i}\Ind{G=g_i}\cdot S\right)_T\ge f,\>\>
\bP-a.s.$$}\ei\end{thm}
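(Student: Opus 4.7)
My plan is to prove the two inequalities of item (i) separately, with item (ii) emerging as a byproduct of the upper bound direction.

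For the upper bound $x_*^\bG(f)\le \sum_{i} x_*^\bF(f\Ind{G=g_i})\Ind{G=g_i}$ and item (ii), set $x_*^{\bF,i}:=x_*^\bF(f\Ind{G=g_i})$ and let $H^{\bF,i}$ be an associated $\bF$-superhedging strategy, both of which exist under the NFLVR hypothesis of Assumption~\ref{basic-assum}. Since $\Ind{G=g_i}$ is $\G_0$-measurable, the candidate strategy $\hat H:=\sum_i H^{\bF,i}\Ind{G=g_i}$ is $\bG$-predictable, and because $\G_0$-measurable coefficients commute with the vector stochastic integral, the associated wealth process equals
$$
\sum_i x_*^{\bF,i}\Ind{G=g_i}+(\hat H\cdot S)_t=\sum_i\Ind{G=g_i}\bigl(x_*^{\bF,i}+(H^{\bF,i}\cdot S)_t\bigr).
$$
Each summand on the right is $\bP$-a.s.~nonnegative by the $\bF$-superhedging property of $H^{\bF,i}$, and at $t=T$ dominates $\Ind{G=g_i}\,f\Ind{G=g_i}=\Ind{G=g_i}f$, whose sum over $i$ equals $f$. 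This yields the upper bound together with (ii).

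For the lower bound, fix any $\bG$-superhedge $(\hat x,\hat H)$ of $f$. Since $G$ is discrete, $\G_0=\sigma(\F_0,G)$, so $\hat x=\sum_i\hat x_i\Ind{G=g_i}$ with $\hat x_i\in\F_0$. Using the representation of $\bG$-predictable processes invoked in the proof of Theorem~\ref{T-NUPBR}, write $\hat H=\sum_i\tilde H^i\Ind{G=g_i}$ with each $\tilde H^i$ $\bF$-predictable, and set $\tilde V^i:=\hat x_i+\tilde H^i\cdot S$. Then $\hat x+(\hat H\cdot S)_t=\sum_i\Ind{G=g_i}\tilde V^i_t$, so the constraints $\hat V\ge 0$ and $\hat V_T\ge f$ translate, on $\{G=g_i\}$, into $\tilde V^i\ge 0$ and $\tilde V^i_T\ge f$ $\bP$-a.s. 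To obtain $\hat x_i\ge x_*^{\bF,i}$, I would build from $\tilde H^i$ an $\bF$-predictable superhedge of $f\Ind{G=g_i}$ in $(\bF,\bP)$ with initial capital $\hat x_i$ by stopping at the $\bF$-stopping time $\tau^i:=\inf\{t:\tilde V^i_t<0\}\wedge T$. Because $\tilde V^i\ge 0$ on $\{G=g_i\}$, one has $\tau^i=T$ $\bP$-a.s.~on $\{G=g_i\}$, so the stopped wealth coincides with $\tilde V^i$ there and in particular dominates $f=f\Ind{G=g_i}$ at time $T$.

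The delicate point is securing pointwise admissibility of the stopped strategy on $\{G\ne g_i\}$: the stopped wealth can jump strictly below zero at $\tau^i$ via $\Delta\tilde V^i_{\tau^i}=\tilde H^i_{\tau^i}\Delta S_{\tau^i}$, which is not controlled off $\{G=g_i\}$. I would attempt to handle this by replacing $\Ind{[0,\tau^i]}$ with its left-stopped version $\Ind{[0,\tau^i)}$, which is predictable when $\tau^i$ is a predictable $\bF$-stopping time (automatic for continuous $S$, and otherwise reached after localization by an announcing sequence $\tau^i_n\uparrow\tau^i$). Alternatively, one can bypass the pathwise argument by invoking the NFLVR-based superhedging duality $x_*^\bF(f\Ind{G=g_i})=\sup_{Q\in ELMM(\bF,\bP)}E^Q[f\Ind{G=g_i}]$ and checking $\hat x_i\ge E^Q[f\Ind{G=g_i}]$ for every such $Q$ by exploiting the absolutely continuous measures $\bQ^i$ and the nonnegativity of the $\bG$-wealth $\hat V$; handling this jump is the technical heart of the proof.
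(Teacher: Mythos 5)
Your proof of the upper bound and item (ii) is correct and essentially parallels the paper's, modulo the detail that the paper phrases it as superhedging $f$ under $\bQ^i$ and then restricting to $\{G=g_i\}$, whereas you work directly with superhedges of $f\Ind{G=g_i}$ under $\bP$; these are interchangeable here.

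The lower bound, however, has a genuine gap. Your plan is to manufacture a $\bP$-admissible $\bF$-superhedge of $f\Ind{G=g_i}$ with initial capital $\hat x_i$ by truncating $\tilde H^i$ at $\tau^i=\inf\{t:\tilde V^i_t<0\}\wedge T$. The difficulty you flag is real but your two fixes do not close it. Replacing $\Ind{[0,\tau^i]}$ by $\Ind{[0,\tau^i)}$ does not give a predictable integrand in general: for an arbitrary stopping time $\tau$, the stochastic interval $[0,\tau]$ is predictable, while $[0,\tau)$ need not be; and your remark that $\tau^i$ can ``otherwise [be] reached after localization by an announcing sequence'' is false, since only predictable stopping times admit announcing sequences, and the first time a c\`adl\`ag process jumps below zero is typically totally inaccessible when $S$ has jumps. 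Since the theorem is stated for a general semimartingale $S$, this route breaks down. Your alternative (superhedging duality plus ``checking $\hat x_i\ge E^Q[f\Ind{G=g_i}]$'') is the right instinct but is not actually carried out: the wealth $\tilde V^i$ is only nonnegative on $\{G=g_i\}$, so it is not a $Q$-supermartingale for $Q\in ELMM(\bF,\bP)$, and the ``technical heart'' you defer is precisely the whole point. The paper's resolution is to avoid making the wealth nonnegative on all of $\Omega$: it passes to the measure $\bQ^i$, under which $\{G\ne g_i\}$ is null, so the original $\bG$-wealth constraint already gives an admissible $\bF$-superhedge of $f$ in the $(\bF,\bQ^i)$-market with initial capital $y$. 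The remaining step, identifying the superhedging price under $\bQ^i$ with $x_*^{\bF,\bP}(f\Ind{G=g_i})$ (and, for the upper bound, the converse translation), is Theorem 4.1 of \cite{chau_market_2015}; this is a nontrivial measure-change result for superhedging under an absolutely continuous but non-equivalent measure, and it is exactly where Assumption \ref{assum:density_jump} (the density $p^{g_i}$ does not jump to zero) enters. Your proof never uses Assumption \ref{assum:density_jump}, which is a signal that the hard part has been skipped.
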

\begin{rem}
{Using this theorem, the computation of the superhedging price for the
insider reduces to the computation of the superhedging price for the
uninformed agents in the original market, which satisfies NFLVR. In
particular, using the classical superhedging duality, we may write
$$x_{*}^{\bG}(f)=
\sum_{i}\sup_{Z\in ELMMD(\bF,\bP)}\mathbb E^\bP\left[Z f\Ind{G=g_i}\right]\,\Ind{G=g_i}. 
$$}
\end{rem}
\begin{proof}
As in the proof of Theorem \ref{T-NUPBR}, here we also make
use of Theorem 4.1 in \cite{chau_market_2015}, which relates the superhedging price under
a measure $\bP$ to that under a measure $\bQ$, with respect to which
$\bP$ is only absolutely continuous, but not necessarily equivalent.
The role of the measure $\bQ$ in \cite{chau_market_2015} will be played here by the
measure $\bP$ and that of $\bP$ in \cite{chau_market_2015} by the various measures
$\bQ^i$ defined in (\ref{Qi}). To make clear which measure is being used, in this proof
we shall use the notation $x_{*}^{\bF,\bP}(\cdot)$ or
$x_{*}^{\bF,\bQ^i}(\cdot)$ respectively.

Theorem 4.1 of \cite{chau_market_2015} leads to
$$x_*^{\mathbb{F}, \mathbb{Q}^i}(f) = x_*^{\mathbb{F}, \mathbb{P}}(f1_{G = g_i}).$$
For each $i$, we denote by $H^{\mathbb{F},i}$ the $\mathbb{F}$-predictable
strategy which superhedges $f$ in the
$(\mathbb{F},\mathbb{Q}^i)$-market, that is
$$x_*^{\mathbb{F}, \mathbb{Q}^i}(f) + (H^{\mathbb{F},i} \cdot S)_T \ge f,\quad \mathbb{Q}^i-a.s.$$
This inequality holds also under $\mathbb{P}$ when restricted on
$\{G=g_i\}$, namely
$$x_*^{\mathbb{F}, \mathbb{P}}(f1_{G = g_i})1_{G = g_i} +
(H^{\mathbb{F},i} 1_{G=g_i} \cdot S)_T \ge f1_{G = g_i}, \ \mathbb{P}-a.s.$$
Summing up these inequalities we obtain
$$ \left( \sum_i x_*^{\mathbb{F}, \mathbb{P}}(f1_{G = g_i})1_{G = g_i}\right)  +
\left(\left( \sum_i H^{\mathbb{F},i}1_{G = g_i} \right)\cdot S\right)_T \ge f,\
\mathbb{P}-a.s.$$ The hedging strategy $\left( \sum_i
H^{\mathbb{F},i}1_{G = g_i} \right)$ is $\mathbb{G}$-predictable.

Finally, we prove that the initial capital $\sum_i x_*^{\mathbb{F},
\mathbb{P}}(f1_{G = g_i})1_{G = g_i}$ is exactly the superhedging
price of $f$ in the $(\mathbb{G}, \mathbb{P})$-market. Assume that
$y$ is a $\mathcal{G}_0$-measurable random variable such that
$$ y + (H^{\mathbb{G}} \cdot S)_T \ge f, \ \mathbb{P}-a.s.$$ where  $H^{\mathbb{G}}$ is a $\mathbb{G}$-predictable strategy. Hence,
$$y1_{G=g_i} + (H^{\mathbb{G}}1_{G=g_i}\cdot S)_T \ge f1_{G = g_i},\ \mathbb{P}-a.s.$$
Because $\mathbb{Q}^i \ll \mathbb{P}$ with $\mathbb{Q}^i[G = g_i] =
1$, we obtain
$$y + (H^{\mathbb{G}}1_{G = g_i}\cdot S)_T \ge f, \ \mathbb{Q}^i-a.s.$$
By using the same argument as in the proof of Theorem \ref{T-NUPBR},
we can replace $H^{\mathbb{G}}$ on the various events $\{G=g_i\}$ by
an $\mathbb{F}$-predictable strategy $\tilde{H}^{\mathbb{F},i}$ and
then
$$y + (\tilde{H}^{\mathbb{F},i} \cdot S)_T \ge f, \ \mathbb{Q}^i-a.s.$$
By definition, the superhedging price of $f$ under $\mathbb{Q}^i$ is
not greater than $y$ and so we conclude that $\sum_i
x_*^{\mathbb{F}, \mathbb{P}}(f1_{G = g_i})1_{G = g_i} \le y,
\mathbb{P}-a.s.$
\end{proof}
We now give a specific definition of a market with optimal arbitrage which
is adapted to the context of initial filtration enlargement and is motivated by Lemma 3.3 in \cite{chau_market_2015}.
\begin{defn}\label{DOA}
There is {\sl optimal arbitrage} in the $(\bH,\bP)-$market if
$x^\bH_{*}(1)\le 1$ and $\bP\{x^\bH_{*}(1)<1\}>0$. If $x^\bH_{*}(1)<1, \bP-a.s.$
then the optimal arbitrage is said to be {\sl strong}.\end{defn}
{From this definition and Theorem \ref{T-SH} it follows that there is
optimal arbitrage for the insider if
and only if $x_{*}^{\bF}(\Ind{G=g_i})<1$ for some $i$. If the market
is complete, then $x_{*}^{\bF}(\Ind{G=g_i}) = \mathbb E^{\mathbb P}[Z
\Ind{G=g_i}]<1$ for all $i$, where $Z$ is the density of the unique
martingale measure. Therefore, in a complete market there always exists strong optimal arbitrage.}

It
follows from Remark 3.4 in \cite{chau_market_2015} that, under NUPBR, one has
$x^\bH_{*}(1)>0$ {$\bP$}-a.s. However, $x^\bH_{*}(1)>0$ does not imply NUPBR (see \cite{levental_necessary_1995}
for a market model that does not satisfy NUPBR, but satisfies NA,
which implies $x^\bH_{*}(1)>0$). 




\subsection{Expected utility maximization for an
insider}\label{S2.2}

This subsection concerns utility maximization. {We first
formulate the precise relationship between absence of arbitrage, in particular NUPBR, and
utility maximization, and then show that expected utility
maximization in the enlarged market can be performed by an analog of
{\sl classical duality} also under absence of an ELMMD.}

Given a concave and strictly increasing utility function $U(\cdot)$, the corresponding portfolio optimization problem is
given by
$$u(x):=\displaystyle\sup_{H\in{\cal A}_x}\mathbb E\left\{U(V_T^{x,H})\right\}$$
Recall that it is shown in \cite{karatzas_numeraire_2007} that, if
NUPBR fails, then $u(x)=+\infty$ for all $x>0$ or the problem has
infinitely many solutions. {This result implies immediately the
following statement, which we formulate as a proposition because of
its importance. It holds in general and not only
in the specific case of this section. }
\begin{prop}\label{L-NUPBR}Assume that the utility function is strictly
  increasing, concave, and satisfies
  $U(+\infty) = +\infty$. If there exists $x>0$ for which $u(x)<+\infty$, then NUPBR
holds.\end{prop}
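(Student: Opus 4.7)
The plan is to argue the contrapositive: supposing NUPBR fails, I would construct for each $x>0$ a sequence of $x$-admissible strategies whose expected utility diverges to $+\infty$, contradicting the hypothesis that $u(x_0)<+\infty$ for some $x_0>0$.

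First I would unpack the failure of NUPBR via Definition~\ref{D2}: since $\lim_{c\to\infty}\sup_{W\in\mathcal{K}_1}\mathbb{P}(W>c)>0$, there must exist $\delta>0$ and a sequence $W_n=(H^{(n)}\cdot S)_T\in\mathcal{K}_1$, with $H^{(n)}\in\mathcal{A}_1$, such that $\mathbb{P}(W_n>n)\ge\delta$ for every $n$; note in particular that $(H^{(n)}\cdot S)_t\ge-1$ throughout $[0,T]$. Then, for a fixed $x>0$, I would consider the rescaled strategy $\tilde H^{(n)}:=\tfrac{x}{2}H^{(n)}$, which belongs to $\mathcal{A}_{x/2}\subset\mathcal{A}_x$. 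The portfolio $V^{x,\tilde H^{(n)}}$ starting from initial wealth $x$ would then satisfy $V_t^{x,\tilde H^{(n)}}\ge x/2>0$ for every $t\in[0,T]$, while $V_T^{x,\tilde H^{(n)}}\ge x+\tfrac{xn}{2}$ on the event $\{W_n>n\}$. By monotonicity of $U$, this yields
\[
\mathbb{E}\!\left[U\!\left(V_T^{x,\tilde H^{(n)}}\right)\right]\ge U(x/2)+\delta\,\bigl[U(x+\tfrac{xn}{2})-U(x/2)\bigr],
\]
which diverges to $+\infty$ as $n\to\infty$, because $U(+\infty)=+\infty$ and $U(x/2)$ is finite. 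Hence $u(x)=+\infty$, giving the desired contradiction.

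The main subtlety is the choice of the scaling factor $x/2$: it simultaneously rescales the $1$-admissible strategies $H^{(n)}$ into $x$-admissible ones, and keeps the portfolio value uniformly bounded below by the strictly positive constant $x/2$, so that $U$ is evaluated only on a domain where it is finite (the argument would fail if one had to evaluate $U$ at~$0$, where it may equal $-\infty$). This is essentially Proposition~4.19 of \cite{karatzas_numeraire_2007}, which could in fact be cited directly.
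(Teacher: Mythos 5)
Your proof is correct, and it is essentially a direct unpacking of the Karatzas--Kardaras fact that the paper invokes in one line just before the proposition: the paper simply asserts that the statement ``follows immediately'' from the result recalled from \cite{karatzas_numeraire_2007} (if NUPBR fails, then $u(x)=+\infty$ for all $x>0$ or there are infinitely many solutions). Your contrapositive argument, scaling a sequence of $1$-admissible strategies witnessing the failure of $L^0$-boundedness of $\mathcal K_1$ by $x/2$ and using monotonicity of $U$ together with $U(+\infty)=+\infty$, spells this out; it in fact does slightly more than the paper claims, since the quoted Karatzas--Kardaras statement leaves open the alternative of non-uniqueness, whereas your computation shows directly that under $U(+\infty)=+\infty$ one always lands in the $u(x)=+\infty$ branch, so the ``immediate'' implication does hold as stated. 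One small technical point to tidy: the supremum $\sup_{W\in\mathcal K_1}\mathbb P(W>n)$ need not be attained, so after fixing the limit $\delta>0$ you should select $W_n\in\mathcal K_1$ with, say, $\mathbb P(W_n>n)\ge\delta/2$ and carry $\delta/2$ through the estimate; the conclusion is unchanged.
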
 {In particular, a criterion allowing to show that NUPBR
holds is thus to show that e.g. the log-utility maximization leads
to a finite value. Notice, however, that NUPBR does not imply that
expected utility is finite.}

We now discuss the duality approach for utility maximization. 
Before stating the main theorem, we prove one preliminary lemma.
This lemma, which we state for a general increasing function,
shows that it is possible to relate the expected utility of the
insider to the expected utility of regular agents when restricted to
the events $\{G = g_i \}$. In this lemma and below we denote by
$\A_1^{\bF}$ and $\A_1^{\bG}$ the set of 1-admissible strategies
that are predictable with respect to $\bF$ and $\bG$ respectively.

\begin{lem}\label{lemma:log_utility_discrete}
{Let $U$ be an increasing function.} Then,
\begin{equation}\label{eq:log_utility_discrete}
\sup_{H \in \mathcal{A}^{\mathbb{G}}_1} \mathbb{E}^{\mathbb{P}}[U(
V^{1, H}_T)] = \sum_{i = 1}^n \sup_{H \in
\mathcal{A}^{\mathbb{F}}_1} \mathbb{E}^{\mathbb{P}}[U(V^{1,
H}_T)1_{G = g_i}].
\end{equation}\end{lem}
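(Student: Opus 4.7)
The plan is to establish equality \eqref{eq:log_utility_discrete} via two opposite inequalities, exploiting the classical Jeulin representation used already in the proofs of Theorems \ref{T-NUPBR} and \ref{T-SH}: every $\mathbb{G}$-predictable process $H$ can be written $H_t(\omega) = h_t(\omega, G(\omega))$ for some $\mathcal{P}(\mathbb{F}) \otimes \mathcal{B}(\mathbb{R})$-measurable $h$. Setting $H^i_t(\omega) := h_t(\omega, g_i)$ yields, for each $i$, an $\mathbb{F}$-predictable process that coincides with $H$ on $\{G = g_i\}$, so on this set $(H \cdot S) = (H^i \cdot S)$ and hence $V^{1, H}_T \, 1_{G = g_i} = V^{1, H^i}_T \, 1_{G = g_i}$.

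For the inequality $\le$, fix $H \in \mathcal{A}^{\mathbb{G}}_1$ and form the processes $H^i$ as above. Each $H^i$ is $\mathbb{F}$-predictable, but need not be 1-admissible under $\mathbb{F}$: the constraint $1 + (H \cdot S)_t \ge 0$ holds $\mathbb{P}$-a.s., and therefore on $\{G = g_i\}$, but $H^i$ has forgotten the information about $G$ and may violate it on $\{G \neq g_i\}$. I would fix this by introducing the $\mathbb{F}$-stopping time $\tau^i := \inf\{t : 1 + (H^i \cdot S)_t < 0\} \wedge T$ and replacing $H^i$ by $\tilde H^i := H^i 1_{[0, \tau^i]}$. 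By construction $\tilde H^i \in \mathcal{A}^{\mathbb{F}}_1$, and because $1 + (H^i \cdot S)_t \ge 0$ for all $t$ on $\{G = g_i\}$ we have $\tau^i = T$ on that set, so $V^{1, \tilde H^i}_T = V^{1, H^i}_T = V^{1, H}_T$ there. Partitioning according to $G$ gives
$$\mathbb{E}^{\mathbb{P}}[U(V^{1, H}_T)] = \sum_{i=1}^n \mathbb{E}^{\mathbb{P}}[U(V^{1, \tilde H^i}_T) 1_{G = g_i}] \le \sum_{i=1}^n \sup_{H' \in \mathcal{A}^{\mathbb{F}}_1} \mathbb{E}^{\mathbb{P}}[U(V^{1, H'}_T) 1_{G = g_i}],$$
and taking the supremum over $H \in \mathcal{A}^{\mathbb{G}}_1$ yields the desired inequality.

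For the reverse direction $\ge$, fix $\varepsilon>0$ and for each $i$ pick $H^i \in \mathcal{A}^{\mathbb{F}}_1$ which is an $\varepsilon/n$-approximate maximizer of the $i$-th term on the right-hand side. Define $H := \sum_{i=1}^n H^i 1_{G = g_i}$. This is $\mathbb{G}$-predictable because each $1_{G = g_i}$ is $\mathcal{G}_0$-measurable and can be pulled out of the stochastic integral, and on $\{G = g_i\}$ one has $1 + (H \cdot S)_t = 1 + (H^i \cdot S)_t \ge 0$, so $H \in \mathcal{A}^{\mathbb{G}}_1$. Partitioning as before gives
$$\mathbb{E}^{\mathbb{P}}[U(V^{1, H}_T)] = \sum_{i=1}^n \mathbb{E}^{\mathbb{P}}[U(V^{1, H^i}_T) 1_{G = g_i}] \ge \sum_{i=1}^n \sup_{H' \in \mathcal{A}^{\mathbb{F}}_1} \mathbb{E}^{\mathbb{P}}[U(V^{1, H'}_T) 1_{G = g_i}] - \varepsilon,$$
and letting $\varepsilon \downarrow 0$ closes the argument.

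The only non-trivial step is the $\le$ direction, where one must manufacture a globally $\mathbb{F}$-admissible strategy out of the $\mathbb{F}$-predictable restriction $H^i$, which a priori only inherits the admissibility bound on $\{G = g_i\}$. The stopping-time truncation at $\tau^i$ resolves this without altering the portfolio value on $\{G = g_i\}$, because on that set the bound holds up to $T$ and the truncation is invisible.
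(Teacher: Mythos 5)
Your two-sided argument is the right skeleton, and your $\ge$ direction is correct and essentially identical to the paper's. More importantly, you have correctly identified a genuine gap in the $\le$ direction that the paper's own proof silently skips: once one writes $H^{\mathbb G}_t(\omega)=h_t(\omega,G(\omega))$ and sets $H^{\mathbb F,i}=h(\cdot,g_i)$, nothing guarantees that $H^{\mathbb F,i}$ is $1$-admissible under $\mathbb P$ — the admissibility of $H^{\mathbb G}$ only controls $1+(H^{\mathbb F,i}\cdot S)$ on $\{G=g_i\}$ (equivalently, $\mathbb Q^i$-a.s.), not $\mathbb P$-a.s. The paper simply takes the supremum over $\mathcal A_1^{\mathbb F}$ without checking this, so your attempt to repair it is a real improvement in rigor.

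However, your stopping-time repair is itself incomplete for the generality the lemma is used in. With $\tau^i:=\inf\{t: 1+(H^i\cdot S)_t<0\}\wedge T$, the truncated strategy $\tilde H^i:=H^i\mathbf 1_{[0,\tau^i]}$ gives $(\tilde H^i\cdot S)_t=(H^i\cdot S)_{t\wedge\tau^i}$, and when $S$ (hence $H^i\cdot S$) has jumps, the wealth can \emph{jump strictly below} $-1$ at $\tau^i$, i.e.\ $1+(H^i\cdot S)_{\tau^i}<0$ is possible; the stopped process is then frozen at a negative value and $\tilde H^i\notin\mathcal A_1^{\mathbb F}$. Your claim ``by construction $\tilde H^i\in\mathcal A_1^{\mathbb F}$'' is therefore only valid for continuous $S$ (where $1+(H^i\cdot S)_{\tau^i}=0$ at $\tau^i<T$). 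But the paper applies this lemma, via Corollary~\ref{cor:log:dual}, to the Poisson market of subsection~\ref{ex:discrete_poisson}, which is genuinely discontinuous, so this case cannot be ignored. The repair that matches the rest of the paper's machinery is to pass through the $(\mathbb F,\mathbb Q^i)$-market: $H^{\mathbb F,i}$ \emph{is} $1$-admissible under $\mathbb Q^i$, and one shows that the supremum on the right-hand side of \eqref{eq:log_utility_discrete} is unchanged if taken over $\mathbb Q^i$-$1$-admissible $\mathbb F$-strategies, because by Theorem~\ref{T-SH} (equivalently Theorem~4.1 of \cite{chau_market_2015}, whose Assumption~\ref{assum:density_jump} is the reason this lemma is stated under that hypothesis) the claim $V^{1,H^{\mathbb F,i}}_T\mathbf 1_{G=g_i}$ can be superreplicated from initial capital at most $1$ by a $\mathbb P$-$1$-admissible $\mathbb F$-strategy. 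Monotonicity of $U$ then gives the inequality you want without any stopping-time argument, and without assuming continuity of $S$.
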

\begin{proof}
The proof only requires the representation of
$\mathbb{G}$-predictable processes as it was used in the proofs of
Theorems \ref{T-NUPBR} and \ref{T-SH}. We do not need Assumption
\ref{assum:density_jump} here. \smallskip

($\le$) Let $H^{\mathbb{G}} \in \mathcal{A}^{\mathbb{G}}$ be a
$\mathbb{G}$-predictable strategy. As mentioned above, the
$\mathbb{G}$-predictable process $H^{\mathbb{G}}$ can be expressed
as $H^{\mathbb{G}}_t(\omega) = h_t(\omega, G(\omega))$ where
$h_t(\omega, x)$ is a $\mathcal{P}(\mathbb{F}) \times
\mathcal{B}(\mathbb{R})$-measurable function. Then,
$H^{\mathbb{F},i} = h(\omega, g_i)$ is $\mathbb F$-predictable and
$H^{\mathbb{G}}1_{G = g_i} = H^{\mathbb{F},i}1_{G = g_i}$ a.s.
Hence, we have that $H^{\mathbb{G}} = \sum_{i=1}^n
H^{\mathbb{F},i}1_{G = g_i}$ and therefore
$$
\int_0^T H^{\mathbb G}_t dS_t = \sum_{i=1}^n 1_{G = g_i} \int_0^T
H^{\mathbb{F},i}_t dS_t,
$$
where the equality follows from the fact that $S$ is a $\mathbb
G$-semimartingale. Consequently
$$\mathbb{E}^{\mathbb{P}}[U(V^{1, H^{\mathbb{G}}}_T) ] = \sum_{i=1}^n \mathbb{E}^{\mathbb{P}}
\left[  1_{G = g^i} U\left(V_T^{1,H^{\mathbb{F},i}}\right)
\right],$$
{where we used the fact that $U$ is increasing to take the expectation
since it implies that both expressions under the $\mathbb E$ sign are
bounded from below.}
 Taking the supremum over the set of all
$\mathbb{G}$-admissible strategies we obtain the inequality $(\le)$
in (\ref{eq:log_utility_discrete}).\smallskip

($\ge$) {Let $H^{\bF,i}$, $i=1,\dots,n$ be $\bF$-predictable
strategies. Then, the strategy $H^{\bG}= \sum_{i=1}^n
H^{\mathbb{F},i}1_{G = g_i}$ is $\bG$-predictable and the following
straightforward inequality completes the proof.
$$\sum_{i=1}^n \mathbb{E}^{\mathbb{P}} \left[  1_{G = g^i} U(V_T^{1,H^{\mathbb{F},i}}) \right]
= \mathbb{E}^{\mathbb{P}}[U( V^{1, H^{\mathbb{G}}}_T) ] \le \sup_{H
\in \mathcal{A}^{\mathbb{G}}_1} \mathbb{E}^{\mathbb{P}}[U(V^{1,
H}_T)].$$}
\end{proof}

The following
theorem leads to a new characterization of the expected
utility of the insider in terms of the additional information
$G$ and the set of all local martingale densities of the
$(\mathbb{F},\mathbb{P})$-market.
\begin{thm} \label{thm:gen-dual}
Let $G$ be discrete, suppose that Assumptions \ref{basic-assum} and
\ref{assum:density_jump} hold true, and assume that 
 \begin{itemize}
\item[(i)] The function $U:(0,\infty)\to \mathbb R$ is strictly
  concave, increasing, continuously differentiable and satisfies the Inada conditions at
  $0$ and $\infty$.  
\item[(ii)] For every $y\in (0,\infty)$, there exists $Z
  \in 
  ELMMD(\bF,\bP)$ with $\mathbb E^{\mathbb P}[V(yZ)]<\infty$, where
  $V(y) =  \sup_x (U(x)-xy) $.
 \end{itemize}
Then, 
$$
\sup_{H\in \mathcal A^{\bG}_1} \mathbb E^{\bP} [U(V^{1,H}_T)] = \sum_{i}
\inf_{y>0}\left\{y+
\inf_{Z \in ELMMD(\mathbb{F},\mathbb{P})} \mathbb{E}^{\mathbb{P}}
\left[ V\left(yZ_T \right)1_{G = g_i}\right]\right\}.
$$
\end{thm}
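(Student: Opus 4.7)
The plan is to combine Lemma \ref{lemma:log_utility_discrete} with a Kramkov--Schachermayer style duality argument applied event by event. By Lemma \ref{lemma:log_utility_discrete}, the insider's value decomposes as
$$
\sup_{H\in \mathcal A^{\bG}_1} \mathbb E^{\bP}[U(V^{1,H}_T)] = \sum_{i=1}^n u_i, \qquad u_i := \sup_{H\in \mathcal A^{\bF}_1} \mathbb E^{\bP}[U(V^{1,H}_T)1_{G=g_i}],
$$
so it suffices to establish, for each fixed $i$, the identity
$$
u_i = \inf_{y>0}\Bigl\{y + \inf_{Z\in ELMMD(\bF,\bP)} \mathbb E^{\bP}[V(yZ_T)1_{G=g_i}]\Bigr\}.
$$

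The weak duality direction $u_i \le \cdots$ is immediate from the Fenchel--Young inequality $U(x)\leq V(y)+xy$ applied at $x=V^{1,H}_T$: multiplying by $1_{G=g_i}\in[0,1]$, taking expectation under $\bP$, and using that $ZV^{1,H}$ is a nonnegative $\bP$-local martingale, hence a supermartingale with initial value $1$, I obtain $\mathbb E^{\bP}[Z_T V^{1,H}_T 1_{G=g_i}]\leq \mathbb E^{\bP}[Z_T V^{1,H}_T]\leq 1$ and therefore $\mathbb E^{\bP}[U(V^{1,H}_T)1_{G=g_i}]\leq \mathbb E^{\bP}[V(yZ_T)1_{G=g_i}] + y$. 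Taking infima in $(y,Z)$ and supremum in $H$ yields the upper bound.

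For the reverse inequality, I would change measure to $\bQ^i$ defined in $(\ref{Qi})$, rewriting $u_i = \bP[G=g_i]\cdot\sup_H \mathbb E^{\bQ^i}[U(V^{1,H}_T)]$. Under Assumption \ref{assum:density_jump}, Theorem 4.1 of \cite{chau_market_2015} ensures that the $(\bF,\bQ^i)$-market inherits NUPBR from $(\bF,\bP)$; the Karatzas--Kardaras extension of the Kramkov--Schachermayer duality to NUPBR markets then gives
$$
\sup_H \mathbb E^{\bQ^i}[U(V^{1,H}_T)] = \inf_{y'>0,\, Y\in\mathcal Y^{\bQ^i}}\{y' + \mathbb E^{\bQ^i}[V(y'Y_T)]\},
$$
where $\mathcal Y^{\bQ^i}$ is the set of $\bQ^i$-supermartingale deflators with $Y_0=1$. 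Any $Z\in ELMMD(\bF,\bP)$ induces $Y := Z/p^{g_i}\in \mathcal Y^{\bQ^i}$ (well-defined since $p^{g_i}$ is strictly positive on $[0,T]$ $\bQ^i$-a.s.\ by Lemma \ref{lemma:density_hypo}) through a Bayes-type computation, and the substitution $y = y'\bP[G=g_i]$ together with the identity $Y_T = \bP[G=g_i]Z_T$ $\bQ^i$-a.s.\ converts $\bP[G=g_i]\bigl(y' + \mathbb E^{\bQ^i}[V(y'Y_T)]\bigr)$ into $y + \mathbb E^{\bP}[V(yZ_T)1_{G=g_i}]$.

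The main obstacle is to justify restricting the $Y$-infimum in the $\bQ^i$-dual to the subfamily $\{Z/p^{g_i}: Z\in ELMMD(\bF,\bP)\}\subset\mathcal Y^{\bQ^i}$. The easy inclusion follows from the Bayes computation above; the reverse requires lifting a general $\bQ^i$-supermartingale deflator $Y$ back to a $\bP$-local martingale density. My approach would be to set $Z := Yp^{g_i}$, extended by zero on $\{G\neq g_i\}$ (a nonnegative $\bP$-local martingale with initial value $1$ by Girsanov for absolutely continuous measure changes), and then to restore strict positivity of the terminal density via a convex perturbation $Z^\epsilon := (1-\epsilon)Z + \epsilon Z_0$ with a reference $Z_0\in ELMMD(\bF,\bP)$. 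Assumption \ref{assum:density_jump}, which prevents $p^{g_i}$ from jumping to zero on $\{G=g_i\}$, is essential for the lift to be well-defined, and convexity together with lower semicontinuity of the dual functional enable passing to the limit $\epsilon\downarrow 0$.
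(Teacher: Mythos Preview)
Your reduction via Lemma \ref{lemma:log_utility_discrete}, the weak duality via Fenchel--Young, and the change of measure to $\bQ^i$ are all correct and match the paper. The difficulty is, as you correctly identify, the reverse inequality; but your proposed lifting argument does not work.

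You claim that for a general $\bQ^i$-supermartingale deflator $Y$, the process $Z:=Yp^{g_i}$ is a nonnegative $\bP$-local martingale ``by Girsanov''. This is false: Girsanov transfers the \emph{supermartingale} property, not the local-martingale property, so $Z$ is in general only a $\bP$-supermartingale. Likewise, from $YS$ being a $\bQ^i$-supermartingale you only get that $ZS$ is a $\bP$-supermartingale, not a local martingale. The convex perturbation $Z^\varepsilon=(1-\varepsilon)Z+\varepsilon Z_0$ restores strict positivity but cannot upgrade a supermartingale to a local martingale, so $Z^\varepsilon\notin ELMMD(\bF,\bP)$ in general. (Also, ``extended by zero on $\{G\neq g_i\}$'' is not meaningful for a process, since $\{G\neq g_i\}\in\F_T$ is not time-indexed.)

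The paper avoids this entirely by never invoking the full set $\mathcal Y^{\bQ^i}$ of supermartingale deflators. Instead it works at the level of terminal random variables: it defines $\mathcal D^i(y):=\{z/p^{g_i}_T:\ z\in\mathcal D(y)\}$ (under $\bQ^i$), where $\mathcal D(y)$ is the Kramkov--Schachermayer solid hull generated by $ELMMD(\bF,\bP)$, and checks directly that the pair $(\mathcal C^i,\mathcal D^i)$ satisfies the bipolar relations under $\bQ^i$. The verification is a one-line change of measure, $\mathbb E^{\bQ^i}[v^iz^i]=\mathbb E^{\bP}[vz\,1_{G=g_i}]$; for the nontrivial direction one lifts a terminal value $z^i$ to $z:=z^i p^{g_i}_T 1_{G=g_i}$ and observes that $z\in\mathcal D(1)$ by the $\bP$-bipolar characterization --- no process-level argument and no membership in $ELMMD$ is needed. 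Mostovyi's abstract duality \cite{mostovyi2015necessary} then applies to $(\mathcal C^i,\mathcal D^i)$ under $\bQ^i$, and since $V$ is decreasing the infimum over $\mathcal D^i(y)$ reduces to an infimum over $\{Z_T/p^{g_i}_T:Z\in ELMMD\}$, which after the substitution you already wrote down gives the stated formula.
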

\begin{proof}
In view of Lemma \ref{lemma:log_utility_discrete}, it suffices to show
that for every $i$, 
$$
\sup_{H\in \mathcal A^{\bF}_1} \mathbb E^{\bP} [\mathbf 1_{G=g_i} U(V^{1,H}_T)] = 
\inf_{y>0}\left\{y+
\inf_{Z \in ELMMD(\mathbb{F},\mathbb{P})} \mathbb{E}^{\mathbb{P}}
\left[ V\left(yZ_T \right)1_{G = g_i}\right]\right\}.
$$
Following the reasoning in the latter part of the proof of Theorem
\ref{T-NUPBR} that relates the $(\bF,\bQ^i)-$markets to the
$(\bF,\bP)-$market and using Assumption \ref{assum:density_jump},
one can again use Theorem 4.1 in \cite{chau_market_2015} to show that the $(\mathbb{F},
\mathbb{Q}^i)$-market satisfies the condition NUPBR. Furthermore,
for any local martingale density $Z \in
ELMMD(\mathbb{F},\mathbb{P})$, the process $Z/p^{g_i}$ is a local
martingale deflator for the $(\mathbb{F},\mathbb{Q}^i)$-market (note
that on $\{G = g_i\}$, $p_t^{g_i}>0$ for all $t$).

Let us next introduce the following subsets of $L^{0}_+$
{\color{red}
\begin{align*}
\mathcal{C}(x) &= \{ v \in L^0_+: 0 \le v \le x V^{1, H^{\mathbb{F}}}_T, \mathbb{P}-a.s. \text{ for some } H^{\mathbb{F}} \in \mathcal{A}^{\mathbb{F}}_1 \},\\
\mathcal{D}(y) &=\{ z \in L^0_+: 0 \le z \le yZ_T, \mathbb{P}-a.s. \text{ for some } Z \in ELMM(\mathbb{F}, \mathbb{P}) \},\\
\mathcal{C}^i(x) &= \left\lbrace v^i = v, \mathbb{Q}^i- a.s. \text{ for some } v  \in \mathcal{C}(x) \right\rbrace ,\\
\mathcal{D}^i(y) &=\left\lbrace z^i = \frac{z}{p^{g_i}_T}, \mathbb{Q}^i - a.s. \text{ for some } z \in \mathcal{D}(y) \right\rbrace .
\end{align*}
Because the $(\mathbb{F},\mathbb{P})$-market was assumed to satisfy
NFLVR, Proposition 3.1 of \cite{kramkov_asymptotic_1999} implies that $\mathcal{C}$ and
$\mathcal{D}$ are convex with the following properties
\begin{align*}
v \in \mathcal{C}(1) &\iff \mathbb{E}^{\mathbb{P}}[vz] \le 1, \text{ for all } z \in \mathcal{D}(1),\\
z \in \mathcal{D}(1) &\iff \mathbb{E}^{\mathbb{P}}[vz] \le 1, \text{
for all } v \in \mathcal{C}(1).
\end{align*}
These imply that for every $i$, 
\begin{align*}
v^i \in \mathcal{C}^i(1)  &\Leftrightarrow \mathbb{E}^{\mathbb{Q}^i}[v^iz^i] \le 1, \text{for all } z^i \in \mathcal{D}^i(1),\\
z^i \in \mathcal{D}^i(1) &\Leftrightarrow \mathbb{E}^{\mathbb{Q}^i}[v^iz^i] \le 1, \text{for all } v^i \in \mathcal{C}^i(1).
\end{align*}
and thus the assumption (3.1) of \cite{mostovyi2015necessary} holds for
$\mathcal C^i(1)$ and $\mathcal D^i(1)$ under the measure
$\mathbb{Q}^i$. In addition, $\mathcal{C}^i$ and $\mathcal{D}^i$
contain at least one strictly positive element.

 Define, following
\cite{mostovyi2015necessary}, the optimization problems
$$
u(x) = \sup_{\xi \in \mathcal C^i(x)} \mathbb E^{\bQ^i}[U(\xi)]\quad
\text{and}\quad v(y) = \inf_{\eta\in \mathcal D^i(y)}\mathbb
E^{\bQ^i}[V(\eta)]. 
$$
For all $y >0$, the
finiteness of $v(y)$ follows from the assumptions of the Theorem. Furthermore, since
$x\in \mathcal C^i(x)$, we have $u(x)>-\infty$ for
all $x >0$. An
application of Theorem 3.2 of \cite{mostovyi2015necessary} then shows
that $u$ and $v$ satisfy biconjugacy relations so that in particular
$u(x) = \inf_{y>0}(v(y) + xy)$. Taking $x=1$ and substituting the
explicit expression for $\mathbb Q^i$, the proof is complete. 
}
\end{proof}

From  Theorem \ref{thm:gen-dual},  one immediately obtains more explicit
expressions for the case of power and logarithmic utility functions. 
\begin{cor} 
Fix $\gamma \in (0,1)$. Let $G$ be discrete, suppose that Assumptions \ref{basic-assum} and
\ref{assum:density_jump} hold true, and that there exists $Z\in
ELMMD(\bF,\bP)$ with $\mathbb E^{\bP} [(Z_T)^{-\frac{\gamma}{1-\gamma}}]<\infty$. Then, 
$$
\sup_{H\in \mathcal A^{\bG}_1} \mathbb E^{\bP} [(V^{1,H}_T)^\gamma] = \sum_{i}
 \left\{\inf_{Z  \in  ELMMD(\mathbb{F},\mathbb{P})}\mathbb{E}^{\mathbb{P}}
\left[ \left(Z_T \right)^{-\frac{\gamma}{1-\gamma}}1_{G = g_i}\right]\right\}^{1-\gamma}.
$$
\end{cor}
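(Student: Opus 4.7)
My plan is to specialize Theorem \ref{thm:gen-dual} to the power utility $U(x)=x^\gamma$ and then solve the resulting scalar optimization explicitly.

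Step one is to verify the hypotheses. The map $U(x)=x^\gamma$ is strictly concave and strictly increasing on $(0,\infty)$, is $C^1$, and since $\gamma-1<0$ its derivative $U'(x)=\gamma x^{\gamma-1}$ obeys $U'(0+)=+\infty$ and $U'(+\infty)=0$, so the Inada conditions hold and (i) is satisfied. A direct Legendre--Fenchel computation, maximizing $x^\gamma-xy$ in $x>0$, gives the explicit formula
$$V(y)=(1-\gamma)\,\gamma^{\gamma/(1-\gamma)}\,y^{-\gamma/(1-\gamma)}.$$
Because the $y$-prefactor is deterministic, assumption (ii) of Theorem \ref{thm:gen-dual} reduces exactly to the hypothesis $\mathbb E^{\bP}[(Z_T)^{-\gamma/(1-\gamma)}]<\infty$ for some $Z\in ELMMD(\bF,\bP)$.

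Step two is to exploit the multiplicative form of $V$. The quantity $V(yZ_T)$ factors into a $Z$-independent scalar times $Z_T^{-\gamma/(1-\gamma)}$, so the inner infimum over $Z$ in Theorem \ref{thm:gen-dual} may be performed independently of $y$. Writing
$$c_i:=\inf_{Z\in ELMMD(\bF,\bP)}\mathbb E^{\bP}\bigl[(Z_T)^{-\gamma/(1-\gamma)}\mathbf{1}_{G=g_i}\bigr],$$
which is finite by the integrability assumption, I obtain for each $i$
$$y+\inf_{Z}\mathbb E^{\bP}\bigl[V(yZ_T)\mathbf{1}_{G=g_i}\bigr] = y+(1-\gamma)\gamma^{\gamma/(1-\gamma)}y^{-\gamma/(1-\gamma)}c_i.$$

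The final step is a one-variable calculus exercise: minimizing $y\mapsto y+(1-\gamma)\gamma^{\gamma/(1-\gamma)}y^{-\gamma/(1-\gamma)}c_i$ over $(0,\infty)$, the first-order condition yields the optimizer $y^{\ast}=\gamma\,c_i^{1-\gamma}$; substituting back and simplifying via the identity $\gamma/(1-\gamma)+1=1/(1-\gamma)$, the optimal value collapses to exactly $c_i^{1-\gamma}$. Summing over $i$ and invoking Theorem \ref{thm:gen-dual} produces the claimed identity. The only mildly delicate point is tracking exponents in the final algebraic simplification so that the clean form $c_i^{1-\gamma}$ emerges; every other step is formal bookkeeping.
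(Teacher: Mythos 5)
Your proof is correct and is exactly the derivation the paper has in mind: the authors state that the corollary follows ``immediately'' from Theorem \ref{thm:gen-dual}, leaving the specialization $U(x)=x^\gamma$, the Legendre transform $V(y)=(1-\gamma)\gamma^{\gamma/(1-\gamma)}y^{-\gamma/(1-\gamma)}$, the factorization that lets the infimum over $Z$ decouple from $y$, and the one-variable minimization yielding $c_i^{1-\gamma}$ to the reader. Your algebra checks out at each step (in particular $y^\ast=\gamma c_i^{1-\gamma}$ and the split $\gamma c_i^{1-\gamma}+(1-\gamma)c_i^{1-\gamma}=c_i^{1-\gamma}$), so there is nothing to add.
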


\begin{cor} \label{cor:log:dual}
Let $G$ be discrete, suppose that Assumptions \ref{basic-assum} and
\ref{assum:density_jump} hold true, and that there exists $Z \in 
ELMMD(\bF,\bP)$ with $\mathbb E^{\bP} [\log Z_T]>-\infty$. Then
\begin{align}
\sup_{H \in \mathcal{A}^{\mathbb{G}}_1} \mathbb{E}^{\mathbb{P}}[\log
V^{1, H}_T] = &- \sum_{i} \mathbb{P}[G= g_i]
\log \mathbb{P}[G = g_i], \nonumber  \\
&+  \sum_{i} \inf_{Z  \in  ELMMD(\mathbb{F},\mathbb{P})}
\mathbb{E}^{\mathbb{P}} \left[ 1_{G = g_i} \log
\frac{1}{Z_T}\right].\nonumber
\end{align}
\end{cor}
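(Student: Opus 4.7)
The plan is to deduce this corollary as a direct specialization of Theorem \ref{thm:gen-dual} to the logarithmic utility $U(x)=\log x$. So first I would check the hypotheses of Theorem \ref{thm:gen-dual}: $\log$ is strictly concave, increasing, $C^1$ on $(0,\infty)$ and satisfies the Inada conditions. Its convex conjugate is easily computed: $V(y)=\sup_{x>0}(\log x - xy) = -\log y - 1$. The integrability assumption (ii) of Theorem \ref{thm:gen-dual} asks that for every $y>0$ there exists $Z\in ELMMD(\bF,\bP)$ with $\mathbb{E}^{\bP}[V(yZ_T)]<\infty$, i.e.\ $\mathbb{E}^{\bP}[-\log Z_T]<\infty$. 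Since $\mathbb{E}^{\bP}[-\log(yZ_T)] = -\log y - \mathbb{E}^{\bP}[\log Z_T]$, the $y$-dependence factors out and the single $Z$ provided by the hypothesis $\mathbb{E}^{\bP}[\log Z_T]>-\infty$ works for all $y>0$.

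Next I would invoke Theorem \ref{thm:gen-dual} to write
\begin{equation*}
\sup_{H\in \mathcal A^{\bG}_1} \mathbb E^{\bP} [\log V^{1,H}_T] = \sum_{i}\inf_{y>0}\left\{y+ \inf_{Z \in ELMMD(\bF,\bP)} \mathbb{E}^{\bP}\left[ (-\log(yZ_T)-1)\Ind{G = g_i}\right]\right\}.
\end{equation*}
Inside the braces I pull out the constant terms: $\mathbb{E}^{\bP}[(-\log y-1)\Ind{G=g_i}] = -(\log y + 1)\mathbb{P}[G=g_i]$, and the remaining term is $\inf_{Z}\mathbb{E}^{\bP}[\Ind{G=g_i}\log(1/Z_T)]$, which does not depend on $y$. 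Denoting this last quantity by $C_i$, the expression to minimize in $y$ becomes
\begin{equation*}
\varphi_i(y) := y - (\log y)\,\mathbb{P}[G=g_i] - \mathbb{P}[G=g_i] + C_i.
\end{equation*}

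Finally I would solve the one-dimensional optimization: $\varphi_i'(y) = 1 - \mathbb{P}[G=g_i]/y = 0$ yields $y_i^* = \mathbb{P}[G=g_i]$, which is a minimum since $\varphi_i$ is strictly convex. Substituting back,
\begin{equation*}
\varphi_i(y_i^*) = \mathbb{P}[G=g_i] - \mathbb{P}[G=g_i]\log\mathbb{P}[G=g_i] - \mathbb{P}[G=g_i] + C_i = -\mathbb{P}[G=g_i]\log\mathbb{P}[G=g_i] + C_i.
\end{equation*}
Summing over $i$ gives exactly the claimed identity. I do not anticipate a serious obstacle here: the only subtlety is the verification that one $Z$ in the hypothesis suffices for all $y$ in assumption (ii) of Theorem \ref{thm:gen-dual}, which follows from the additive structure of the logarithm; everything else is routine convex-conjugate bookkeeping and one-variable calculus.
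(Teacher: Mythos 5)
Your proposal is correct and is precisely the ``immediate'' specialization of Theorem \ref{thm:gen-dual} that the paper has in mind (the paper gives no separate proof, stating the corollary follows directly). Computing $V(y)=-\log y-1$, observing that the one given $Z$ with $\mathbb{E}^{\bP}[\log Z_T]>-\infty$ verifies hypothesis (ii) for all $y>0$, separating out the $y$-independent term $C_i$, and minimizing $y-(\log y)\mathbb{P}[G=g_i]$ at $y^*=\mathbb{P}[G=g_i]$ is exactly the intended bookkeeping.
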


To conclude this subsection, we shall compare our result for the
logarithmic utility with the
results in \cite{amendinger_additional_1998}. Let the additional expected log-utility of the
insider be denoted by
$$\Delta(\mathbb{F},\mathbb{G}): =  \sup_{H \in \mathcal{A}^{\mathbb{G}}_1}
\mathbb{E}^{\mathbb{P}}[\log V^{1, H}_T] - \sup_{H \in
\mathcal{A}^{\mathbb{F}}_1} \mathbb{E}^{\mathbb{P}}[\log V^{1,
H}_T].$$ In the approach of \cite{amendinger_additional_1998}, the quantity
$\Delta(\mathbb{F},\mathbb{G})$ is represented by the information
drift, see Definition 3.6 in their paper, and in our approach, it
can be expressed as
\begin{align}\label{eq:log_diff_discrete}
- \sum_{i= 1}^n \mathbb{P}[G= g_i] \log \mathbb{P}[G = g_i]
\nonumber &+  \sum_{i = 1}^n
\inf_{Z \in ELMMD(\mathbb{F},\mathbb{P})} \mathbb{E}^{\mathbb{P}} \left[ 1_{G = g_i} \log \frac{1}{Z_T}\right]\\
&- \inf_{Z \in ELMMD(\mathbb{F},\mathbb{P})} \mathbb{E}^{\mathbb{P}}
\left[ \log \frac{1}{Z_T}\right].
\end{align}

If the market is complete, then the two approaches end up with the
same result: the quantity in (\ref{eq:log_diff_discrete}) reduces to
the entropy of $G$, which is exactly what is stated in Theorem 4.1
of \cite{amendinger_additional_1998}.

\subsection{Example}\label{ex:discrete_poisson}
In this subsection we present an explicit example which corresponds to
 an incomplete market. A complete market example based on
the standard Brownian motion is given in \cite{chau_thesis}.

Suppose that $N^1$ and $N^2$ are two independent Poisson
processes with common intensity $\lambda=1$. We consider a financial market with the (discounted)
risky asset price $S_t = e^{N^1_t - N^2_t}$ whose dynamics is given
by
$$dS_t = S_{t-}\left( (e-1) dN^1_t + (e^{-1} - 1) dN^2_t\right), \qquad S_0 = 1, \qquad t \in [0,T].$$
The public information $\mathbb{F}$ is generated by the two Poisson
processes  $N^1, N^2$. The $(\mathbb{F}, \mathbb{P})$-market
satisfies the NFLVR condition, and the density $Z$ of any equivalent
local martingale measure is of
the form
\begin{equation}\label{ex:poisson_density_F}
dZ_t = Z_{t-} \left( (\alpha^1_t - 1)  (dN^1_t - dt) +  (\alpha^2_t - 1)  (dN^2_t - dt) \right),
\end{equation}
where $\alpha^1, \alpha^2$ are positive integrable processes
satisfying $\alpha^1_t =
e^{-1}\alpha^2_t$.

Let us define $N_t := N^1_t - N^2_t$ and assume that the insider
knows the value of $N_T$, and hence of $S_T$, at the beginning
of trading. The insider's filtration is thus $\mathcal{G}_t =
\mathcal{F}_t \vee \sigma(N_T)=  \mathcal{F}_t \vee \sigma(S_T)$.
An easy computation shows that for
all $t \in [0,T)$,
\begin{equation*}
p^x_t = \frac{\mathbb{P}[N_T = x |\mathcal{F}_t]}{\mathbb{P}[N_T = x]} =
\frac{ \sum_{k \ge 0} e^{-(T-t)} \frac{(T-t)^k}{k!} e^{-(T-t)} \frac{(T-t)^{k+x-N_t}}{(k+x-N_t)!} 1_{k + x - N_t
\ge 0}  }{ \sum_{k \ge 0} e^{-T} \frac{T^k}{k!} e^{-T} \frac{T^{x+k}}{(x + k)!}  } > 0
\end{equation*}
and for $t = T$
$$ p^x_T = \frac{1_{N_T = x}}{ \sum_{k \ge 0} e^{-T} \frac{T^k}{k!}
  e^{-T} \frac{T^{x+k}}{(x + k)!}}.$$ Since the density $p_t^x$ is
strictly positive before time $T$,  Assumption (\ref{assum:density_jump}) is
fulfilled. Theorem
\ref{T-NUPBR} allows then to conclude that the
$(\mathbb{G},\mathbb{P})$-market satisfies NUPBR.

\subsubsection{Optimal arbitrage.} By Theorem
\ref{T-SH}, the superhedging price of $1$ under $\mathbb{G}$ is
$$x_*^{\mathbb{G},\mathbb{P}}(1) = \sum_{x \in \mathbb{Z}} x_*^{\mathbb{F}, \mathbb{P}}(1_{N_T = x})1_{N_T = x}.$$
To check whether optimal arbitrage exists, we need to compute $$x_*^{\mathbb{F},
\mathbb{P}}(1_{N_T = x}) = \sup_{\overline{\mathbb{P}} \in
ELMM(\mathbb{F},\mathbb{P})} \overline{\mathbb{P}}[ N_T = x]$$ for
every $x \in \mathbb{Z}$. This is the goal of the following
proposition.

\begin{prop}
If $x \le 0$, we have $$x_*^{\mathbb{F}, \mathbb{P}}(1_{N_T = x}) =
\sup_{\overline{\mathbb{P}} \in ELMM(\mathbb{F},\mathbb{P})}
\overline{\mathbb{P}}[ N_T = x] = 1$$ and there is no optimal arbitrage. If
$x > 0$, we have that $$x_*^{\mathbb{F}, \mathbb{P}}(1_{N_T = x}) =
\sup_{\overline{\mathbb{P}} \in ELMM(\mathbb{F},\mathbb{P})}
\overline{\mathbb{P}}[ N_T = x] = \frac{1}{e^x}$$ and the optimal
arbitrage strategy is the strategy in which the agent buys $\frac{1}{S_T}$ units
of the risky asset and holds them until maturity.
\end{prop}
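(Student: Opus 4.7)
The plan is to apply superhedging duality in the $(\bF,\bP)$-market and then compute the arising supremum explicitly by constructing ELMMs that approximately concentrate the law of $N_T$ at the required level. Since NFLVR holds, for any nonnegative claim $f$ one has $x_*^{\bF,\bP}(f)=\sup_{\overline{\bP}\in ELMM(\bF,\bP)} \mathbb E^{\overline{\bP}}[f]$, so applied to $f=\mathbf 1_{N_T=x}$ the task reduces to computing $\sup_{\overline{\bP}} \overline{\bP}[N_T=x]$ for each integer $x$. Recall that any $\overline{\bP}\in ELMM$ corresponds via (\ref{ex:poisson_density_F}) to a strictly positive predictable $\alpha^1$ and the constrained choice $\alpha^2=e\alpha^1$.

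The upper bounds are immediate. For $x\le 0$ one trivially has $\overline{\bP}[N_T=x]\le 1$. For $x>0$ the supermartingale property of $S$ under any ELMM together with $S_0=1$ yields
$$e^x\,\overline{\bP}[N_T=x]=\mathbb E^{\overline{\bP}}[S_T\mathbf 1_{N_T=x}]\le\mathbb E^{\overline{\bP}}[S_T]\le 1,$$
hence $\overline{\bP}[N_T=x]\le e^{-x}$.

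For the matching lower bounds I construct explicit sequences of ELMMs. In the case $x=0$, taking the constant intensity $\alpha^1\equiv\varepsilon$ already gives $\overline{\bP}[N_T=0]\ge\overline{\bP}[N^1_T=N^2_T=0]=e^{-(1+e)\varepsilon T}\to 1$ as $\varepsilon\downarrow 0$. For the general case, let $\tau=\inf\{t\ge 0:\,N_t=x\}$ (an $\bF$-stopping time) and take the predictable intensity $\alpha^1_t=K\mathbf 1_{\{t\le\tau\}}+\delta\mathbf 1_{\{t>\tau\}}$. Under the corresponding $\overline{\bP}$, the embedded discrete-time walk of $N$ before $\tau$ has down-step probability $e/(e+1)$ and up-step probability $1/(e+1)$ at each jump, with total jump rate $(1+e)K$; as $K\to\infty$ the number of jumps on $[0,T]$ diverges, so $\overline{\bP}[\tau<T]$ converges to the probability that this biased random walk started at $0$ ever visits $x$, which by the classical gambler's ruin formula equals $1$ for $x\le 0$ and $(1/e)^{x}=e^{-x}$ for $x>0$. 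Conditionally on $\tau<T$, no further jump occurs on $(\tau,T]$ with probability at least $e^{-(1+e)\delta T}\to 1$ as $\delta\downarrow 0$. Combining, $\overline{\bP}[N_T=x]\to 1$ when $x\le 0$ and $\overline{\bP}[N_T=x]\to e^{-x}$ when $x>0$, matching the upper bounds.

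Finally, to identify the optimal arbitrage strategy: on $\{N_T=x\}\in\mathcal G_0$ the insider knows $S_T=e^x$, so buying $1/S_T=e^{-x}$ shares at time $0$ costs $e^{-x}S_0=e^{-x}$ and produces terminal wealth $e^{-x}S_T=1$, attaining the superhedging price $e^{-x}<1$ when $x>0$; for $x\le 0$ simply holding one unit of cash is optimal. The main delicate point is to verify that each constructed density is a genuine uniformly integrable martingale defining an ELMM (which holds because $\alpha^1$ and $\alpha^2$ are bounded, so standard Girsanov criteria for point processes apply) and to carry out the first-passage analysis of the embedded biased random walk rigorously, using explicit Poisson tail estimates for the number of jumps on $[0,T]$ to justify the interchange of the two limits $K\to\infty$ and $\delta\downarrow 0$.
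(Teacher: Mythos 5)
Your proposal is correct and follows essentially the same route as the paper: same upper bounds (trivial for $x\le 0$, Markov/supermartingale inequality for $x>0$), same construction of ELMMs with a two-phase predictable intensity $\alpha^1_t = K\mathbf 1_{t\le\tau}+\delta\mathbf 1_{t>\tau}$, same limiting argument $K\to\infty$, $\delta\downarrow 0$, and the same buy-and-hold identification. The only cosmetic difference is that you invoke the classical gambler's-ruin first-passage formula for the embedded skeleton walk, whereas the paper derives the identity $f(x)=e^{-x}$ by writing out the one-step recursion $f(x)=\frac{e}{1+e}f(x+1)+\frac{1}{1+e}f(x-1)$ and solving it (with an auxiliary lemma on the ratio of independent exponentials to compute the jump probabilities); these are the same computation.
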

\begin{proof}
First, we consider the case $x\le 0$. Let us define $\tau = \inf \{
t: N_t = x \}$. We fix two constants $M>m>0$ and choose  $\alpha^1_t
= M\mathbf 1_{t \le \tau} + m \mathbf 1_{t>\tau}$. The process
$\alpha^1_t$, and consequently also $\alpha^2_t$, is thus bounded
implying that for the density $Z$ one has
$\mathbb{E}^{\mathbb{P}}[Z_T]=1$ (see \cite[Theorem VI.T4]{bremaud_point_1981}).
Denote by $\overline{\mathbb{P}}^{M,m}$ the corresponding martingale
measure.
For this measure, the following inequality holds true.
$$\sup_{\overline{\mathbb{P}} \in ELMM(\mathbb{F},\mathbb{P})}
\overline{\mathbb{P}}[ N_T = x] \ge
\mathbb{E}^{\overline{\mathbb{P}}^{M,m}}[1_{N_T = x}]  \geq
\overline{\mathbb{P}}^{M,m}[\tau \le T; N^1_t = N^1_\tau, N^2_t =
N^2_\tau\ \forall t\in [\tau,T]]. $$ By the strong Markov property,
$N^1_{\tau+s}-N^1_\tau$ and $N^{2}_{\tau+s}-N^2_\tau$ are
independent Poisson processes with intensities $m$ and $em$,
independent from $\mathcal F_\tau$. Therefore,
\begin{align*}
\overline{\mathbb{P}}^{M,m}[\tau \le T; N^1_t = N^1_\tau, N^2_t =
N^2_\tau\ \forall t\in [\tau,T]]&\geq
\overline{\mathbb{P}}^{M,m}[\tau \le T] \overline{\mathbb{P}}^{m}[
N^1_t = 0, N^2_t =0\ \forall t\in [0,T]] \\ &=
\overline{\mathbb{P}}^{M,m}[\tau \le T] e^{-m(1+e)T},
\end{align*}
where $\overline{\mathbb P}^m$ is the probability measure under which $N^1$ is a
Poisson process with intensity $m$ and $N^2$ is a Poisson process with
intensity $em$. On the other hand, up to time $\tau$, $N^1$ and $N^2$
are independent Poisson processes with intensities $M$ and
$eM$. Therefore,
$$
\overline{\mathbb{P}}^{M,m}[\tau\le T] = \overline{\mathbb{P}}^{M}\left[ \inf_{0\le t
    \le T}(N^1_t - {N}^2_t) \le x \right] = \overline{\mathbb{P}}^{1}\left[ \inf_{0\le t
    \le MT}(N^1_t - {N}^2_t) \le x \right]
$$
Letting $m$ go to zero and $M$ go to infinity and using the dominated convergence theorem, we
obtain
$$\sup_{\overline{\mathbb{P}} \in ELMM(\mathbb{F},\mathbb{P})} \overline{\mathbb{P}}[ N_T = x]
\ge \overline{\mathbb{P}}^1\left[ \inf_{ t \ge 0}(N^1_t - {N}^2_t) \le x \right] = 1,$$
because $N^1_t - {N}^2_t \to - \infty$ under $\overline{\mathbb P}^1$ as $t \to \infty$. So,
the first statement holds true.

Coming next to the case $x >0$, we notice that $e^{-x}$ is an upper
bound for the supremum. Indeed, for any ELMM
$\overline{\mathbb{P}}$, it holds that
$$ \overline{\mathbb{P}}[ N_T = x] \le \overline{\mathbb{P}}[ S_T \ge e^x] \le
\frac{\mathbb{E}^{\overline{\mathbb{P}}}[S_T]}{e^x} \le
\frac{1}{e^x}.$$ Repeating the computations as in the first case, we
obtain
\begin{multline*}\sup_{\overline{\mathbb{P}} \in ELMM(\mathbb{F},\mathbb{P})}
\overline{\mathbb{P}}[ N_T = x] \ge
\mathbb{E}^{\overline{\mathbb{P}}^{M,m}}[1_{N_T = x}] =
\overline{\mathbb{P}}^{M}[\tau \le T] e^{-m(1+e)T} \\=
\overline{\mathbb{P}}^1\left[ \sup_{0\le t \le MT}(N^1_t - {N}^2_t)
\ge x \right]e^{-m(1+e)T}.
\end{multline*} It thus suffices
to show that
$$f(x):=\overline{\mathbb{P}}^1\left[ \sup_{ t \ge 0}(N^1_t - {N}^2_t) \ge x \right] = \frac{1}{e^x}.$$
Let $\tau_1, \tau_2$ be the first jump times of $N^1$ and ${N}^2$,
respectively. Because $\tau_1 \sim Exp(1)$ and $\tau_2 \sim Exp(e)$
are independent under $\overline{\mathbb{P}}^1$, the random variable
$\frac{\tau_1}{e \tau_2}$ has the density $\frac{1}{(1+t)^2}$,
thanks to Lemma \ref{lem:ratio_dis_exp}, and thus,
$$\overline{\mathbb{P}}^1[\tau_1 < \tau_2]= \overline{\mathbb{P}}^1\left[ \frac{\tau_1}{e\tau_2} < \frac{1}{e} \right]
= \int\limits_0^{1/e} {\frac{1}{(1+t)^2}dt} =   \frac{1}{1+e}.$$
From its definition, we have $f(0)=1$ and for $x \ge 1$ it then
follows that
\begin{align*}
f(x) &= \overline{\mathbb{P}}^1\left[\sup_{ t \ge 0}(N^1_t - {N}^2_t) \ge x |\tau_1 > \tau_2 \right] \mathbb{P}[\tau_1 > \tau_2]\\
&+  \overline{\mathbb{P}}^1\left[\sup_{ t \ge 0}(N^1_t - {N}^2_t) \ge x |\tau_1 \le \tau_2 \right] \mathbb{P}[\tau_1 \le \tau_2]\\
&= \frac{ef(x+1)}{1+e}+\frac{f(x-1)}{1+e}.
\end{align*}
Therefore, we obtain $f(x+1) - f(x) = \frac{f(x) - f(x-1)}{e}$ and
thus
$$f(x) = 1-(1-f(1)) \frac{1-e^{-x}}{1 - e^{-1}}.$$
Because $\lim_{x \to \infty} f(x) = 0$, we have that $f(1) = e^{-1}$
and then $f(x) = e^{-x}$.

Now we show that the buy and hold strategy is optimal. Because the
insider knows the value of $S_T=e^{N_T}$, the buy and hold strategy,
consisting of $\frac{1}{S_T}$ units of the risky asset,
superreplicates the claim $1$. In fact
$$\frac{1}{S_T} + \frac{1}{S_T}\int\limits_0^T {1dS_u} = 1.$$
For this the insider needs the initial capital $e^{-x}$ on the event
$N_T = x$.
\end{proof}

{
\subsubsection{Expected log-utility.}\label{S.3.4.2} By Corollary
\ref{cor:log:dual} the expected log-utility of the insider is
\begin{align*}
\sup_{H \in \mathcal{A}^{\mathbb{G}}_1} \mathbb{E}^{\mathbb{P}}[\log V^{1, H}_T]
= - \sum_{ x \in \mathbb{Z}} \mathbb{P}[N_T = x] \log \mathbb{P}[N_T = x]
-  \sum_{x \in \mathbb{Z}} \sup_{Z \in ELMMD(\mathbb{F},\mathbb{P})} \mathbb{E}^{\mathbb{P}} \left[ 1_{N_T = x} \log Z_T\right].
\end{align*}
Choosing a specific strategy $H$ in the left-hand side, we obtain a lower bound
for the log utility, and a specific equivalent martingale measure
density $Z$ in
the right-hand side provides an upper bound. We are going to construct
$H$ and $Z$ for which the two bounds coincide. As a preliminary step, we are going to evaluate the
intensities of $N^1$ and $N^2$ under $\mathbb G$.
\paragraph{Intensities of $N^1$ and $N^2$ under $\mathbb G$}
Let $\lambda^{\mathbb{G},1}, \lambda^{\mathbb{G},2}$ be the
intensities of $N^1, N^2$ under $\mathbb{G}$, respectively. Introduce a
further larger filtration $\mathcal{H}_t = \mathcal{F}_t \vee
\sigma(N^1_T, N^2_T)$. Under $\mathbb{H}$, we obtain that
\begin{align}\label{ex:density_poisson_H}
dN^1_t - \frac{N^1_T - N^1_t}{T-t}dt, \qquad  dN^2_t - \frac{N^2_T - N^2_t}{T-t}dt
\end{align}
are martingales, see Theorem VI.3 of \cite{protter_stochastic_2003}. Now,
Lemma \ref{lem:compesator_small_filtration} of the Appendix implies
that the processes
\begin{align}\label{ex:density_poisson_G}
N^1_t - \int_0^t\mathbb{E}\left[ \left.   \frac{N^1_T - N^1_s}{T-s} \right| \mathcal{G}_s \right] ds,
\qquad  N^2_t - \int_0^t\mathbb{E}\left[ \left.  \frac{N^2_T - N^2_s}{T-s} \right| \mathcal{G}_s \right]ds
\end{align}
are martingales under $\mathbb{G}$ and so
\begin{align}
&\lambda^{\mathbb{G},1}_t = \mathbb{E}\left[ \left. \frac{N^1_T -
N^1_t}{T-t} \right| \mathcal{G}_t \right]\label{intenslambda}\\
&=\frac{1}{T-t}\,\mathbb{E}\left[N^1_{T} - N^1_{t}\mid N^1_{T} - N^1_t-
  N^2_{T} + N^2_t\right]=\frac{1}{T-t}f^1_t(N^1_{T} - N^1_t-
  N^2_{T} + N^2_t),\notag
\end{align}
where 
\begin{align*}
f^1_t(y) = \mathbb{E}[N^1_{T-t}| N^1_{T-t} - N^2_{T-t} = y]= \frac{
\mathbb{E}[N^1_{T-t}1_{N^1_{T-t} - N^2_{T-t} =
y}]}{\mathbb{P}[N^1_{T-t} - N^2_{T-t} = y]}.
\end{align*}
This computation can be made explicit. For example, if $y > 0$,  
\begin{align*}
f^1_t(y) &= \frac{ \sum_{k \ge 0} (y+k) \mathbb{P}[N^2_{T-t} = k] \mathbb{P}[N^1_{T-t} = y +k]  }
{\sum_{k \ge 0}\mathbb{P}[N^2_{T-t} = k] \mathbb{P}[N^1_{T-t} = y + k]}\\
&= \frac{ \sum_{k \ge 0}  (y + k )\frac{(T-t)^{2k+y}}{k!(y+k)!}  } { \sum_{k \ge 0} \frac{(T-t)^{2k +y}}{k! (k+y)!}}
= \frac{(T-t)I_{y-1}(2(T-t))}{I_y(2(T-t))},
\end{align*}
where $I_{\alpha}(x)$ is the modified Bessel functions of the first kind\footnote{The modified Bessel functions of the first kind is defined by the series representation $I_{\alpha}(x)
= \sum_{m \ge 0}\frac{1} {m!\Gamma(m+\alpha+1)}\left( \frac{x}{2}
\right)^{2m+\alpha}$, for a real number $\alpha$ which is not a
negative integer, and satisfies $I_{-n}(x) = I_n(x)$ for integer $n$}.
A similar computation for $y\leq 0$ shows that for all integer $y$, 
$$
f^1_t(y) = \frac{(T-t)I_{|y-1|}(2(T-t))}{I_{|y|}(2(T-t))},
$$
so that finally
$$
\lambda^{\bG,1}_t =
\frac{I_{|N_T-N_t-1|}(2(T-t))}{I_{|N_T-N_t|}(2(T-t))}\quad
\text{and}\quad \lambda^{\bG,2}_t =
\frac{I_{|N_T-N_t+1|}(2(T-t))}{I_{|N_T-N_t|}(2(T-t))}.
$$



\paragraph{Upper bound by duality}
From equation
(\ref{ex:poisson_density_F}), we get
\begin{align*}
\mathbb{E}^{\mathbb{P}}[1_{N_T = x}\log Z_T] & = \mathbb{E}^{\mathbb{P}}\left[ 1_{N_T = x}
\mathbb{E}^{\mathbb{P}} \left[ \left.  \sum_{i=1}^{2} \int\limits_0^T {\log \alpha^i_t dN^i_t
- \int\limits_0^T(\alpha^i_t - 1)dt} \right|  \mathcal{G}_0\right]  \right].
\end{align*}
Since we are only interested in an upper bound, we can restrict the
discussion to equivalent local martingale densities for which 
\begin{align}
\mathbb E\left[\int_0^t \log \alpha^i_s (dN^i_s - \lambda^{\bG,i}_s)
  \Big| \mathcal G_0\right]=0\quad \text{a.s.}\label{martcond}
\end{align}
In this case, the above expectation becomes
\begin{align}\label{ex:possion_expectation}
\mathbb{E}^{\mathbb{P}}[1_{N_T = x}\log Z_T] &= \mathbb{E}^{\mathbb{P}}\left[ 1_{N_T = x} \sum_{i=1}^{2}
\int\limits_0^T {\left( \lambda^{\mathbb{G},i}_t\log \alpha^i_t -
(\alpha^i_t - 1)\right) dt} \right]\\
& = \mathbb{E}^{\mathbb{P}}\left[ 1_{N_T = x} \int\limits_0^T {\left( \lambda^{\mathbb{G},1}_t\log \alpha^1_t -
\alpha^1_t + \lambda^{\mathbb{G},2}_t\log (e\alpha^1_t) -
e\alpha^1_t \right) dt} \right]\notag
\end{align}
The expression
$$
\lambda^{\mathbb{G},1}_t\log \alpha^1_t -
\alpha^1_t + \lambda^{\mathbb{G},2}_t\log (e\alpha^1_t) -
e\alpha^1_t 
$$
is concave as function of $\alpha^1_t$, 
with unique maximum attained at 
$$\frac{\lambda^{\mathbb{G},1}_t
+\lambda^{\mathbb{G},2}_t}{e+1}
$$
Fix $\varepsilon>0$ and let $\alpha^1_t = \frac{\lambda^{\mathbb{G},1}_t
+\lambda^{\mathbb{G},2}_t}{e+1} 1_{0\leq t\leq T-\varepsilon} +
1_{T-\varepsilon < t \leq T}$, and $\alpha^2_t = e\alpha^1_t$. These
values are bounded on $[0,T]$ implying that the corresponding density
satisfies $\mathbb{E}^{\mathbb{P}}[Z_T]=1 $ and \eqref{martcond}. 
Plugging it into (\ref{ex:possion_expectation}), we obtain
\begin{align*}
&\sup_{Z \in ELMMD(\mathbb{F},\mathbb{P})} \mathbb{E}^{\mathbb{P}} \left[ 1_{N_T = x} \log Z_T\right] \geq \mathbb{E}^{\mathbb{P}}\left[ 1_{N_T = x}  \int\limits_{T-\varepsilon}^T
{\left(-1 -e +  \lambda_t^{\bG,2} \right) dt}
\right]\\
&+  \mathbb{E}^{\mathbb{P}}\left[ 1_{N_T = x}  \int\limits_0^{T-\varepsilon}
{\left( \log \left( \frac{\lambda^{\mathbb{G},1}_t +
\lambda^{\mathbb{G},2}_t}{e+1} \right)  (\lambda^{\mathbb{G},1}_t +
\lambda^{\mathbb{G},2}_t) - \lambda^{\mathbb{G},1}_t + 2\right) dt}
\right].
\end{align*}
When $\varepsilon\to 0$, the expression under the two expectations is
bounded from below and 
converges monotonically to 
$$
1_{N_T = x}  \int\limits_0^{T}
{\left( \log \left( \frac{\lambda^{\mathbb{G},1}_t +
\lambda^{\mathbb{G},2}_t}{e+1} \right)  (\lambda^{\mathbb{G},1}_t +
\lambda^{\mathbb{G},2}_t) - \lambda^{\mathbb{G},1}_t + 2\right) dt},
$$ 
which shows that 
\begin{align}
&\sum_{x\in \mathbb Z}\sup_{Z \in ELMMD(\mathbb{F},\mathbb{P})} \mathbb{E}^{\mathbb{P}} \left[ 1_{N_T = x} \log Z_T\right] \notag\\
&\geq  \mathbb{E}^{\mathbb{P}}\left[\int\limits_0^{T}
{\left( \log \left( \frac{\lambda^{\mathbb{G},1}_t +
\lambda^{\mathbb{G},2}_t}{e+1} \right)  (\lambda^{\mathbb{G},1}_t +
\lambda^{\mathbb{G},2}_t) - \lambda^{\mathbb{G},1}_t + 2\right) dt}
\right],\label{dualbound}
\end{align}
and in particular the right-hand side is finite (by the duality
result). 

\subsubsection{A lower bound by direct computation}
Let
$\pi^{\mathbb{G}}$ be a ${\mathbb{G}}$-predictable strategy, which
denotes the ratio invested in the risky asset, and let
$V^{1,\pi^{\mathbb{G}}}$ be the corresponding self-financing wealth
process that starts from $x=1$ and whose dynamics are
$$\frac{dV^{1,\pi^{\mathbb{G}}}_t}{V^{1,\pi^{\mathbb{G}}}_{t-}}
= \pi^{\mathbb{G}}_t \frac{dS_t}{S_{t-}} = \pi^{\mathbb{G}}_t \left( (e-1) dN^1_t + (e^{-1} - 1) dN^2_t\right).$$
The logarithm of $V^{1, \pi^{\mathbb{G}}}$ satisfies
\begin{align*}
d\log V_t^{1, \pi^{\mathbb{G}}} &=  \log \left( 1 + (e-1)\pi^{\mathbb{G}}_t \right) dN^1_t
+ \log \left( 1 + (e^{-1}-1)\pi^{\mathbb{G}}_t \right) dN^2_t\\
&= \left(   \log \left( 1 + (e-1)\pi^{\mathbb{G}}_t \right) \lambda^{\mathbb{G},1}_t
+ \log \left( 1 + (e^{-1}-1)\pi^{\mathbb{G}}_t \right)
\lambda^{\mathbb{G},2}_t\right) dt + \text{$\bG$-local martingale}.
\end{align*}
Since we are only looking for a lower bound, we can restrict the
discussion to strategies for which the coefficients in front of $N^1$
and $N^2$ are bounded and the local martingale above is a true
martingale. 
Taking the expectation of both sides, we obtain that
\begin{align}\mathbb{E}^{\mathbb{P}}\left[ \log V_T^{1,\pi^{\mathbb{G}}} \right]
= \mathbb{E}^{\mathbb{P}} \left[ \int\limits_0^T
{\left(   \log \left( 1 + (e-1)\pi^{\mathbb{G}}_t \right) \lambda^{\mathbb{G},1}_t
+ \log \left( 1 + (e^{-1}-1)\pi^{\mathbb{G}}_t \right) \lambda^{\mathbb{G},2}_t\right) dt} \right].\label{expport}\end{align}
The expression under the integral sign is concave in $\pi^\bG_t$ and
is maximized by 
$$ \frac{ \lambda^{\mathbb{G},1}_t(e-1)
+ \lambda^{\mathbb{G},2}_t(e^{-1}-1) } {
(e-1)(1-e^{-1})(\lambda^{\mathbb{G},1}_t + \lambda^{\mathbb{G},2}_t)
}\in \left(-\frac{1}{e-1} ,\frac{1}{1-e^{-1}}\right).$$
Fix $\varepsilon\in(0,1)$, define $\tau_\varepsilon = \inf\{t>0:
\lambda^{\mathbb{G},1}_t\notin(\varepsilon,\varepsilon^{-1})\
\text{or}\
\lambda^{\mathbb{G},2}_t\notin(\varepsilon,\varepsilon^{-1})\}\wedge
T$ and let
$$
\pi^\bG_t = \frac{ \lambda^{\mathbb{G},1}_t(e-1)
+ \lambda^{\mathbb{G},2}_t(e^{-1}-1) } {
(e-1)(1-e^{-1})(\lambda^{\mathbb{G},1}_t + \lambda^{\mathbb{G},2}_t)
}1_{t\leq \tau_\varepsilon}. 
$$

Substituting this into \eqref{expport}, we then get
\begin{align*}\mathbb{E}^{\mathbb{P}}\left[ \log V_T^{1,\pi^{\mathbb{G}}} \right]
&= \mathbb{E}^{\mathbb{P}} \left[ \int\limits_0^{\tau_\varepsilon}
{\left(    \lambda^{\mathbb{G},1}_t \log \frac{ (e+1)\lambda^{\mathbb{G},1}_t } {
\lambda^{\mathbb{G},1}_t + \lambda^{\mathbb{G},2}_t
}
+ \lambda^{\mathbb{G},2}_t\log \frac{ (1+e^{-1})\lambda^{\mathbb{G},2}_t } {
\lambda^{\mathbb{G},1}_t + \lambda^{\mathbb{G},2}_t
} \right) dt} \right]\\
& = - \mathbb{E}^{\mathbb{P}} \left[ \int\limits_0^{\tau_\varepsilon}
{\left(    (\lambda^{\mathbb{G},1}_t +\lambda^{\mathbb{G},1}_t)  \log \frac {
\lambda^{\mathbb{G},1}_t + \lambda^{\mathbb{G},2}_t
}{ e+1 } - \lambda^{\mathbb{G},1}_t + 2\right)dt}\right]\\ &+ \mathbb{E}^{\mathbb{P}} \left[ \int\limits_0^{\tau_\varepsilon}
{\left( \lambda^{\mathbb{G},1}_t \log \lambda^{\mathbb{G},1}_t +
    \lambda^{\mathbb{G},2}_t \log (\lambda^{\mathbb{G},1}_t)- \lambda^{\mathbb{G},1}_t -\lambda^{\mathbb{G},2}_t+2\right) dt} \right].
\end{align*}
When $\varepsilon\to 0$, $\tau_\varepsilon\to T$ and the first
expectation above clearly converges to \eqref{dualbound}. To finish
the proof, it remains to check that 
$$
\mathbb{E}^{\mathbb{P}} \left[ \int\limits_0^{T}
{\left( \lambda^{\mathbb{G},1}_t \log \lambda^{\mathbb{G},1}_t +
    \lambda^{\mathbb{G},2}_t \log (\lambda^{\mathbb{G},1}_t)-
    \lambda^{\mathbb{G},1}_t -\lambda^{\mathbb{G},2}_t+2\right) dt}
\Big| N_T = x\right] = - \log \mathbb{P}[N_T = x],
$$
for $x\in \mathbb Z$. 

To this end, let $M_t = \mathbb P[N_T = x|\mathcal
F_t]$. By direct computation, we obtain that $M_t = e^{-2(T-t) }
I_{|x-N_t|}(2(T-t))$. The change of variable formula then yields
$$
M_T = M_0 +\int_0^T e^{-2(T-t)}(I_{|x-N_t-1|} - I_{|x-N_t|})(dN^1_t
- dt) + \int_0^T e^{-2(T-t)}(I_{|x-N_t+1|} - I_{|x-N_t|})(dN^2_t
- dt),
$$
and further, on the event $N_T = x$, 
$$
\log M_T = \log M_0 +\int_0^T \log \lambda^{\bG,1}_tdN^1_t
 + \int_0^T \log \lambda^{\bG,2}_t dN^2_t - \int_0^T\left(\lambda^{\bG,1}_t+ \lambda^{\bG,2}_t-2\right)dt. 
$$
Finally, we may conclude that 
\begin{align*}
&-\log \mathbb P[N_T = x] = -\log M_0 = -\mathbb E[\log M_0 | N_T =
x]\\
& =  \mathbb E\left[\int_0^T \log \lambda^{\bG,1}_tdN^1_t
 + \int_0^T \log \lambda^{\bG,2}_t dN^2_t -
 \int_0^T\left(\lambda^{\bG,1}_t+ \lambda^{\bG,2}_t-2\right)dt
 \Big|N_T = x\right]\\
& = \mathbb E\left[\int_0^T \log \lambda^{\bG,1}_t \lambda^{\bG,1}_t dt
 + \int_0^T \log \lambda^{\bG,2}_t \lambda^{\bG,2}_t dt -
 \int_0^T\left(\lambda^{\bG,1}_t+ \lambda^{\bG,2}_t-2\right)dt
 \Big|N_T = x\right]
\end{align*}
provided we may show that 
$$
\mathbb E\left[\int_0^T \log^2 \lambda_t^{\bG,1} \lambda_t^{\bG,1}
  dt\right]<\infty. 
$$
To see this, observe that from the asymptotics of the Bessel function
it follows that 
$$
|\log \lambda^{\bG,1}_t|\leq C |\log (T-t)| \log (1+|N_T-N_t|).
$$
Plugging this into the above estimate and using \eqref{intenslambda}, we finally have 
$$
\mathbb E\left[\int_0^T \log^2 \lambda_t^{\bG,1} \lambda_t^{\bG,1} dt\right]\leq
  \mathbb E\left[\int_0^T C^2 |\log^2 (T-t)| \log^2 (1+|N_T-N_t|) \frac{N^1_T-N^1_t}{T-t}
  dt\right],
$$
which is easily shown to be bounded (use Cauchy-Schwarz inequality
plus the estimate $\mathbb E[N^\alpha_T] = O(T)$ for $\alpha>0$). }

\section{Initial enlargement with a general random variable}\label{section:G_continuous}

We now consider the case when the $\F_T-$measurable random variable
$G$, which represents the insider information, is not purely atomic
as it was the case in the previous section.

It is usually observed that in this case the value of logarithmic
utility of the insider is infinite, as for example in Theorem 4.4 of
\cite{pikovsky_anticipative_1996} where the insider has exact information about at least one
stock's terminal price, or in \cite{amendinger_additional_1998} where  the insider's additional
expected logarithmic utility is related to the entropy of $G$. This
difficulty appears at $T$, the time when the conditional law of $G$
given $\mathcal{F}_T$ is a Dirac measure and hence Jacod's condition
fails.

Since, as we saw before, NUPBR is the minimal condition for
well-posed expected utility maximization problems, it is useful to
have conditions for NUPBR to hold. {\cite{acciaio_arbitrage_2014} and \cite{aksamit2015optional} give a sufficient
condition so that NUPBR holds under $\mathbb{G}$ in infinite time
horizon settings.} Their idea is that, if the processes $p^x$ and $S$
do not jump to zero at the same time, then one can construct an
equivalent (local) martingale deflator (ELMD) under $\mathbb{G}$ and
it is known that the existence of such an ELMD implies NUPBR.
However, in finite horizon settings, it may happen that the process
$p^x$ is not well-defined at $T$, making it impossible to define an
ELMD because $p^x$ appears in the denominator of such an ELMD.

In this paper we consider a finite horizon $T>0$ and, in the present
section, we shall study arbitrage properties and expected utility
maximization for the case of a non-atomic $G$. Before coming to
utility maximization, in the next subsection we show that
if $G$ is non-atomic and the set of local martingale densities
$ELMMD(\mathbb{F},\mathbb{P})$ is uniformly integrable, then there
always exists an arbitrage of the first kind and so NUPBR fails. It shows in particular that, if the
$(\mathbb{F}, \mathbb{P})$ market is complete, then NUPBR always
fails under $\mathbb{G}$.
This negative message implies that the non-uniform integrability of
$ELMMD(\mathbb{F}, \mathbb{P})$ is a necessary condition for NUPBR
under $\mathbb{G}$. This result is then accompanied by two examples:
one where the set of local martingale densities
$ELMMD(\mathbb{F},\mathbb{P})$ is uniformly integrable, and one where
it is not. In the second example 
the expected log-utility is finite, which gives (see Proposition
\ref{L-NUPBR}) a sufficient condition for
NUPBR to hold. On the other hand, in this second example
the insider has non-scalable arbitrage opportunities so that
NFLVR cannot hold. Initial filtration
enlargement with a non-atomic $\mathcal F_T$-measurable random
variable may therefore lead to viable market models which allow for economically
meaningful unscalable arbitrages.

\subsection{Arbitrage of the first kind}\label{section:A1_G}
The following result does not require Assumption
\ref{assum:density_jump}. 
\begin{prop}\label{pro:A1_G}(Arbitrage of the first kind)
Assume that
\begin{itemize}
\item The law of $G$ is not purely atomic,
\item The set of densities of equivalent local martingale measures $ELMMD(\mathbb{F},\mathbb{P})$ is uniformly integrable.
\end{itemize}
Then there exists an arbitrage of the first kind for the insider.
\end{prop}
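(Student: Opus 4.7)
The idea is to exploit the non-atomic part of the law of $G$ to build, for every $\varepsilon>0$, a $\bG$-predictable $\varepsilon$-admissible strategy superhedging a fixed non-trivial positive claim. The candidate for the arbitrage of the first kind will be $\xi = \mathbf{1}_{\{G\in B\}}$, where $B$ is a set on which the law of $G$ is non-atomic. Uniform integrability of $ELMMD(\bF,\bP)$ will convert ``small probability events'' into ``cheap $\bF$-superhedges'', and the insider's knowledge of $G$ at time $0$ will allow us to glue these $\bF$-superhedges into one $\bG$-strategy with arbitrarily small initial cost.

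\paragraph{Step 1: partition the continuous part of $G$.} Since the law of $G$ is not purely atomic, its continuous part is nontrivial. Let $B\subset\R$ be the complement of the (at most countable) set of atoms of the law of $G$; then $\bP(G\in B)>0$ and the law of $G$ restricted to $B$ is non-atomic. For each $n\geq 1$, use the standard fact that a non-atomic finite measure admits partitions of arbitrary prescribed proportions to pick Borel sets $B_1^n,\dots,B_n^n$ partitioning $B$ with $\bP(G\in B_j^n)=\bP(G\in B)/n$.

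\paragraph{Step 2: cheap $\bF$-superhedges of each slice.} The $(\bF,\bP)$-market satisfies NFLVR, so the classical superhedging duality applies to the bounded claims $\mathbf{1}_{\{G\in B_j^n\}}\in\F_T$:
$$
a_j^n:=x_*^{\bF}(\mathbf{1}_{\{G\in B_j^n\}})=\sup_{Z\in ELMMD(\bF,\bP)}\mathbb{E}^{\bP}\!\left[Z_T\,\mathbf{1}_{\{G\in B_j^n\}}\right],
$$
and for any $\delta>0$ there is an $\bF$-predictable strategy $H^{j,n}$ and a constant $\tilde a_j^n\leq a_j^n+\delta$ with $\tilde a_j^n+(H^{j,n}\cdot S)_t\geq 0$ for all $t$ and $\tilde a_j^n+(H^{j,n}\cdot S)_T\geq \mathbf{1}_{\{G\in B_j^n\}}$ a.s. Now fix $\varepsilon>0$. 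By uniform integrability of $ELMMD(\bF,\bP)$, there exists $\eta>0$ such that $\bP(A)<\eta$ implies $\sup_Z\mathbb E^{\bP}[Z_T\mathbf 1_A]<\varepsilon/3$. Choose $n$ large enough so that $\bP(G\in B)/n<\eta$; then $a_j^n<\varepsilon/3$ uniformly in $j$, and we may arrange $\tilde a_j^n<\varepsilon$ for all $j$.

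\paragraph{Step 3: assemble a $\bG$-strategy and conclude.} Because each $\{G\in B_j^n\}$ lies in $\G_0$, the process
$$
\tilde H^n := \sum_{j=1}^n H^{j,n}\,\mathbf{1}_{\{G\in B_j^n\}}
$$
is $\bG$-predictable. On $\{G\in B_j^n\}$ we have $\tilde H^n=H^{j,n}$, so $\varepsilon+(\tilde H^n\cdot S)_t\geq \varepsilon-\tilde a_j^n\geq 0$ for every $t$, and at $T$,
$$
\varepsilon+(\tilde H^n\cdot S)_T \;\geq\; \tilde a_j^n+(H^{j,n}\cdot S)_T\;\geq\;\mathbf{1}_{\{G\in B_j^n\}}.
$$
On $\{G\notin B\}$ the strategy is $0$, so trivially $\varepsilon+(\tilde H^n\cdot S)_\cdot\geq 0$ and $\varepsilon+(\tilde H^n\cdot S)_T\geq 0=\mathbf{1}_{\{G\in B\}}$. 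Thus $\tilde H^n\in\mathcal{A}^{\bG}_\varepsilon$ superhedges $\xi=\mathbf{1}_{\{G\in B\}}$ from initial capital $\varepsilon$. Since $\varepsilon>0$ was arbitrary and $\xi\geq 0$ with $\bP(\xi>0)=\bP(G\in B)>0$, $\xi$ is an arbitrage of the first kind in $\bG$.

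\paragraph{Main obstacle.} The only delicate point is avoiding any appeal to Theorem \ref{T-SH} (which requires Assumption \ref{assum:density_jump}, not assumed here): instead of invoking the $\bG$-superhedging duality for the discretization $G_n$, one works directly with the bounded $\bF$-claims $\mathbf{1}_{\{G\in B_j^n\}}$ and uses only the $\bG_0$-measurability of the partition to splice the strategies. Verifying that the uniform integrability of $ELMMD(\bF,\bP)$ controls $\sup_Z\mathbb E[Z_T\mathbf 1_A]$ on sets of small probability — and that this bound is uniform in the partition slice $j$ — is the quantitative core of the argument.
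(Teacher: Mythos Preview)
Your proof is correct and follows essentially the same route as the paper's: choose a non-atomic set $B$, partition it into $n$ equiprobable slices, superhedge each slice $\mathbf 1_{\{G\in B_j^n\}}$ in the $(\bF,\bP)$-market via the superhedging duality, glue the $\bF$-strategies into a single $\bG$-predictable strategy using $\G_0$-measurability of the indicators, and use uniform integrability of $ELMMD(\bF,\bP)$ to force the initial cost below any prescribed $\varepsilon$. The only cosmetic difference is that the paper bounds the slice prices via the truncation formulation of UI (split $Z_T = Z_T\mathbf 1_{\{Z_T>K\}}+Z_T\mathbf 1_{\{Z_T\le K\}}$ and estimate by $\varepsilon + Kc/n$), whereas you invoke the equivalent ``uniformly small on sets of small probability'' formulation; and you are a bit more explicit than the paper about the $\delta$-slack in attaining the superhedging price and about checking $\varepsilon$-admissibility at intermediate times.
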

\begin{proof}
Let us choose $B \subset \mathbb{R}$ such that $B$ does not contain
any atoms of $G$ and $\mathbb{P}[G\in B] = c > 0$. For each $n$, let
$(B^n_i)_{1 \le i \le n}$ be a partition of $B$ such that
$\mathbb{P}[G \in B^n_i] = c/n$. We are going to show that $1_{G
\in B}$ is an arbitrage of the first kind, in the
sense of Definition \ref{D3}. First,
consider the superhedging price of $1_{G \in B^n_i}$ and its
associated hedging strategy $H^{\mathbb{F},i}$ in the $(\mathbb{F},
\mathbb{P})$-market (see Corollary 10 in \cite{delbaen_general_1994}), for which
$$\sup_{Z \in ELMMD(\mathbb{F}, \mathbb{P})}\mathbb{E}^{\mathbb{P}}[Z_T1_{G \in B^n_i}] + (H^{\mathbb{F},i} \cdot S)_T \ge 1_{G \in B^n_i}.$$
Therefore,
\begin{equation}\label{eq:A1_price}
\sum_{i=1}^n \sup_{Z \in ELMMD(\mathbb{F}, \mathbb{P})}
\mathbb{E}^{\mathbb{P}}[Z_T1_{G \in B^n_i}]1_{G \in B^n_i} + \left(\left(
\sum_{i = 1}^n H^{\mathbb{F},i}1_{G\in B^n_i} \right)  \cdot S\right)_T \ge
1_{G \in B}.
\end{equation}
Because the set of all local martingale densities $\{ Z_T: Z \in
ELMMD(\mathbb{F},\mathbb{P}) \}$ is uniformly integrable, for any
$\varepsilon > 0$ there exists $K > 0$ such that
$$\sup_{Z \in ELMMD(\mathbb{F},\mathbb{P})} \mathbb{E}^{\mathbb{P}}[Z_T 1_{Z_T>K} ] \le \varepsilon.$$
The initial capital in (\ref{eq:A1_price}) can then be estimated by
\begin{align*}
& \sum_{i=1}^n \sup_{Z \in ELMMD(\mathbb{F}, \mathbb{P})} \left(
\mathbb{E}^{\mathbb{P}}
[Z_T1_{Z > K}1_{G \in B^n_i}] + \mathbb{E}^{\mathbb{P}}[Z_T1_{Z_T \le K}1_{G\in B^n_i}] \right) 1_{G \in B^n_i}\\
&\le \sum_{i=1}^n \left( \varepsilon +  K \mathbb{P}[G \in B^n_i]
\right)1_{G \in B^n_i}= \sum_{i=1}^n \left( \varepsilon +  K
\frac{c}{n} \right)1_{G \in B^n_i}.
\end{align*}
We can choose $\varepsilon$ and $n$ such that the initial capital in
(\ref{eq:A1_price}) is arbitrarily small and thus the random
variable $1_{G \in B}$ is an arbitrage of the first kind.
\end{proof}
\begin{rem}[A comparison with \cite{amendinger_additional_1998}]
{The above proposition extends Theorem 4.4 of
\cite{amendinger_additional_1998}. More precisely,
\cite{amendinger_additional_1998} show that the insider's additional
expected logarithmic utility up to time $T$ becomes infinite (which
implies that NUPBR fails). However, their results apply only to
continuous processes and require an even stronger condition than
market completeness, namely that the inverse of $p^G$ may be
represented as a stochastic integral, see condition (45) therein. By
our result, we are able to construct unbounded profits in general
market settings. In particular, the following example shows that the
property of uniform integrability may also hold for some incomplete
market models.}
\end{rem}

\begin{ex}[Incomplete market with uniformly integrable set of
    equivalent martingale densities]
We consider a risky asset whose (discounted) price evolves as
$$dS_t = S_{t-} \sigma(t) (\theta dN^1_t + (1-\theta) dN^2_t - dt)$$
where $\theta \in (0,1)$ and $\sigma(t)$ is a 
continuous function which is not constant on $[0,T]$. The filtration $\mathbb{F}$ is generated by the
two independent standard Poisson processes $N^1$ and $N^2$. Any
martingale density has the form \eqref{ex:poisson_density_F}
where now $\alpha^1$ and $\alpha^2$ are positive predictable
integrable processes
satisfying $ $ $\theta \alpha^1_t + (1-\theta)\alpha^2_t = 1,
\mathbb{P}-a.s.$ Therefore,
$$0 \le \alpha^1 \le \frac{1}{\theta}, \qquad 0 \le \alpha^2 \le \frac{1}{1-\theta}. $$
These inequalities lead to an upper bound for all martingale
densities:
$$Z_T = e^{- \int_0^T {(\alpha^1_t + \alpha^2_t -2)dt} } \prod_{i=1}^{N^1_T} \alpha^1_{t_i}
\prod_{j=1}^{N^2_T} \alpha^2_{t_j} \le e^{2T}
\frac{1}{\theta^{N^1_T}} \frac{1}{(1-\theta)^{N^2_T}}.$$
 As a result, the set of martingale
densities is uniformly integrable. Let $(T^1_i)_{i \ge 1}$ and $
(T^2_j)_{j \ge 1}$ be the jump times of $N^1$ and $N^2$
respectively.  Because $\sigma$ is a continuous function, the random
variables  $\sigma(T^1_i), \sigma(T^2_j)$ are continuous  and not constant. This means
that the random variable
$$G = S_T = \exp\left( - \int\limits_0^T {\sigma(s)ds}  \right)
\prod_{i = 1}^{N^1_T} \left( 1+ \theta \sigma(T^1_i) \right)
\prod_{j = 1}^{N^2_T} \left( 1+ (1-\theta) \sigma(T^2_j) \right) $$
is a non-atomic random variable for which the market of the insider
does not satisfy NUPBR.
\end{ex}

In the literature there are examples of incomplete market models where
NUPBR holds and therefore the set of equivalent martingale densities
is not uniformly integrable. The following one is due to \cite{kohatsu-higa_insider_2011}.
\begin{ex}[Incomplete market where NUPBR holds but NFLVR
  fails]
The market model in this example consists of a risk-free asset paying
zero interest and a single risky asset driven by a L\'evy process with two-sided jumps. The public information $\mathbb{F}$ is the natural
filtration generated by a Brownian motion $W$ and two independent
Poisson processes $N^1, N^2$ with common intensity $\lambda = 1$. The risky
asset is $S_t = \exp(M_t)$ where $M_t = W_t + N^1_t - N^2_t $ is a
$\mathbb{F}$-martingale. The dynamics of $S$ under $\mathbb{F}$ is
\begin{align*}
dS_t = S_{t-} \left( dW_t + \frac{1}{2} dt + (e - 1)dN^1_t + (e^{-1}
- 1)dN^2_t\right). \end{align*} Let $H^{\mathbb{F}}$ be an
$\mathbb{F}$- self-financing strategy and denote by
$\pi^{\mathbb{F}}$ the fraction of wealth invested in the stock. The
associated wealth is $V^{v,\pi^{\mathbb{F}}}$ satisfying
$$ \frac{dV^{v,\pi^{\mathbb{F}}}_t}{V^{v,\pi^{\mathbb{F}}}_{t-}} = \pi^{\mathbb{F}}_t \frac{dS_t}{S_{t-}}, \qquad V^{v,\pi^{\mathbb{F}}}_0 = v.$$
The strategy $\pi^{\mathbb{F}}$ is admissible if for all $t \in
[0,T]$ we have that $V_t^{v,\pi^{\mathbb{F}}} \ge 0,
\mathbb{P}-a.s.$
The insider has
the additional information given by the final value of $S$ so that
$\mathcal{G}_t = \mathcal{F}_t \vee \sigma(S_T) = \mathcal{F}_t \vee
\sigma(M_T)$. 
In \cite{kohatsu-higa_insider_2011} it is shown that all admissible strategies are bounded and the expected logarithmic
utility for an insider is bounded from above.
This entails that the $(\mathbb{G},
\mathbb{P})$-market satisfies NUPBR (see Proposition \ref{L-NUPBR}). Furthermore, the insider has arbitrage
opportunities since he knows the final value of $S$. For example, if
he knows that $S_T > 1$, which happens with positive probability, he
could buy the asset $S$ and hold it until maturity; being $S_0=1$,
this implies a riskless profit.
\end{ex}
\subsection{An approximation procedure} \label{section:approx_scheme}

Assume that we are  given a non purely atomic random variable $G\in\R$
representing the information of the insider. Let $\{ \Gamma^n_i, i =
1,...,n\}$ be a finite increasing partition of $\mathbb{R}^+$ and
denote $$\sigma(G^n) = \sigma \left( \{G \in \Gamma^n_i \}, i =
1,...,n\right).$$ We approximate $\sigma(G)$ by the increasing
sequence of sigma algebras $\sigma(G^n)$
$$\sigma(G^n) \subset \sigma (G^{n+1}) \subset ... \sigma(G) = 
\sigma\left( \bigcup_{n \ge 1} \sigma(G^n) \right)$$
and define an increasing sequence of filtrations $\mathbb{G}^n =
(\mathcal{G}^n_t)_{t \in [0,T]}$, where $\mathcal{G}^n_t =
\bigcap_{\varepsilon > 0} \mathcal{F}_{t+ \varepsilon} \vee
\sigma(G^n)$. For each $n$, we shall
use the results from Section \ref{S2} to compute the expected utility under $\mathbb{G}^n$, the information at
level $n$. The convergence properties given below in this section
will then enable us to obtain the corresponding
results under $\mathbb{G}$.

First we have to make sure that for each $\mathbb{G}^n-$market we can indeed 
apply the results from Section \ref{S2}
which are shown under Assumption \ref{assum:density_jump}. Since in the $\mathbb{G}^n-$market the
insider's information is given by
$\sigma\left(\{G\in\Gamma_i^n\},\,i=1,\cdots,n\right)$, we need to
show that for all integers $n$ and $i$ the quantities
$\frac{\mathbb{P}[G\in\Gamma_i^n\mid\F_t]}{\mathbb{P}[G\in\Gamma_i^n]}$
do not jump to zero.
This is the object of the following assumption.

\begin{ass}\label{ass:ab}
For every $a < b$, the
$\mathbb{F}$-martingale
$$p^{(a,b)}_t(\omega) :=\frac{\mathbb{P}[G \in (a,b)|\mathcal{F}_t](\omega)}{\mathbb{P}[G \in (a,b)]}
= \frac{1}{\mathbb{P}[G \in (a,b)]} \int\limits_a^b
{p^x_t(\omega)\mathbb{P}[G \in dx]}$$ does not jump to zero.
\end{ass}

{The main result of this subsection is now the following convergence result
\begin{prop}\label{P-Conv} For any $\mathbb{G}$-predictable strategy
$H^{\mathbb{G}}$, there exists a sequence of
$(\mathbb{G}^n)$-predictable strategies $(H^{\mathbb{G}^n})_n$ such
that for every $t\in [0,T]$, $H^{\mathbb{G}^n}_t \to H^{\mathbb{G}}_t$
almost surely.
\end{prop}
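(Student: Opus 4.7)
The plan is to build $H^{\mathbb{G}^n}$ explicitly by averaging a Jeulin-type representation of $H^{\mathbb{G}}$ over the partition cells $(\Gamma^n_i)_{i}$ with respect to the unconditional law $\nu$ of $G$, and then to deduce pointwise-in-$t$ almost sure convergence from Levy's upward martingale convergence theorem on $(\mathbb R, \mathcal B(\mathbb R), \nu)$ combined with Jacod's absolute continuity hypothesis (Assumption \ref{assum:jacod}), which is exactly what transfers $\nu$-a.s. statements into $\mathbb P$-a.s. statements after evaluation at $x=G(\omega)$.

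Concretely, by the Jeulin representation invoked already in the proofs of Theorems \ref{T-NUPBR} and \ref{T-SH}, I would write $H^{\mathbb G}_t(\omega) = h_t(\omega, G(\omega))$ for some $\mathcal P(\mathbb F)\otimes \mathcal B(\mathbb R)$-measurable function $h$, and set
$$H^{\mathbb G^n}_t(\omega) := \sum_{i=1}^{n} \bar h^{n,i}_t(\omega)\, \mathbf 1_{\{G(\omega)\in \Gamma^n_i\}},\qquad \bar h^{n,i}_t(\omega) := \frac{1}{\nu(\Gamma^n_i)}\int_{\Gamma^n_i} h_t(\omega,y)\,\nu(dy),$$
assuming without loss of generality $\nu(\Gamma^n_i)>0$ for every $i$. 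A Fubini-type measurability argument ensures that each $\bar h^{n,i}$ is $\mathbb F$-predictable, so that $H^{\mathbb G^n}$ is manifestly $\mathbb G^n$-predictable. Now fix $t\in[0,T)$. For those $\omega$ with $h_t(\omega,\cdot)\in L^1(\nu)$, the function $x\mapsto \sum_i \bar h^{n,i}_t(\omega)\mathbf 1_{\Gamma^n_i}(x)$ is precisely the $\nu$-conditional expectation of $h_t(\omega,\cdot)$ relative to the $\sigma$-algebras $\sigma(\{\Gamma^n_i\}_i)\uparrow \mathcal B(\mathbb R)$ (the refining property is built into the scheme since $\sigma(G^n)\uparrow \sigma(G)$), so Levy's theorem delivers $\nu$-a.e. convergence to $h_t(\omega,\cdot)$. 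Letting $N\in\mathcal F_{t-}\otimes \mathcal B$ denote the exceptional set, one has $\nu(N_\omega)=0$ for $\mathbb P$-a.e. $\omega$, and Assumption \ref{assum:jacod} yields
$$\mathbb P\{H^{\mathbb G^n}_t\not\to H^{\mathbb G}_t\}=\mathbb P\{(\omega,G(\omega))\in N\}=\mathbb E\!\left[\int_{N_\omega} p^x_{t-}(\omega)\,\nu(dx)\right]=0,$$
which is the desired conclusion for this $t$.

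The main technical obstacle is the integrability requirement $h_t(\omega,\cdot)\in L^1(\nu)$, which is not automatic for an arbitrary predictable strategy. I would circumvent it by truncation: first run the above argument on $h^M:=((-M)\vee h)\wedge M$ to obtain $\mathbb P$-a.s. convergence $H^{\mathbb G^n,M}_t\to H^{\mathbb G,M}_t$ for each fixed $M$ and $t$, then define $H^{\mathbb G^n}:=H^{\mathbb G^n,M_n}$ for a truncation level $M_n\uparrow\infty$ growing slowly enough that the cellwise averages of $h^{M_n}$ still converge to $h$ in probability, and finally extract a further subsequence in $n$ to recover almost sure convergence. The boundary case $t=T$ causes no trouble, since predictable processes are sampled at left limits and Jacod's hypothesis applies to $\nu_{t-}$ for all $t\le T$; the integral against $p^x_{t-}$ in the display above is therefore always well defined.
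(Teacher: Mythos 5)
Your proof takes a genuinely different route from the paper's. The paper approximates $H^{\mathbb{G}}$ first by c\`agl\`ad processes, then by left-continuous step processes $Y_t=\sum_i y_{T_i}\mathbf{1}_{]T_i,T_{i+1}]}(t)$ on a deterministic time grid, and finally projects each $y_{T_i}$ onto $\mathcal{G}^n_{T_i}$ via conditional expectation, invoking L\'evy's upward theorem on $\Omega$ along the increasing $\sigma$-algebras $\mathcal{G}^n_{T_i}\uparrow\mathcal{G}_{T_i}$. You instead work through the Jeulin parameterization $h_t(\omega,x)$, replace $h_t(\omega,\cdot)$ by its $\nu$-conditional expectation over the partition $\sigma$-algebra $\sigma(\Gamma^n)$, run L\'evy upward on $(\mathbb{R},\mathcal{B},\nu)$, and transfer $\nu$-a.e.\ convergence to $\mathbb{P}$-a.s.\ convergence via Jacod's hypothesis. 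This is an interesting alternative and the core mechanism (Jacod transfer) is exactly the right tool for $t<T$; it also yields a very explicit approximant.

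However, there are two genuine gaps. First, the boundary case $t=T$: your claim that ``Jacod's hypothesis applies to $\nu_{t-}$ for all $t\le T$'' is not correct. Assumption~\ref{assum:jacod} only gives $\nu_t\ll\nu$ for $t<T$, and $\nu_{T-}$ need not be absolutely continuous with respect to $\nu$ --- indeed when $\mathcal{F}_{T-}=\mathcal{F}_T$ (e.g.\ a continuous filtration) one has $\nu_{T-}=\delta_G$, which is singular. A concrete failure: on $\Omega=[0,1]$ with $G(\omega)=\omega$ and $\nu=\mathrm{Leb}$, take the Jeulin representative $h_T(\omega,x)=\mathbf{1}_{\{x\ge\omega\}}$; then $H^{\mathbb{G}}_T\equiv 1$ but the cell averages $\bar h^{n,i}_T(\omega)=\frac{1}{\nu(\Gamma^n_i)}\int_{\Gamma^n_i}\mathbf{1}_{\{y\ge\omega\}}\nu(dy)$ evaluated at $x=\omega\in\Gamma^n_i$ give a fraction oscillating in $(0,1)$ that does not converge. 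The exceptional set for L\'evy's theorem is exactly the diagonal, which is $\nu$-null in each slice but is hit by $(\omega,G(\omega))$ with probability one. The paper's proof is immune to this because the step-function structure samples only $\mathcal{G}^n_{T_{i}}$ at times $T_i<T$, never at $T$ itself, and applies L\'evy on $\Omega$ directly rather than on $(\mathbb{R},\nu)$. Your argument would need either to choose a better Jeulin representative at $T$ (which you do not do), or simply to discard $t=T$ (harmless for $\int_0^T H\,dS$, but then the proposition should be restated).

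Second, the truncation step is not fully worked out. You need $h_t(\omega,\cdot)\in L^1(\nu)$ to define $\bar h^{n,i}_t$, which is not implied by $H^{\mathbb{G}}$ being a strategy. You propose to truncate at level $M_n\uparrow\infty$ ``growing slowly enough,'' but the rate at which $H^{\mathbb{G}^n,M}_t\to H^{\mathbb{G},M}_t$ (for fixed $M$, as $n\to\infty$) depends on $t$, so the diagonal $M_n$ cannot in general be chosen uniformly in $t$; and the difference $|\bar h^{n,i,M'}_t-\bar h^{n,i,M}_t|$ between truncation levels is not controllable without some integrability of $h_t(\omega,\cdot)$, which is precisely what is missing. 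This gap is less severe than the first --- it disappears if one proves the proposition only for bounded $H^{\mathbb{G}}$, which is what Theorem~\ref{thm:log_converge} actually requires, and the paper's own reduction to bounded c\`agl\`ad approximants is along those lines --- but the diagonalization as written does not close.
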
 
\begin{proof}
By definition of the predictable $\sigma$-field, we can find a
sequence of c\`agl\`ad $\mathbb{G}$-adapted processes
$(H^{\mathbb{G},n})_n$ such
that for every $t\in [0,T]$, $H^{\mathbb{G},n}_t \to H^{\mathbb{G}}_t$
almost surely. Moreover, without loss of generality, the processes
$(H^{\mathbb{G},n})_n$ may be assumed to be bounded. 

Let $\{T^n_i\}_{0\leq i \leq n}^{n\geq 1}$ be a
sequence of (deterministic) partitions of the interval $[0,T]$ such that for every
$n$, 
$$
0 = T^n_0 < T^n_1 < \dots < T^n_n = T
$$
and $\lim_{n\to \infty}\max_{0\leq i \leq n-1}|T^n_{i+1}-T^n_i| =
0$. For a c\`agl\`ad $\mathbb{G}$-adapted
process $X$, define the process $X^n$ by $X^n_0 = X_0$ and 
$$
X^n_t =  \sum_{i=0}^{n-1} X_{T_i}1_{]T_i,T_{i+1}]}(t),
$$
for $0<t\leq T$. Then,
$X^n$ is $\mathbb G$-adapted and 
for every $t\in [0,T]$, $X^n_t \to X_t$ almost surely. 

Finally, let $Y$ be a process of the form
$$
Y_t = \sum_{i=0}^{k} y_{T_i}1_{]T_i,T_{i+1}]}(t),
$$
where, for each $i$, $y_{T_i}$ is a bounded $\mathcal G_{T_i}$
  measurable random variable, and $T_i$ is a deterministic time such
  that $0 = T_0< T_1< \dots < T_n=T$. 
We define
$$Y^{\mathbb{G}^n}_t: = \sum_{i=1}^{k} \mathbb{E}[y_{T_{i}}|\mathcal{G}^n_{T_i}] 1_{]T_i, T_{i+1}]} (t).$$
The process $Y^{\mathbb{G}^{n}}$ is $\mathbb{G}^n$-predictable.
Using L\'evy's ``Upward'' Theorem (see Theorem II.50.3 of \cite{rogers1979diffusion}), we
obtain $\mathbb{E}[y_{T_{i}}|\mathcal{G}^n_{T_i}] \to y_{T_{i}},
\mathbb{P}-a.s.$, which means that for every $t\in [0,T]$, $
Y^{\mathbb{G}^n}_t \to Y_t$ almost surely. Combining the three
approximations described above gives the proof of the proposition.

\end{proof}}

\subsection{Expected utility maximization}\label{section:NUPBR_continuous_G}

Analogously to subsection \ref{S2.2} we consider now the expected
utility maximization and a dual representation of the optimal
value for the case of a non purely atomic $G$. Based on Proposition
\ref{L-NUPBR}, this will then allow one to
show NUPBR under Assumption \ref{assum:density_jump} by using the
finiteness of expected utility.

We start with the main convergence result
\begin{thm}\label{thm:log_converge}
Let $U:\mathbb R\to \mathbb R_+$ be increasing and continuous
and suppose that $S$ is a continuous semimartingale.
Then,
$$\lim_{n \to \infty} \sup_{H \in \mathcal{A}^{\mathbb{G}^n}_1}\mathbb{E}^{\mathbb{P}}
[U( V^{1, H}_T)] = \sup_{H \in
\mathcal{A}^{\mathbb{G}}_1}\mathbb{E}^{\mathbb{P}} [U( V^{1,
H}_T)].$$
\end{thm}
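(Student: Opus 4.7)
The inequality $\leq$ is immediate: since $\mathbb{G}^n\subset \mathbb{G}$ one has $\mathcal{A}^{\mathbb{G}^n}_1\subset \mathcal{A}^{\mathbb{G}}_1$, so the sequence on the left is non-decreasing in $n$ and bounded above by the right-hand side, hence the limit exists and does not exceed it. For the reverse inequality, my plan is to fix an arbitrary $H^{\mathbb{G}}\in \mathcal{A}^{\mathbb{G}}_1$, construct a sequence $\tilde H^{\mathbb{G},n}\in \mathcal{A}^{\mathbb{G}^n}_1$ whose terminal wealths converge in probability to $V^{1,H^{\mathbb{G}}}_T$, and conclude via Fatou's lemma (applicable because $U\ge 0$) combined with continuity of $U$.

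To build the approximating sequence I would first truncate $H^{\mathbb{G}}$ to a bounded $\mathbb{G}$-predictable process $H^{\mathbb{G},K}:=H^{\mathbb{G}}\mathbf{1}_{|H^{\mathbb{G}}|\le K}$, which is recovered as $K\to\infty$. For each fixed $K$, Proposition \ref{P-Conv} (together with the boundedness noted in its proof) yields bounded $\mathbb{G}^n$-predictable strategies $H^{\mathbb{G},n,K}$ converging pointwise a.s.\ to $H^{\mathbb{G},K}$. Since $S$ is continuous, the dominated convergence theorem for stochastic integrals (e.g.\ Protter, Theorem IV.32) then gives $(H^{\mathbb{G},n,K}\cdot S)\to (H^{\mathbb{G},K}\cdot S)$ uniformly in $t$ in probability (ucp). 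To enforce $1$-admissibility of the approximants, I would replace $H^{\mathbb{G}}$ from the outset by $(1-\varepsilon)H^{\mathbb{G}}$, which is $(1-\varepsilon)$-admissible with strict slack $((1-\varepsilon)H^{\mathbb{G}}\cdot S)\ge -(1-\varepsilon)>-1$, define the $\mathbb{G}^n$-stopping times $\tau_n:=\inf\{t:((1-\varepsilon)H^{\mathbb{G},n,K}\cdot S)_t<-1\}\wedge T$, and set $\tilde H^{\mathbb{G},n,K,\varepsilon}:=(1-\varepsilon)H^{\mathbb{G},n,K}\mathbf{1}_{[0,\tau_n]}\in\mathcal{A}^{\mathbb{G}^n}_1$. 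The ucp convergence together with the strict slack forces $\tau_n\to T$ in probability, so that $V^{1,\tilde H^{\mathbb{G},n,K,\varepsilon}}_T\to V^{1,(1-\varepsilon)H^{\mathbb{G},K}}_T$ in probability. Fatou's lemma and continuity of $U$ then give $\liminf_n \mathbb{E}^{\mathbb{P}}[U(V^{1,\tilde H^{\mathbb{G},n,K,\varepsilon}}_T)]\ge \mathbb{E}^{\mathbb{P}}[U(V^{1,(1-\varepsilon)H^{\mathbb{G},K}}_T)]$, and a diagonal passage $K\to\infty$, $\varepsilon\to 0$ (choosing $K(n)$ and $\varepsilon(n)$ along subsequences) yields $\liminf_n \sup_{H\in\mathcal{A}^{\mathbb{G}^n}_1}\mathbb{E}^{\mathbb{P}}[U(V^{1,H}_T)]\ge \mathbb{E}^{\mathbb{P}}[U(V^{1,H^{\mathbb{G}}}_T)]$. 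Taking the supremum over $H^{\mathbb{G}}\in\mathcal{A}^{\mathbb{G}}_1$ completes the argument.

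The main obstacle is the interplay between approximation and admissibility: pointwise convergence of the integrands does not automatically preserve the uniform lower bound $-1$ on the wealth process, which is why the continuity of $S$ (needed for ucp convergence via DCT for stochastic integrals) and the perturbation by $1-\varepsilon$ (producing strict slack that is preserved under small ucp perturbations) are both essential. A secondary issue is justifying the double limit $K\to\infty$, $\varepsilon\to 0$ after $n\to\infty$; this is handled via a diagonal extraction, using monotone convergence in $K$ and continuity of $U$ in $\varepsilon$ together with $U\ge 0$.
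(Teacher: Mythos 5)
Your proposal is correct and follows essentially the same route as the paper's proof: the easy inclusion for $\leq$, truncation to bounded strategies, Proposition \ref{P-Conv} for $\mathbb G^n$-approximants, dominated convergence for stochastic integrals to get ucp convergence, stopping times combined with an $\varepsilon$-slack to restore $1$-admissibility, and Fatou's lemma throughout. The only cosmetic difference is organizational: the paper first reduces, within $\mathbb G$ alone, the supremum to bounded strategies (a self-contained step), and then applies the $\mathbb G^n$-approximation to each bounded strategy, whereas you interleave the truncation level $K$, the scaling $\varepsilon$, and the approximation level $n$ into a single multi-parameter limit — slightly more bookkeeping, but since the left-hand side is constant in $K$ and $\varepsilon$ the limits can be taken sequentially and the argument goes through.
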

\begin{proof}
The inequality $\leq$ is trivial since $U$ is increasing and $\mathcal{A}^{\mathbb{G}^n}_1 \subset\mathcal{A}^{\mathbb{G}}_1$.
To prove the opposite inequality, we choose $H^{\mathbb{G}}\in
\mathcal{A}^{\mathbb{G}}_1$.
Our first aim is to show that
$H^{\mathbb G}$ may be supposed to be bounded. Indeed, the stochastic
integral $H^\mathbb G \cdot S$ is defined as the $\mathcal H^2$-limit
of $H \mathbf 1_{|H|\leq n} \cdot S = H^+ \mathbf 1_{|H|\leq n} \cdot
S -  H^- \mathbf 1_{|H|\leq n} \cdot
S$, see \cite[Section IV.2]{protter_stochastic_2003}, and by the dominated convergence theorem for stochastic integrals
\cite[Theorem IV.32]{protter_stochastic_2003},
this limit is attained uniformly in compacts in probability, and hence
also almost surely along a subsequence. Introduce the stopping time
$$
\tau^n_\varepsilon = \inf\{t>0: 1+\varepsilon + \int_0^t H \mathbf
1_{|H|\leq n}  dS\leq 0 \}\wedge T. 
$$
Observe that 
$$
1+\varepsilon + \int_0^{t\wedge \tau^n_\varepsilon} H \mathbf
1_{|H|\leq n}  dS\geq 0,\quad t\in [0,T],
$$
which means that the strategy defined by $\frac{1}{1+\varepsilon} H\mathbf
1_{|H|\leq n} \mathbf 1_{t\leq \tau^n_\varepsilon}$ belongs to
$\mathcal A^{\mathbb G}_1$. 
The uniform convergence of the stochastic integrals implies that
$\mathbb P[\tau^n_\varepsilon = T] \to 1$ as $n\to \infty$, and also
$$
1+ \int_0^T \frac{1}{1+\varepsilon} H\mathbf
1_{|H|\leq n} \mathbf 1_{t\leq \tau^n_\varepsilon} dS \to
1+\frac{1}{1+\varepsilon}\int_0^T H_t dS_t
$$
as $n\to \infty$. By Fatou's lemma we then have
$$
\mathbb E\left[U\left(1+\frac{1}{1+\varepsilon}\int_0^T H_t
    dS_t\right)\right]\leq \liminf_n \mathbb E\left[U\left(1+ \int_0^T \frac{1}{1+\varepsilon} H\mathbf
1_{|H|\leq n} \mathbf 1_{t\leq \tau^n_\varepsilon} dS\right)\right],
$$
and another application of Fatou's lemma shows that 
$$
\mathbb E\left[U\left(1+\int_0^T H_t
    dS_t\right)\right]\leq \liminf_{\varepsilon\downarrow 0}\mathbb E\left[U\left(1+\frac{1}{1+\varepsilon}\int_0^T H_t
    dS_t\right)\right].
$$
This argument shows that
$$
\sup_{H \in
\mathcal{A}^{\mathbb{G}}_1}\mathbb{E}^{\mathbb{P}} [U( V^{1,
H}_T)] = \sup_{H \in
\overline{\mathcal{A}}^{\mathbb{G}}_1}\mathbb{E}^{\mathbb{P}} [U( V^{1,
H}_T)],
$$
where $\overline{\mathcal{A}}^{\mathbb{G}}_1$ denotes the strategies
in $\mathcal{A}^{\mathbb{G}}_1$ which are bounded by a deterministic
constant. Therefore, from now on we may (and will) assume $H$ to
be bounded by a constant $C$.

Now let $(H_n)$ be a sequence of strategies approximating $H$ in the
sense of Proposition \ref{P-Conv}, which may be assumed to be bounded
by the same constant $C$. Once again, by the dominated convergence
theorem for stochastic integrals, we show that $H_n\cdot S$ converges
to $H\cdot S$ uniformly on compacts in probability, and hence also
almost surely along a subsequence. Similarly to the previous part, we
construct an admissible strategy from $H_n$ by stopping it at a
suitable stopping time. An application of Fatou's lemma then shows
that 
$$
\liminf_{n \to \infty} \sup_{H \in \mathcal{A}^{\mathbb{G}^n}_1}\mathbb{E}^{\mathbb{P}}
[U( V^{1, H}_T)] \geq \sup_{H \in
\mathcal{A}^{\mathbb{G}}_1}\mathbb{E}^{\mathbb{P}} [U( V^{1,
H}_T)].
$$
Observing that the expression under the limit in the left hand side is
increasing in $n$ and combining this with the opposite inequality, we
conclude the proof. 
\end{proof}

\begin{rem}
Extending the result of Theorem \ref{thm:log_converge} to discontinuous processes seems to be a
difficult task, see in particular Example 11.2.6 in
\cite{ankirchner_information_2005}. One may, for example, obtain such
an extension under the following condition which is not easy to verify
in practice:
\begin{itemize}
\item[]For every
  $\varepsilon>0$ there exists a strategy $H\in \mathcal A^{\mathbb
    G}_1$ with 
$$
\mathbb{E}^{\mathbb{P}} [U( V^{1,
H}_T)] \geq \sup_{H \in
\mathcal{A}^{\mathbb{G}}_1}\mathbb{E}^{\mathbb{P}} [U( V^{1,
H}_T)] - \varepsilon,
$$ 
and, for a number $n\geq 1$, a sequence of $\mathbb
  G^n$-predictable integrable processes $(H^m)$ as well as a sequence of $\mathbb
  G^n$-stopping times $(\tau_m)$ with $\mathbb P[\tau_m = T]\to 1$,
  such that for every $m\geq 1$,
$$
\left|\int_0^t (H^m_t - H_t) dS_t\right| \leq \varepsilon
$$ 
for all $t\leq \tau_m$. 
\end{itemize}
\end{rem}

As a corollary to Theorems \ref{thm:log_converge} and
\ref{thm:gen-dual}, we obtain a dual
representation for the utility maximization problem in the case of
non-atomic $G$. 
\begin{thm}
Let $S$ be a continuous semimartingale, suppose that Assumptions
\ref{basic-assum} and \ref{ass:ab} hold true, and assume that 
 \begin{itemize}
\item[(i)] The function $U:(0,\infty)\to \mathbb R$ is strictly
  concave, increasing, continuously differentiable and satisfies the Inada conditions at
  $0$ and $\infty$.  
\item[(ii)] For every $y\in (0,\infty)$, there exists $Z
  \in 
  ELMMD(\bF,\bP)$ with $\mathbb E^{\mathbb P}[V(yZ)]<\infty$, where
  $V(y) =  \sup_x (U(x)-xy) $.
 \end{itemize}
Then, 
$$
\sup_{H\in \mathcal A^{\bG}_1} \mathbb E^{\bP} [U(V^{1,H}_T)] =
\lim_{n\to \infty}\sum_{i}
\inf_{y>0}\left\{y+
\inf_{Z \in ELMMD(\mathbb{F},\mathbb{P})} \mathbb{E}^{\mathbb{P}}
\left[ V\left(yZ_T \right)1_{G \in \Gamma^n_i}\right]\right\}.
$$
\end{thm}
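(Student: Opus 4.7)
The plan is to combine the dual representation in the discrete case (Theorem \ref{thm:gen-dual}) with the primal convergence result (Theorem \ref{thm:log_converge}) along the approximating sequence $(\mathbb G^n)$. For each $n$, the insider's information in the $(\mathbb G^n,\mathbb P)$-market is carried by the discrete random variable $G^n$ whose atoms are the events $\{G\in\Gamma^n_i\}_{i=1,\dots,n}$. The atomic version of Assumption \ref{assum:density_jump} for $G^n$ asks precisely that for each $i$ the martingale $p^{\Gamma^n_i}_t = \mathbb P[G\in\Gamma^n_i|\mathcal F_t]/\mathbb P[G\in\Gamma^n_i]$ does not jump to zero, which is exactly the content of Assumption \ref{ass:ab}. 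The NFLVR assumption on $\mathbb F$ and hypotheses (i)--(ii) on $U$ and $V$ are unchanged, so Theorem \ref{thm:gen-dual} applies verbatim in the $(\mathbb G^n,\mathbb P)$-market and produces
$$
\sup_{H\in\mathcal A^{\mathbb G^n}_1}\mathbb E^{\bP}[U(V^{1,H}_T)] = \sum_{i=1}^n\inf_{y>0}\left\{y+\inf_{Z\in ELMMD(\mathbb F,\mathbb P)}\mathbb E^{\bP}[V(yZ_T)1_{G\in\Gamma^n_i}]\right\}.
$$

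Next I would send $n\to\infty$ on both sides. Since $\mathcal A^{\mathbb G^n}_1\subset\mathcal A^{\mathbb G^{n+1}}_1\subset\mathcal A^{\bG}_1$, the primal values are monotone nondecreasing in $n$, so the right-hand side in the statement is a monotone limit, exists in $[0,+\infty]$, and is finite by hypothesis (ii) applied to any $n$. To identify the limit with $\sup_{H\in\mathcal A^{\bG}_1}\mathbb E^{\bP}[U(V^{1,H}_T)]$ I would invoke Theorem \ref{thm:log_converge}, which provides exactly the required primal convergence along the approximating filtrations under the continuous-semimartingale assumption on $S$.

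The only genuine obstacle is a hypothesis mismatch in the invocation of Theorem \ref{thm:log_converge}: it is stated for $U:\mathbb R\to\mathbb R_+$ continuous and increasing, whereas here $U:(0,\infty)\to\mathbb R$ may well be unbounded from below at $0$ (think of $U=\log$). I would circumvent this via a monotone truncation $U_K(x)=\max(U(x),-K)$: the shifted utility $U_K+K$ is nonnegative, continuous, increasing, extends trivially by $U_K(0+)+K$ for $x\leq 0$, and is therefore amenable to Theorem \ref{thm:log_converge} at each fixed $K$. Letting $K\to\infty$ on both sides then recovers the statement for $U$ by dominated convergence, the integrable majorant being supplied by the Fenchel inequality $U_K(V^{1,H}_T)\leq U^+(V^{1,H}_T)\leq V(yZ_T)+yZ_TV^{1,H}_T$ and the fact that $ZV^{1,H}$ is a nonnegative $\mathbb P$-supermartingale for every $Z\in ELMMD(\mathbb F,\mathbb P)$ and every $1$-admissible $H$. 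This truncation step is the only point where one goes beyond a pure black-box combination of Theorems \ref{thm:gen-dual} and \ref{thm:log_converge}.
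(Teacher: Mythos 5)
You have the right overall architecture, which is indeed the one intended by the paper: apply Theorem \ref{thm:gen-dual} to each discretized market $(\bG^n,\bP)$ (with Assumption \ref{ass:ab} playing the role of Assumption \ref{assum:density_jump} for $G^n$, exactly as you say), and then invoke Theorem \ref{thm:log_converge} to pass the primal values to the limit. The paper presents this theorem as a direct corollary and gives no explicit proof, so your identification of the hypothesis mismatch between the two theorems ($U:\bR\to\bR_+$ in Theorem \ref{thm:log_converge} versus $U:(0,\infty)\to\bR$, possibly unbounded from below, in the present statement) is a genuine and worthwhile observation that the paper glosses over.

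However, the truncation fix as you describe it has a gap. Dominated convergence gives $\mathbb E^\bP[U_K(V^{1,H}_T)]\to\mathbb E^\bP[U(V^{1,H}_T)]$ for each \emph{fixed} $H$, but that does not justify interchanging $\lim_{K\to\infty}$ with $\sup_{H}$; one only gets $\lim_K\sup_H \ge \sup_H\lim_K$, and the equality would require a uniformity over $H$ (essentially uniform integrability of $\{U^-(V^{1,H}_T)\}_H$) that is not established. More to the point, the inequality goes the wrong way for the direction that actually needs work: to show $\lim_n\sup_{H\in\mathcal A^{\bG^n}_1}\ge\sup_{H\in\mathcal A^{\bG}_1}$, you would need, given $H\in\mathcal A^{\bG}_1$, to produce $H'\in\mathcal A^{\bG^n}_1$ with $\mathbb E[U(V^{1,H'}_T)]$ close to $\mathbb E[U(V^{1,H}_T)]$; applying Theorem \ref{thm:log_converge} to $U_K$ produces an $H'$ with $\mathbb E[U_K(V^{1,H'}_T)]$ large, but since $U\le U_K$ this gives only an \emph{upper} bound on $\mathbb E[U(V^{1,H'}_T)]$, not the lower bound you need. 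The right way to close the gap is to adapt the proof of Theorem \ref{thm:log_converge} rather than use it as a black box: replace the stopping time $\tau^n_\varepsilon$ by the first time the stopped wealth drops to, say, $\tfrac{\varepsilon}{2(1+\varepsilon)}>0$ rather than to $0$, so that the stopped wealth is uniformly bounded away from zero and $U$ is bounded below on those paths, which makes Fatou applicable; and replace the use of nonnegativity of $U$ at the final scaling step ($\varepsilon\downarrow 0$) by concavity of $U$, writing $V^{1,H/(1+\varepsilon)}_T=\tfrac{\varepsilon}{1+\varepsilon}\cdot 1+\tfrac{1}{1+\varepsilon}\,V^{1,H}_T$ and using $U\bigl(\tfrac{\varepsilon}{1+\varepsilon}+\tfrac{1}{1+\varepsilon}V^{1,H}_T\bigr)\ge\tfrac{\varepsilon}{1+\varepsilon}U(1)+\tfrac{1}{1+\varepsilon}U(V^{1,H}_T)$. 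Concavity is available from hypothesis (i), so this modification proves the primal convergence under the hypotheses of the present theorem and the rest of your argument then goes through.
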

In the case of logarithmic utility, a more explicit expression may be
obtained. 
\begin{cor}\label{pro:log_dual_cont}
Let $S$ be a continuous semimartingale, suppose that Assumptions
\ref{basic-assum} and \ref{ass:ab} hold true, and assume that $G$ has a continuous density $f(x)$ and a finite
entropy and that there exists $Z\in ELMMD(\mathbb F, \mathbb P)$ with
$\mathbb E^{\mathbb P}[\log Z_T]>-\infty$. 
Then the insider's expected log-utility is
\begin{align}\label{eq:log_dual_cont}
\sup_{H \in \mathcal{A}^{\mathbb{G}}_1} \mathbb{E}^{\mathbb{P}}[\log V^{1, H}_T] &= - \int {f(x)\log f(x) dx} \nonumber  \\
&+ \lim_{n \to \infty}  \sum_{i = 1}^n \left( - \log|\Gamma^n_i|
\mathbb{P}[G \in \Gamma^n_i] + \inf_{Z \in
ELMMD(\mathbb{F},\mathbb{P})} \mathbb{E}^{\mathbb{P}} \left[ 1_{G \in
\Gamma^n_i} \log \frac{1}{Z_T}\right] \right)  .
\end{align}
\end{cor}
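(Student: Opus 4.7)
The plan is to combine the discrete-case result of Corollary \ref{cor:log:dual} applied to the approximating discretizations $G^n$ with the limiting procedure of Theorem \ref{thm:log_converge}, and to recognize the entropy contribution in the limit as the differential entropy of $G$.

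First I would apply Corollary \ref{cor:log:dual} to each random variable $G^n$ associated with the partition $\{\Gamma^n_i\}$. Assumption \ref{ass:ab} guarantees that $p^{(a,b)}$ does not jump to zero on any interval of the partition, which is precisely Assumption \ref{assum:density_jump} for the discrete random variable $G^n$. Together with the hypothesis $\mathbb{E}^{\mathbb P}[\log Z_T] > -\infty$, this yields
\begin{equation*}
\sup_{H \in \mathcal{A}^{\mathbb{G}^n}_1}\mathbb{E}^{\mathbb{P}}[\log V^{1,H}_T] = -\sum_i \mathbb{P}[G\in \Gamma^n_i]\log\mathbb{P}[G\in \Gamma^n_i] + \sum_i \inf_{Z\in ELMMD(\mathbb F,\mathbb P)}\mathbb{E}^{\mathbb{P}}\bigl[1_{G\in \Gamma^n_i}\log(1/Z_T)\bigr].
\end{equation*}

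Next I would algebraically split the discrete entropy as
\begin{equation*}
-\sum_i \mathbb{P}[G\in \Gamma^n_i]\log \mathbb{P}[G\in \Gamma^n_i] = -\sum_i \mathbb{P}[G\in \Gamma^n_i]\log\frac{\mathbb{P}[G\in\Gamma^n_i]}{|\Gamma^n_i|} - \sum_i \log|\Gamma^n_i|\,\mathbb{P}[G\in \Gamma^n_i].
\end{equation*}
The second sum combines with the dual term to form the summand on the right-hand side of \eqref{eq:log_dual_cont}. For the first sum, continuity of $f$ and the mean value theorem give $\mathbb{P}[G\in \Gamma^n_i]/|\Gamma^n_i| = f(x^n_i)$ for some $x^n_i \in \Gamma^n_i$, exhibiting it as a Riemann sum for $-\int f(x)\log f(x)\,dx$. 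Combining the continuity of $f$ with the finite-entropy assumption (and controlling the tails where $f$ is small via uniform integrability of $|f\log f|$ on the refining partitions), this sum converges to $-\int f(x)\log f(x)\,dx$ as the mesh tends to zero.

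Finally, I would pass to the limit $n\to\infty$ on the left-hand side. The main technical obstacle is that $\log$ falls outside the range of utilities covered by Theorem \ref{thm:log_converge}, which assumes $U:\mathbb R\to \mathbb R_+$. I would resolve this either by truncating from below, setting $U_K(x) = \log x \vee (-K)$, applying Theorem \ref{thm:log_converge} to the shifted utility $U_K + K$, and sending $K\to \infty$ using monotone convergence together with the upper bound $\mathbb{E}[\log V^{1,H}_T] \leq -\mathbb{E}[\log Z_T] < \infty$ available via the chosen $Z$; or by reproducing the Fatou-based argument of Theorem \ref{thm:log_converge} directly, where the uniform integrability needed to exchange limit and expectation is furnished by the same bound. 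Since the Riemann sum converges and the left-hand side converges along $\mathbb{G}^n$, the remaining limit in the statement exists and the identity \eqref{eq:log_dual_cont} follows.
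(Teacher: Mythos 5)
Your plan reproduces the paper's proof step for step: apply Corollary \ref{cor:log:dual} to each discretization $G^n$ (with Assumption \ref{ass:ab} supplying Assumption \ref{assum:density_jump} for $G^n$), invoke Theorem \ref{thm:log_converge} to pass to the limit in $n$, then split the discrete entropy via the mean value theorem into a Riemann sum for $-\int f\log f$ and the residual $-\sum_i\log|\Gamma^n_i|\,\mathbb P[G\in\Gamma^n_i]$. That decomposition and the recognition of the entropy term as a Riemann sum are exactly what the paper writes.

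Where you go beyond the paper is in flagging that Theorem \ref{thm:log_converge} is stated for $U:\mathbb R\to\mathbb R_+$, whereas $\log$ is unbounded below, so the theorem cannot be cited directly. The paper's proof is silent on this point, and you are right that something must be said. However, your proposed repair is not quite airtight. The upper bound $\mathbb E[\log V^{1,H}_T]\leq -\mathbb E[\log Z_T]$ that you want to use for the monotone-limit (or Fatou) step follows from the supermartingale property of $Z\,V^{1,H}$ when $H$ is $\mathbb F$-predictable, but for $H\in\mathcal A^{\mathbb G}_1$ the process $Z\,V^{1,H}$ need not be a $\mathbb G$-supermartingale, since an $\mathbb F$-local martingale density is not in general a $\mathbb G$-deflator. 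So that bound is not available as stated; some other device, e.g.\ the bound supplied by Corollary \ref{cor:log:dual} itself for the discretized problems together with the finite-entropy hypothesis, would be needed. Also, your truncation route requires interchanging $\lim_{n}$ with $\lim_{K}$; this is not immediate from monotone convergence alone and calls for a justification (Dini-type or a uniform integrability estimate on $(\log V^{1,H}_T)^-$). Your second remark, that the Riemann-sum convergence needs control on the tails, is likewise a real issue the paper passes over, especially since the partitions of $\mathbb R^+$ must contain at least one unbounded cell for each finite $n$, making $\log|\Gamma^n_i|$ problematic there. In short: same route as the paper, with a more honest accounting of two technical gaps, but your fixes for them are sketches rather than complete arguments.
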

\begin{proof}
Theorem \ref{thm:log_converge} and Corollary \ref{cor:log:dual} show
that \beq\label{logdual} \sup_{H \in \mathcal{A}^{\mathbb{G}}_1}
\mathbb{E}^{\mathbb{P}}[\log V^{1, H}_T] = \lim_{n \to \infty}
\sum_{i= 1}^n \left(  - \mathbb{P}[G \in \Gamma^n_i] \log
\mathbb{P}[G \in \Gamma^n_i] + \inf_{Z \in
ELMMD(\mathbb{F},\mathbb{P})} \mathbb{E}^{\mathbb{P}} \left[ 1_{G \in
\Gamma^n_i} \log \frac{1}{Z_T}\right] \right) \eeq
Now we consider
the first term in the right hand side. Using the mean value theorem,
we have that $\mathbb{P}[G \in \Gamma^n_i] = f(x^n_i)|\Gamma^n_i|$
for some $x^n_i \in \Gamma^n_i$. Thus,
\begin{align*}
- \mathbb{P}[G \in \Gamma^n_i] \log \mathbb{P}[G \in \Gamma^n_i] &= - \mathbb{P}[G \in \Gamma^n_i] \log (f(x^n_i)|\Gamma^n_i|)\\
 & = - f(x^n_i)\log f(x^n_i)  |\Gamma^n_i| - \mathbb{P}[G \in \Gamma^n_i] \log|\Gamma^n_i|.
\end{align*}
Letting $n$ tend to infinity, we get the result.
\end{proof}
As a consequence, the insider's log-utility problem is finite if $G$
has finite entropy and for every event $\{ G \in \Gamma^n_i \}$,
there exists a martingale density $Z_T$ such that the quantity
$\mathbb{E}^{\mathbb{P}}[1_{G \in \Gamma^n_i}\log(1/Z_T)] $ can
compensate the term $-\log|\Gamma^n_i|\mathbb{P}[G \in \Gamma^n_i]$.
In complete markets, it is impossible to find such a martingale
density for each event, implying that expected log-utility of the
insider is infinite. In incomplete markets, the result provides us
with a new criterion for NUPBR under $\mathbb{G}$ as stated in the
following
\begin{cor}\label{cor:crit1}
Under Assumption \ref{assum:density_jump}, if there exists a
constant $C<\infty$ such that for all $a$ and all $\varepsilon>0$
small enough,
\begin{align}\label{crit1}
\sup_{Z \in ELMMD}\mathbb E[1_{G \in (a,a+\varepsilon) }\log Z_T ]
\geq &-\mathbb P[G \in (a,a+\varepsilon)] \nonumber
\log \mathbb P[G \in (a,a+\varepsilon)] \\
&- C\mathbb P[G \in (a,a+\varepsilon)]
\end{align}
then the condition NUPBR holds under $\mathbb{G}$.
\end{cor}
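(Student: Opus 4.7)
The plan is to show that the hypothesis forces the insider's expected logarithmic utility to be finite, and then to deduce NUPBR from Proposition \ref{L-NUPBR}. Since Corollary \ref{pro:log_dual_cont} provides an explicit dual expression for $\sup_{H \in \A^{\bG}_1}\mathbb{E}^{\bP}[\log V^{1,H}_T]$, the idea is to substitute the bound coming from \eqref{crit1} termwise in the partition sum.

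First, because $\log(1/Z_T) = -\log Z_T$, the hypothesis rewrites as
\begin{equation*}
\inf_{Z\in ELMMD(\bF,\bP)} \mathbb{E}^{\bP}[1_{G \in (a, a + \varepsilon)} \log(1/Z_T)] \leq \mathbb{P}[G \in (a, a+\varepsilon)]\log \mathbb{P}[G \in (a, a+\varepsilon)] + C\mathbb{P}[G \in (a, a+\varepsilon)],
\end{equation*}
for $\varepsilon$ small enough. Applying this to a large interval covering the bulk of the support of $G$ also yields some $Z \in ELMMD(\bF,\bP)$ with $\mathbb{E}^{\bP}[\log Z_T] > -\infty$, so that Corollary \ref{pro:log_dual_cont} applies. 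Substituting this inequality into each summand of \eqref{eq:log_dual_cont}, with $\Gamma^n_i$ being the intervals of the refining partitions of Subsection \ref{section:approx_scheme}, and cancelling the contribution $-\log|\Gamma^n_i|\mathbb{P}[G\in\Gamma^n_i]$ against the first part of the bound, each summand becomes at most
\begin{equation*}
\mathbb{P}[G\in\Gamma^n_i]\log\frac{\mathbb{P}[G\in\Gamma^n_i]}{|\Gamma^n_i|} + C\mathbb{P}[G\in\Gamma^n_i].
\end{equation*}

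By continuity of the density $f$ of $G$, the ratio $\mathbb{P}[G\in\Gamma^n_i]/|\Gamma^n_i|$ equals $f(\xi^n_i)$ for some $\xi^n_i \in \Gamma^n_i$, so $\sum_i \mathbb{P}[G\in\Gamma^n_i]\log(\mathbb{P}[G\in\Gamma^n_i]/|\Gamma^n_i|)$ is a Riemann sum converging to $\int f(x)\log f(x)\, dx$ as the mesh of the partitions vanishes, while the remaining piece sums to $C$. Plugging this back into Corollary \ref{pro:log_dual_cont}, the differential entropy terms cancel and we obtain
\begin{equation*}
\sup_{H \in \A^{\bG}_1}\mathbb{E}^{\bP}[\log V^{1,H}_T] \leq C < \infty.
\end{equation*}
Since the logarithm is strictly concave, increasing and satisfies $U(+\infty)=+\infty$, Proposition \ref{L-NUPBR} then yields NUPBR for the $(\bG, \bP)$-market. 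The main technical subtlety is the justification of the Riemann sum convergence for $\int f\log f\,dx$ along the partitions of Subsection \ref{section:approx_scheme}: this follows from continuity of $f$ together with the finite differential entropy of $G$, by splitting $f\log f$ into its positive and negative parts and invoking dominated convergence against an integrable envelope.
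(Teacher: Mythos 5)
Your overall strategy is exactly the one the paper uses: bound the insider's expected log-utility using \eqref{crit1}, then invoke Proposition \ref{L-NUPBR}. However, you take an unnecessary and restrictive detour. You substitute \eqref{crit1} into the dual formula \eqref{eq:log_dual_cont} of Corollary \ref{pro:log_dual_cont}, which requires $G$ to have a continuous density $f$ with finite differential entropy — hypotheses that Corollary \ref{cor:crit1} does not assume. The paper instead works with the pre-cursor identity \eqref{logdual} appearing inside the proof of Corollary \ref{pro:log_dual_cont}, namely
\begin{equation*}
\sup_{H \in \mathcal{A}^{\mathbb{G}}_1}\mathbb{E}^{\mathbb{P}}[\log V^{1, H}_T]
= \lim_{n \to \infty} \sum_{i= 1}^n \left( -\mathbb{P}[G \in \Gamma^n_i]\log\mathbb{P}[G \in \Gamma^n_i]
+ \inf_{Z} \mathbb{E}^{\mathbb{P}}\left[ 1_{G \in \Gamma^n_i} \log\tfrac{1}{Z_T}\right] \right),
\end{equation*}
which comes directly from Theorem \ref{thm:log_converge} and Corollary \ref{cor:log:dual} and does not involve any density. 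With this formula, the cancellation against the bound from \eqref{crit1} is \emph{exact} and purely algebraic: each summand is at most $C\,\mathbb{P}[G\in\Gamma^n_i]$, so the sum is at most $C$. Your version requires first rewriting $-\mathbb{P}\log\mathbb{P}$ into $-\mathbb{P}\log|\Gamma^n_i| - \mathbb{P}\log(\mathbb{P}/|\Gamma^n_i|)$, then recognizing a Riemann sum converging to $\int f\log f$, then cancelling against the separate $-\int f\log f$ term — extra machinery that delivers the same number $C$ only under stronger hypotheses.

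A secondary concern: you claim that applying \eqref{crit1} to a large interval yields a $Z$ with $\mathbb{E}^{\mathbb{P}}[\log Z_T] > -\infty$. This does not follow: \eqref{crit1} is hypothesized only for $\varepsilon$ small, and even if it held for large intervals it would control $\mathbb{E}[1_{G\in(a,a+\varepsilon)}\log Z_T]$, not $\mathbb{E}[\log Z_T]$ over the whole space. The existence of such a $Z$ must be treated as an implicit standing hypothesis (it is required for \eqref{logdual} to be valid via Corollary \ref{cor:log:dual}), not something derivable from \eqref{crit1}.
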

\begin{proof} Consider (\ref{logdual}) for partitions of the form
$\Gamma^n_i=(a_i,a_i+\varepsilon_n)$ with $\varepsilon_n\downarrow
0$. Using then $(\ref{crit1})$ in (\ref{logdual}) one obtains
\begin{align*}
\sup_{H \in \mathcal{A}^{\mathbb{G}}_1} \mathbb{E}^{\mathbb{P}}[\log
V^{1, H}_T] &= \lim_{n \to \infty} \sum_{i= 1}^n \left(  -
\mathbb{P}[G \in \Gamma^n_i] \log \mathbb{P}[G \in \Gamma^n_i] + \inf_{Z \in
ELMMD(\mathbb{F},\mathbb{P})}
\mathbb{E}^{\mathbb{P}} \left[ 1_{\{G \in \Gamma^n_i\}} \log \frac{1}{Z_T}\right]  \right)\\
& \le \lim_{n \to \infty} \sum_{i= 1}^n C \mathbb{P}[G \in \Gamma^n_i] =
C.
\end{align*}
The expected log-utility of the insider is bounded and hence, by
Proposition \ref{L-NUPBR}, the condition NUPBR holds under
$\mathbb{G}$.
\end{proof}

\section*{Compliance with Ethical Standards}
The authors declare that they do not have any conflicts of interest
in relation to the present work. 

\section*{Appendix}

\begin{lem}\label{lem:ratio_dis_exp}
Assume that $X, Y$ are two independent exponential random variables
with parameters $\alpha, \beta$, respectively. Then the random
variable $Z = \frac{\alpha X}{\beta Y}$ has density $1/(1+ z)^2.$
\end{lem}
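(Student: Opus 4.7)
The plan is to reduce the problem to the ratio of two independent standard exponential variables and then compute its distribution function directly. First, I would set $U := \alpha X$ and $V := \beta Y$. By the scaling property of the exponential distribution, $U$ and $V$ are each $\text{Exp}(1)$, and since $(X,Y)$ are independent so are $(U,V)$. This eliminates the parameters and reduces the task to computing the law of the ratio $Z = U/V$ of two independent rate-one exponentials.

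Next, I would compute the distribution function $F_Z(z) = \mathbb P(U \le z V)$ for $z>0$ by a double integral, justifying the exchange of order of integration by Tonelli's theorem since the integrand is non-negative:
$$F_Z(z) = \int_0^\infty\!\!\int_0^{zv} e^{-u}e^{-v}\,du\,dv = \int_0^\infty (1-e^{-zv})e^{-v}\,dv = 1 - \frac{1}{1+z}.$$
Differentiating with respect to $z$ then yields the density $f_Z(z) = 1/(1+z)^2$ for $z>0$, which is the claim.

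There is no real obstacle here: the argument is a one-paragraph computation. The only point requiring minimal care is the initial normalization step (verifying that $\alpha X \sim \text{Exp}(1)$ when $X \sim \text{Exp}(\alpha)$), which is an immediate change of variables in the density. Alternatively, one could skip the normalization and compute $\mathbb P(\alpha X \le z \beta Y)$ directly; the cancellation of $\alpha$ and $\beta$ inside the iterated integral produces the same answer, reflecting the fact that $Z$ is a ratio scaled precisely so as to be distribution-free in $(\alpha,\beta)$.
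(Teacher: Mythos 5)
Your proof is correct and follows essentially the same route as the paper: compute the cumulative distribution function of $Z$ via an iterated integral over the joint exponential density and differentiate. The only cosmetic difference is that you normalize to rate-one exponentials first and integrate in the opposite order, whereas the paper works directly with $\alpha,\beta$ and integrates out $Y$ first; both yield $F_Z(z)=z/(1+z)$ and hence the stated density.
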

\begin{proof}
For $z > 0$. we compute the cumulative distribution of $Z$
\begin{align*}
\mathbb{P}[Z \le z] &= \mathbb{P}\left[   Y \ge \frac{\alpha X}{\beta z}\right]
= \int\limits_0^{\infty} {\left( \int\limits_{(\alpha x) / (\beta z)}^{\infty} {\beta e^{-\beta y}dy} \right) \alpha e^{-\alpha x}dx}\\
&= \int\limits_0^{\infty} { e^{\frac{-\alpha x}{z}} \alpha
e^{-\alpha x}dx}= \frac{z}{1 + z}.
\end{align*}
The density of $Z$ is obtained by taking derivative of the
cumulative distribution of $Z$ with respect to $z$.
\end{proof}
\begin{defn}[Optional projection - Definition 5.2.1 of \cite{jeanblanc_mathematical_2009}]
Let $X$ be a bounded (or positive) process, and $\mathbb{F}$ a given
filtration. The optional projection of $X$ is the unique optional
process ${}^{o}X$ which satisfies 
$$
\mathbb{E}[X_{\tau}1_{\tau <\infty}] ={}^{o}X_{\tau}1_{\tau < \infty}
$$
almost surely for any
$\mathbb{F}$-stopping time $\tau.$
\end{defn}
The following result helps us to find the compensator of a process
when passing to smaller filtrations.
\begin{lem}\label{lem:compesator_small_filtration}
Let $\mathbb{G}, \mathbb{H}$ be filtrations such that $\mathcal{G}_t
\subset \mathcal{H}_t,$ for all $t \in [0,T]$. Suppose that the
process $M_t := X_t - \int\limits_0^t {\lambda_udu}$ is a
$\mathbb{H}$-martingale, where $\lambda \ge 0$. Then the process
$M^G_t :=X_t - \int\limits_0^t {^{o}\lambda_u du}$ is a
$\mathbb{G}$-martingale, where $^{o}\lambda$ is the optional
projection of $\lambda$ onto $\mathbb{G}$.
\end{lem}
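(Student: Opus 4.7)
}
The plan is to check the $\mathbb{G}$-martingale property of $M^G$ directly, by reducing the increments of $X$ via the $\mathbb H$-martingale property of $M$, and then using a characterization of the optional projection at deterministic times. Throughout I will assume (as is implicit in the statement and satisfied in the paper's application to the Poisson processes $N^1,N^2$) that $X$ is $\mathbb{G}$-adapted and that $\lambda$ is integrable on $[0,T]$, so that ${}^{o}\lambda$ is well defined and $M^G$ is $\mathbb{G}$-adapted.

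\emph{Step 1: reducing the increments of $X$.} Fix $0\le s<t\le T$ and $A\in\mathcal{G}_s$. Since $\mathcal{G}_s\subset\mathcal{H}_s$, we have $A\in\mathcal{H}_s$, so the $\mathbb{H}$-martingale property of $M$ gives
$$
\mathbb{E}\bigl[\mathbf{1}_A(X_t-X_s)\bigr]
=\mathbb{E}\Bigl[\mathbf{1}_A\int_s^t\lambda_u\,du\Bigr].
$$
By Tonelli (since $\lambda\ge 0$), the right-hand side equals $\int_s^t\mathbb{E}[\mathbf{1}_A\lambda_u]\,du$.

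\emph{Step 2: identifying ${}^o\lambda$ at a deterministic time.} A standard property of optional projections (apply the defining identity to the constant $\mathbb{G}$-stopping time $\tau\equiv u$) is that for every fixed $u\in[0,T]$,
$$
{}^o\lambda_u=\mathbb{E}[\lambda_u\mid\mathcal{G}_u]\qquad \mathbb{P}\text{-a.s.}
$$
Therefore, for $A\in\mathcal{G}_s\subset\mathcal{G}_u$,
$$
\mathbb{E}[\mathbf{1}_A\lambda_u]=\mathbb{E}\bigl[\mathbf{1}_A\mathbb{E}[\lambda_u\mid\mathcal{G}_u]\bigr]=\mathbb{E}[\mathbf{1}_A\,{}^o\lambda_u].
$$

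\emph{Step 3: conclusion.} Integrating the previous identity in $u$ over $[s,t]$ and applying Tonelli once more,
$$
\mathbb{E}\Bigl[\mathbf{1}_A\int_s^t\lambda_u\,du\Bigr]
=\mathbb{E}\Bigl[\mathbf{1}_A\int_s^t{}^o\lambda_u\,du\Bigr].
$$
Combining with Step 1 yields $\mathbb{E}[\mathbf{1}_A(M^G_t-M^G_s)]=0$ for every $A\in\mathcal{G}_s$, which is exactly the $\mathbb{G}$-martingale property of $M^G$.

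\emph{Main obstacle.} The only nontrivial input is the identity ${}^o\lambda_u=\mathbb{E}[\lambda_u\mid\mathcal{G}_u]$ a.s.\ at a deterministic time $u$; everything else is Fubini/tower. One should also double-check measurability and integrability so that ${}^o\lambda$ is jointly measurable and Lebesgue-integrable in $u$ a.s., which is automatic from the general existence theorem for the optional projection of a nonnegative measurable process, together with the assumed integrability of $\int_0^T\lambda_u\,du$.
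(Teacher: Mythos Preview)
Your proof is correct and follows essentially the same route as the paper: both verify the martingale property by testing against $\mathcal{G}_s$-measurable random variables (you use indicators $\mathbf{1}_A$, the paper uses bounded $H$), invoke Fubini/Tonelli to interchange expectation and the time integral, and rely on the identity ${}^o\lambda_u=\mathbb{E}[\lambda_u\mid\mathcal{G}_u]$ at deterministic times. The only cosmetic difference is the order in which the steps are written.
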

\begin{proof}
Since $\lambda_u \ge 0,$ the optional projection $^{o}\lambda$
exists and for fixed $u$, it holds that $^{o}\lambda_u =
\mathbb{E}[\lambda_u| \mathcal{G}_u]$ almost surely. If $0\le s <t$
and $H$ is bounded and $\mathcal{G}_s$-measurable, then, by Fubini's
Theorem
\begin{align*}
\mathbb{E}[H(M^G_t - M^G_s)] &= \mathbb{E}[H(X_t - X_s)] - \int\limits_s^t {\mathbb{E}[H\mathbb{E}[\lambda_u|\mathcal{G}_u]]du}\\
&= \mathbb{E}[H(X_t - X_s)] - \int\limits_s^t {\mathbb{E}[H\lambda_u]du}\\
&= \mathbb{E}[H(M_t - M_s)] = 0.\end{align*} Hence $M^G$ is a
$\mathbb{G}$-martingale.
\end{proof}


\begin{thebibliography}{10}

\bibitem{acciaio_arbitrage_2014}
{\sc B.~Acciaio, C.~Fontana, and C.~Kardaras}, {\em Arbitrage of the first kind
  and filtration enlargements in semimartingale financial models}, Stochastic
  Processes and their Applications, 126 (2016), pp.~1761--1784.

\bibitem{aksamit2015optional}
{\sc A.~Aksamit, T.~Choulli, and M.~Jeanblanc}, {\em On an optional
  semimartingale decomposition and the existence of a deflator in an enlarged
  filtration}, in In Memoriam Marc Yor-S{\'e}minaire de Probabilit{\'e}s XLVII,
  Springer, 2015, pp.~187--218.

\bibitem{amendinger_monetary_2003}
{\sc J.~Amendinger, D.~Becherer, and M.~Schweizer}, {\em A monetary value for
  initial information in portfolio optimization}, Finance and Stochastics, 7
  (2003), pp.~29--46.

\bibitem{amendinger_additional_1998}
{\sc J.~Amendinger, P.~Imkeller, and M.~Schweizer}, {\em Additional logarithmic
  utility of an insider}, Stochastic Processes and their Applications, 75
  (1998), pp.~263--286.

\bibitem{ankirchner_information_2005}
{\sc S.~Ankirchner}, {\em Information and semimartingales}, PhD thesis,
  Humboldt Universität Berlin, 2005.

\bibitem{ankirchner_shannon_2006}
{\sc S.~Ankirchner, S.~Dereich, and P.~Imkeller}, {\em The {Shannon}
  information of filtrations and the additional logarithmic utility of
  insiders}, The Annals of Probability, 34 (2006), pp.~743--778.

\bibitem{ankirchner_initial_2011}
{\sc S.~Ankirchner and J.~Zwierz}, {\em Initial enlargement of filtrations and
  entropy of {Poisson} compensators}, Journal of Theoretical Probability, 24
  (2011), pp.~93--117.

\bibitem{baldeaux2013liability}
{\sc J.~Baldeaux and E.~Platen}, {\em Liability driven investments under a
  benchmark based approach}.
\newblock preprint, 2013.

\bibitem{bremaud_point_1981}
{\sc P.~Bremaud}, {\em Point processes and queues: martingale dynamics},
  Springer, 1981.

\bibitem{chau_market_2015}
{\sc H.~N. Chau and P.~Tankov}, {\em Market models with optimal arbitrage},
  SIAM Journal on Financial Mathematics, 6 (2015), pp.~66–--85.

\bibitem{chau_thesis}
{\sc N.~H. Chau}, {\em A Study of Arbitrage Opportunities in Financial Markets
  without Martingale Measures}, PhD thesis, Universit\`a degli Studi di Padova
  / Universit\'e Paris Diderot - Paris 7, 2016.

\bibitem{danilova2010optimal}
{\sc A.~Danilova, M.~Monoyios, and A.~Ng}, {\em Optimal investment with inside
  information and parameter uncertainty}, Mathematics and Financial Economics,
  3 (2010), pp.~13--38.

\bibitem{delbaen_general_1994}
{\sc F.~Delbaen and W.~Schachermayer}, {\em A general version of the
  fundamental theorem of asset pricing}, Mathematische Annalen, 300 (1994),
  pp.~463--520.

\bibitem{fernholz_optimal_2010}
{\sc D.~Fernholz and I.~Karatzas}, {\em On optimal arbitrage}, The Annals of
  Applied Probability, 20 (2010), pp.~1179–--1204.

\bibitem{fontana_weak_2013}
{\sc C.~Fontana}, {\em Weak and strong no-arbitrage conditions for continuous
  financial markets}, International Journal of Theoretical and Applied Finance,
  18 (2015), p.~1550005.

\bibitem{grorud_insider_1998}
{\sc A.~Grorud and M.~Pontier}, {\em Insider trading in a continuous time
  market model}, International Journal of Theoretical and Applied Finance, 1
  (1998), pp.~331--347.

\bibitem{imkeller_free_2001}
{\sc P.~Imkeller, M.~Pontier, and F.~Weisz}, {\em Free lunch and arbitrage
  possibilities in a financial market model with an insider}, Stochastic
  Processes and their Applications, 92 (2001), pp.~103--130.

\bibitem{jacod_grossissement_1985}
{\sc J.~Jacod}, {\em Grossissement initial, hypoth{\`e}se ({H}$'$) et
  th{\'e}or{\`e}me de {G}irsanov}, in Grossissements de filtrations: exemples
  et applications, Springer, 1985, pp.~15--35.

\bibitem{jeanblanc_mathematical_2009}
{\sc M.~Jeanblanc, M.~Yor, and M.~Chesney}, {\em Mathematical Methods for
  Financial Markets}, Springer, 2009.

\bibitem{jeulin1980semi}
{\sc T.~Jeulin}, {\em Semi-Martingales et Grossissement d’une Filtration},
  vol.~833 of Lecture Notes in Mathematics, Springer, 1980.

\bibitem{kabanov2015no}
{\sc Y.~Kabanov, C.~Kardaras, and S.~Song}, {\em No arbitrage and local
  martingale deflators}, arXiv preprint arXiv:1501.04363,  (2015).

\bibitem{kabanov1997ftap}
{\sc Y.~M. Kabanov}, {\em On the {FTAP} of {Kreps-Delbaen-Schachermayer}}, in
  Statistics and Control of Random Processes. The Liptser Festschrift, World
  Scientific, 1997, pp.~191--203.

\bibitem{kabanov1994large}
{\sc Y.~M. Kabanov and D.~O. Kramkov}, {\em Large financial markets: asymptotic
  arbitrage and contiguity}, Theory of Probability and Its Applications, 39
  (1994), pp.~182--187.

\bibitem{karatzas_numeraire_2007}
{\sc I.~Karatzas and C.~Kardaras}, {\em The num\'eraire portfolio in
  semimartingale financial models}, Finance and Stochastics, 11 (2007),
  pp.~447–--493.

\bibitem{kardaras_market_2012}
{\sc C.~Kardaras}, {\em Market viability via absence of arbitrage of the first
  kind}, Finance and Stochastics, 16 (2012), pp.~651–--667.

\bibitem{kardaras_strict_2011}
{\sc C.~Kardaras, D.~Kreher, and A.~Nikeghbali}, {\em Strict local martingales
  and bubbles}, The Annals of Applied Probability, 25 (2015), pp.~1827--1867.

\bibitem{kohatsu-higa_insider_2011}
{\sc A.~Kohatsu-Higa and M.~Yamazato}, {\em Insider {Models} with {Finite}
  {Utility} in {Markets} with {Jumps}}, Applied Mathematics \& Optimization, 64
  (2011), pp.~217--255.

\bibitem{kramkov_asymptotic_1999}
{\sc D.~Kramkov and W.~Schachermayer}, {\em The asymptotic elasticity of
  utility functions and optimal investment in incomplete markets}, Annals of
  Applied Probability,  (1999), pp.~904--–950.

\bibitem{levental_necessary_1995}
{\sc S.~Levental and A.~V. Skorohod}, {\em A necessary and sufficient condition
  for absence of arbitrage with tame portfolios}, The Annals of Applied
  Probability,  (1995), pp.~906–--925.

\bibitem{meyer1978theoreme}
{\sc P.-A. Meyer}, {\em Sur un th{\'e}oreme de {J}. {J}acod}, S{\'e}minaire de
  probabilit{\'e}s de Strasbourg, 12 (1978), pp.~57--60.

\bibitem{mostovyi2015necessary}
{\sc O.~Mostovyi}, {\em Necessary and sufficient conditions in the problem of
  optimal investment with intermediate consumption}, Finance and Stochastics,
  19 (2015), pp.~135--159.

\bibitem{pikovsky_anticipative_1996}
{\sc I.~Pikovsky and I.~Karatzas}, {\em Anticipative portfolio optimization},
  Advances in Applied Probability, 28 (1996), pp.~1095--1122.

\bibitem{protter_stochastic_2003}
{\sc P.~Protter}, {\em Stochastic Integration and Differential Equations},
  Springer, 2nd~ed., 2003.

\bibitem{rogers1979diffusion}
{\sc L.~C.~G. Rogers and D.~Williams}, {\em {D}iffusions, {M}arkov processes
  and martingales, volume {I}}, 1979.

\bibitem{ruf_systematic_2013}
{\sc J.~Ruf and W.~Runggaldier}, {\em A systematic approach to constructing
  market models with arbitrage}, {Arbitrage, Credit and Informational Risks.
  Proceedings of the Sino-French Research Program in Financial Mathematics
  Conference, Beijing},  (2013).

\bibitem{song_alternative_2013}
{\sc S.~Song}, {\em An alternative proof of a result of {T}akaoka},
  {arXiv:1306.1062},  (2013).

\bibitem{takaoka_condition_2014}
{\sc K.~Takaoka and M.~Schweizer}, {\em A note on the condition of no unbounded
  profit with bounded risk}, Finance and Stochastics, 18 (2014), pp.~393--405.

\end{thebibliography}
\end{document}